\DeclareMathAlphabet{\mathfrak}{U}{euf}{m}{n}
\SetMathAlphabet{\mathfrak}{bold}{U}{euf}{b}{n}
\theoremstyle{definition}
\newtheorem{defn}[equation]{Definition}
\newtheorem{assump}[equation]{Assumption}
\theoremstyle{plain}
\newtheorem{thm}[equation]{Theorem}
\newtheorem{prp}[equation]{Proposition}
\newtheorem{lem}[equation]{Lemma}
\newtheorem{cor}[equation]{Corollary}
\theoremstyle{remark}
\newtheorem{remk}[equation]{Remark}
\newtheorem{exmp}[equation]{Example}
\newtheorem*{remk*}{Remark}
\newcommand{\ma}[1]{\begin{align*} #1 \end{align*}}
\newcommand{\maa}[1]{\begin{align} #1 \end{align}}
\newcommand{\real}{\mathbb R}
\newcommand{\comp}{\mathbb C}
\newcommand{\zahl}{\mathbb Z}
\newcommand{\quot}{\mathbb Q}
\newcommand{\nat}{\mathbb N}
\DeclareMathOperator*{\Hom}{Hom}
\DeclareMathOperator*{\Aut}{Aut}
\newcommand{\id}{\mathrm{id}}
\DeclareMathOperator*{\rank}{rank}
\DeclareMathOperator*{\Ker}{Ker}
\DeclareMathOperator*{\supp}{supp}
\DeclareMathOperator{\hotimes}{\hat{\otimes}}
\DeclareMathOperator{\Irr}{Irr}
\newcommand{\ebk}[1]{\left<  #1 \right>}
\newcommand{\ssbk}[1]{\left\| #1 \right\|}
\newcommand{\pmx}[1]{\begin{pmatrix} #1 \end{pmatrix}}%
\newcommand{\Cliff}{\mathbb{C}\ell }
\newcommand{\pt}{{\rm pt}}
\DeclareMathOperator{\Tr}{Tr}
\newcommand{\xra}{\xrightarrow}
\renewcommand{\PrintDOI}[1]{%
  \href{http://dx.doi.org/#1}{{\tt DOI:#1}}%
  \IfEmptyBibField{volume}{, (to appear in print)}{}%
}
\renewcommand{\eprint}[1]{#1}
\def\namedlabel#1#2{\begingroup
    #2%
    \def\@currentlabel{#2}%
    \phantomsection\label{#1}\endgroup
}
\let\latexchi\chi
\renewcommand\chi{\@ifnextchar_\sub@chi\latexchi}
\newcommand{\sub@chi}[2]{% #1 is _, #2 is the subscript
  \@ifnextchar^{\subsup@chi{#2}}{\latexchi^{}_{#2}}%
}
\newcommand{\subsup@chi}[3]{% #1 is the subscript, #2 is ^, #3 is the superscript
  \latexchi_{#1}^{#3}%
}
\newcommand{\Euc}{\mathbb{E}}
\newcommand{\lu}[1]{{}^{#1} \hspace{-0.1ex} }
\renewcommand{\blank}{\text{\textvisiblespace}}
\newcommand{\F}{\mathscr{F}}
\newcommand{\K}{\mathrm{K}}
\newcommand{\KR}{\mathrm{KR}}
\newcommand{\KQ}{\mathrm{KQ}}
\newcommand{\KK}{\mathrm{KK}}
\newcommand{\KKR}{\mathrm{KKR}}
\newcommand{\G}{\mathcal{G}}
\newcommand{\Bop}{\mathbb{B}}
\newcommand{\Kop}{\mathbb{K}}
\newcommand{\Mop}{\mathbb{M}}
\newcommand{\Cst}{\mathrm{C}^*}
\newcommand{\Pen}{\mathrm{Pen}}
\newcommand{\Hilb}{\mathscr{H}}
\newcommand{\vD}{\mathrm{vD}}
\newcommand{\Zt}{\mathbb{Z}_2}
\newcommand{\Cl}{\mathrm{C}\ell}
\newcommand{\cone}{\mathrm{cone}}
\let\Im\relax
\DeclareMathOperator{\Im}{\mathrm{Im}}
\DeclareMathOperator{\Ind}{Ind}
\DeclareMathOperator{\Res}{Res}
\newcommand{\ev}{\mathrm{ev}}
\DeclareMathOperator{\bind}{ind_{\mathit{b}}}
\DeclareMathOperator{\eind}{ind_{\mathit{e}}}
\newcommand{\TP}{\mathscr{TP}}
\newcommand{\lInd}{\ell ^2\mathchar`-\mathrm{Ind}}
\newcommand{\KL}{\mathrm{KL}}
\newcommand{\KAII}{\mathrm{KA\hspace{-.08em}I\hspace{-.08em}I}}
\newcommand{\KCII}{\mathrm{KC\hspace{-.08em}I\hspace{-.08em}I}}
\newcommand{\KBDI}{\mathrm{KB\hspace{-.08em}D\hspace{-.08em}I}}
\newcommand{\KDIII}{\mathrm{KD\hspace{-.08em}I\hspace{-.08em}I\hspace{-.08em}I}}
\newcommand{\typeA}{{\rm A}}
\newcommand{\typeAI}{{\rm A\hspace{-.08em}I}}
\newcommand{\typeAII}{{\rm A\hspace{-.08em}I\hspace{-.08em}I}}
\newcommand{\typeAIII}{{\rm A\hspace{-.08em}I\hspace{-.08em}I\hspace{-.08em}I}}
\newcommand{\typeC}{{\rm C}}
\newcommand{\typeCI}{{\rm C\hspace{-.08em}I}}
\newcommand{\typeCII}{{\rm C\hspace{-.08em}I\hspace{-.08em}I}}
\newcommand{\typeBDI}{{\rm B\hspace{-.08em}D\hspace{-.08em}I}}
\newcommand{\typeD}{{\rm D}}
\newcommand{\typeDIII}{{\rm D\hspace{-.08em}I\hspace{-.08em}I\hspace{-.08em}I}}
\title[Controlled topological phases and bulk-edge correspondence]{Controlled topological phases and bulk-edge correspondence}
\author[Y. Kubota]{Yosuke Kubota}
\address{Graduate School of Mathematical Science, The University of Tokyo, 3-8-1 Komaba, Meguro-ku, Tokyo 153-8914, Japan}
\email{ykubota@ms.u-tokyo.ac.jp}
\date{9 February, 2016}
\subjclass[2010]{Primary 81R60; Secondary 19L50, 46L85, 81V70.}
\keywords{Topological insulators, bulk-edge correspondence, twisted $\K$-theory, coarse index theory}
\begin{document}
\maketitle
\begin{abstract}
In this paper, we introduce a variation of the notion of topological phase reflecting metric structure of the position space. 
This framework contains not only periodic and non-periodic systems with symmetries in Kitaev's periodic table but also topological crystalline insulators. 
We also define the bulk and edge indices as invariants taking values in the twisted equivariant $\K$-groups of Roe algebras as generalizations of existing invariants such as the Hall conductance or the Kane--Mele $\mathbb{Z}_2$-invariant. 
As a consequence, we obtain a new mathematical proof of the bulk-edge correspondence by using the coarse Mayer-Vietoris exact sequence.\end{abstract}

\tableofcontents

\section{Introduction}
In condensed matter physics, it is known that some physical quantities are related to the topology of the Hamiltonian operator, which gives a reason for them to be quantized and robust. 
A classical example is the $2$-dimensional integer quantum Hall effect (IQHE). 
Here, the Hamiltonian operator with translation symmetry is regarded as a continuous family of self-adjoint operators on the Brillouin torus (the Pontrjagin dual of the translation group $\zahl ^2$) by the Fourier transform. 
If the Hamiltonian has a spectral gap at the Fermi level, the family of spectral subspaces corresponding to the gap forms a vector bundle on the Brillouin torus.
The first Chern numbers of this vector bundle, which classifies vector bundles on the $2$-dimensional torus, is called their bulk index and related to the Hall conductance by the TKNN formula \cite{TKNN}. 
On the other hand, from the viewpoint of the edge channel picture, the Hamiltonian with a spectral gap at the Fermi level is insulated in the bulk and an electric flow occurs at the edge of the material. 
More precisely, the edge Hamiltonian obtained by cutting the bulk Hamiltonian by the projection onto the half-plane gives a continuous family of Toeplitz operators on the $1$-dimensional torus.
Its spectral flow is called the edge index and regarded as the number of edge channels and hence the Hall conductance. 
The \textit{bulk-edge correspondence} ensures that these two Hall conductances obtained from different presentations actually coincide. 
This is first proved by Hatsugai using the topology of a Riemann surface \cite{H1993PRB}.

%%%%%%%%%%%%%%%%%%%%%%%%%%%%%%%%%%%%%%%%%%%%%%%%%%%%%%%%%%%%%%%%%%%%%%%%%%%%%%%

These arguments are understood systematically in terms of $\K$-theory and cyclic cohomology for $\Cst$-algebras, which are basic tools in noncommutative geometry. 
We remark that the Brillouin torus is the same thing as (the Gelfand-Naimark dual of) the group $\Cst$-algebra $C^*_r (\zahl ^2)$. 
Actually, the bulk and edge Hamiltonian operators determine elements in $\K_0(C^*_r(\zahl ^2))$ and $\K_1(C^*_r(\zahl))$ respectively. 
Moreover, the TKNN formula is translated into the language of noncommutative geometry as the pairing between these elements and the fundamental cyclic cocycle by Bellissard~\cite{MR965981}.
In \cite{MR1877916}, a simple proof of the bulk-edge correspondence is given by using the boundary map of the six-term exact sequence of $\K$-groups associated to the Toeplitz exact sequence.

%%%%%%%%%%%%%%%%%%%%%%%%%%%%%%%%%%%%%%%%%%%%%%%%%%%%%%%%%%%%%%%%%%%%%%%%%%%%%%%

A merit of this noncommutative geometric approach is that it can be applied for more general settings. 
Particularly, it is remarkable that it enables us to deal with the case that the translation symmetry is broken by disorder \citelist{\cite{MR862832}\cite{MR1303779}\cite{MR1295473}\cite{MR1877916}\cite{MR2084009}}. 
Here, the study of the topology of quantum systems is divided into two problems: classification of topological phases (homotopy classes of gapped quantum systems) and calculation of the indices, which are related to physical quantities.
Actually, in this case topological phases are no longer classified by the index but by the $\K$-group of the crossed product $\zahl ^d \ltimes C(\Omega)$, which plays the role of the noncommutative Brillouin torus.
Here, $\Omega$ is a certain compactification of $\zahl ^d$ which is regarded as the probabilistic space of random Hamiltonians. 
The bulk index is defined in the same way as the translation-invariant case by using a cyclic cocycle. 
In \cite{MR1877916}, the corresponding edge index is also defined and the bulk-edge correspondence is proved by using the Toeplitz extension for crossed products by $\zahl$. 
As for bulk systems, they are generalized by Bellissard and his collaborators \citelist{\cite{MR1221111}\cite{MR1798994}\cite{MR2009996}} to materials without any translation symmetry on its distribution of atoms such as quasi-crystals. They adopt a certain groupoid $\Cst$-algebra as the noncommutative Brillouin torus.

%%%%%%%%%%%%%%%%%%%%%%%%%%%%%%%%%%%%%%%%%%%%%%%%%%%%%%%%%%%%%%%%%%%%%%%%%%%%%%%

In the recent study of topological insulators and topological superconductors, some of their topological invariants take values in $\Zt$ instead of $\zahl$. 
For example, consider $2$-dimensional spin quantum Hall systems~\cite{PhysRevLett.95.226801}, that is, a quantum system with time-reversal symmetry (an antilinear unitary operator) whose square is $-1$. 
Here, the vector bundle given by a spectral subspace of the bulk Hamiltonian determines an element in the $\KQ ^0$-group of the Brillouin torus with the canonical involution, which is isomorphic to $\zahl \oplus \Zt$. 
The corresponding element in the second component $\Zt$ is called the \emph{Kane--Mele invariant}. 
In this case, the bulk-edge correspondence ensures that this invariant coincides with the parity of helical edge channels. 
It is formulated and proved by Avila--Schulz-Baldes--Villegas-Blas~\cite{MR3063955} and Graf--Porta~\cite{MR3123539} by means of an additional symmetry coming from spin operators and the topology of Riemann surfaces like \cite{H1993PRB} respectively. 
Moreover, a $\Zt$-invariant is also defined for disordered systems in \cite{GrossmannSchulz-Baldes} and \cite{mathph150805485}.

%%%%%%%%%%%%%%%%%%%%%%%%%%%%%%%%%%%%%%%%%%%%%%%%%%%%%%%%%%%%%%%%%%%%%%%%%%%%%%%

More generally, there are three fundamental operators in quantum mechanics: particle-hole symmetry $C$ (odd and antilinear), time-reversal symmetry $T$ (even and antilinear) and chiral (sub-lattice) symmetry $P$ (odd and linear). 
By Altland--Zirnbauer~\cite{AZ1997} and Kitaev~\cite{K2009}, symmetries of quantum mechanics generated by some of these operators are classified in $10$ types, each of which corresponds to one of $2$ complex and $8$ real Clifford algebras. 
For each of them, classifications of topological phases (for both periodic and nonperiodic systems) are given by Thiang~\cite{mathph14067366} and Kellendonk~\cite{mathKT150906271} by using Karoubi's $\K$-theory and van Daele's $\K$-theory respectively.
The notion of (bulk) index for them is introduced by Gro{\ss}mann--Schulz-Baldes~\cite{GrossmannSchulz-Baldes} as the pairing with a $\K$-cycle. 
For the bulk-edge correspondence, there is a $\KK$-theoretic approach by Bourne--Carey--Rennie~\cite{mathKT150907210}.
In addition to them, a more detailed classification for periodic systems is given by using homotopy theory in a series of works by Nittis--Gomi~\citelist{\cite{MR3282332}\cite{MR3366050}\cite{mathph150404863}}.

%%%%%%%%%%%%%%%%%%%%%%%%%%%%%%%%%%%%%%%%%%%%%%%%%%%%%%%%%%%%%%%%%%%%%%%%%%%%%%%

In \cite{MR3119923}, Freed and Moore studied quantum systems with symmetries given by finite extensions of translation groups such as topological crystalline insulators \citelist{\cite{PhysRevLett.106.106802}\cite{PhysRevB.88.075142}}. 
A remarkable feature of their approach is that they developed the theory in an axiomatic way starting from a small number of general principles.
According to Wigner's theorem~\cite{MR0106711} (see also Section 1 of \cite{MR3119923}), a symmetry of quantum mechanics is given by a twisted (i.e.\ even/odd projective unitary/antiunitary) representation of a group on a $\Zt$-graded complex Hilbert space, which are classified by a triplet $(\phi,c,\tau)$ of group cocycles called twists (Definition \ref{defn:tw}).
From this viewpoint, a CT-symmetry (a symmetry of quantum mechanics generated by some of CPT operators) is characterized as a certain twist of a subgroup $\mathscr{A}$ of $\Zt \times \Zt$. 
Actually, the Altland--Zirnbauer classification is reproved in Proposition 6.4 of \cite{MR3119923} in terms of the group cohomology.
They gave a classification of topological phases with the symmetry given by $(G,\phi,c,\tau)$ in terms of twisted equivariant $\K$-theory of groupoids when $G$ is a Galilean symmetry.

%%%%%%%%%%%%%%%%%%%%%%%%%%%%%%%%%%%%%%%%%%%%%%%%%%%%%%%%%%%%%%%%%%%%%%%%%%%%%%%

In this paper, we introduce a new framework of topological phases, which is a relevant generalization of the one introduced in \cite{MR3119923} for systems with an arbitrary symmetry of quantum mechanics such as translations (not necessarily cocompact), rotations and their combinations.

First, for simplicity we describe the case that the system does not have any symmetry.
We focus on the property of tight-binding Hamiltonians that their propagations are controlled at the position space (or distribution of atoms) represented by a discrete metric space $X$. 
Actually, in previous researches such as \cite{MR1295473}, Hamiltonian operators are assumed to be approximated in the norm topology by operators whose propagations (the maximum of $d(x,y)$ such that $H_{x,y}\neq 0$) are finite. 
This assumption corresponds to continuity of its Fourier transform in the case that $H$ is translation invariant. 
We say that such a Hamiltonian is a bulk Hamiltonian \emph{controlled at $X$} and two Hamiltonians are in the same \emph{controlled topological phase} if they are homotopic (more precisely, see Definition \ref{defn:bulkTP}).

The norm closure $C^*_u(X)$ of the algebra of bounded operators with finite propagation on $\ell ^2(X)$ is called the uniform Roe algebra of $X$. 
This $\Cst$-algebra is similar to the noncommutative Brillouin torus. 
Actually, when $X=\zahl ^d$ as a metric space, $C^*_u(X)$ is isomorphic to the crossed product $\zahl ^d \ltimes C(\beta X)$ where $\beta X$ is the Stone--\v{C}ech compactification of $X$ (Example \ref{exmp:groupoid}). 
Now, the set of controlled topological phases is nothing but $\K_0(C^*_u(X))$ as in the noncommutative geometric approach for disordered systems (Proposition \ref{prp:bulk}).

The uniform Roe algebra of a metric space is one of central subjects in coarse geometry, a part of geometry studying ``large-scale'' properties of metric spaces and identifying two spaces that look the same from a large distance such as the Euclidean space and its lattice. 
Actually, the set of topological phases is flexible with respect to deformation of distribution of atoms. 
More precisely, it is independent of the choice of discrete subsets of the Euclidean space which is distributed uniformly (i.e.\ Delone sets, Definition \ref{defn:Delone}).

Unfortunately, the $\K$-group of the uniform Roe algebra is in general highly complicated and often has uncountably many basis. 
On the other hand, $\K$-theory of another $\Cst$-algebra called the Roe algebra, which is originally introduced by Roe~\citelist{\cite{MR918459}\cite{MR1147350}} for the purpose of studying index theory of non-compact manifolds, is easy to handle and widely studied.
A useful tool is the (equivariant) coarse Baum--Connes map~\citelist{\cite{MR1388312}\cite{MR1451759}\cite{MR2565716}}, which gives an isomorphism between a certain $\K$-homology groups and $\K$-groups of Roe algebras (note that there is another use of the Baum-Connes isomorphism in connection with the T-duality studied in \citelist{\cite{1751-8121-48-42-42FT02}\cite{hep-th150505250}\cite{hep-th150604492}\cite{hep-th151004785}}).
For example, when $X$ is a lattice of $\Euc ^d$, the groups $\K _*(C^*(X))$ are isomorphic to $\K^{d-*}(\pt )$.
Here, we consider the canonical homomorphism of $\K$-groups induced from the inclusion $C^*_u(X) \subset C^*(X)$.
Roughly speaking, it gives a rough classification of topological phases, which captures the primary part of the topology of quantum systems (see Remark \ref{remk:conc}). 
Actually, it coincides with the Hall conductance defined by the pairing with the fundamental cyclic cocycle.
Therefore, we call it the \emph{bulk index}.

%%%%%%%%%%%%%%%%%%%%%%%%%%%%%%%%%%%%%%%%%%%%%%%%%%%%%%%%%%%%%%%%%%%%%%%%%%

Another powerful tool for the study of the $\K$-theory of (uniform) Roe algebras is the coarse Mayer--Vietoris exact sequence \cite{MR1219916}. 
Fortunately, the boundary map of the coarse Mayer--Vietoris exact sequence for a partition $X=Y_+ \cup Y_-$ provides a good explanation of the bulk-edge correspondence: it maps a bulk Hamiltonian $H$ to the corresponding edge Hamiltonian $\hat{H}$ obtained by cutting $H$ by the support projection onto $Y_1$.
Following this, we give a definition of edge quantum systems controlled at a pair $Z \subset Y$ of metric spaces (corresponding to the edge and the bulk of a material) in Definition \ref{defn:edgeTP}. 
The set of controlled edge topological phases is actually isomorphic to (a subgroup of) the $\K_{-1}$-group of the relative uniform Roe algebra $C^*_u(Z \subset Y)$ (Definition \ref{defn:Roe}). 
The edge index is also defined in the same way as the bulk case. 
Finally, the bulk-edge correspondence is a consequence of the fact that the Mayer--Vietoris boundary map is an isomorphism for good partitions.
This formulation is more intuitive than the previous approach using Toeplitz extensions for $\zahl$-crossed products because the edge algebra $\zahl ^{d-1} \ltimes C(\Omega)$ is a subalgebra of infinite direct product of $C^*_u(Z)$ (see Corollary \ref{cor:Toep} and its proof).

The same arguments work for the study of quantum systems with symmetries of quantum mechanics.
We deal with the situation that a discrete group $G$ acts in the position space properly and isometrically. 
Under an assumption on twists (Assumption \ref{assump:ext}), we show that the bulk and edge topological phases are classified by a modification of the twisted equivariant $\K$-theory of invariant uniform Roe algebras (Definition \ref{defn:Roe}).
Here, we use the twisted equivariant $\K$-group for $\Cst$-algebras defined in \cite{Kubota1} as a generalization of van Daele's $\K$-theory.
The bulk and edge indices take values in a modification of the twisted equivariant $\K$-group of the invariant Roe algebras, which are calculated by the equivariant coarse Baum-Connes isomorphism (Theorem \ref{thm:cBC}).

As long as considering about CT-symmetries, the twisted equivariant $\K$-group is the same thing as $DK$-groups in \cite{mathKT150906271} or the description of $\KR$-groups by Boersema--Loring~\cite{mathOA150403284}. 
Moreover, our bulk index is the same thing as the index pairing in \cite{GrossmannSchulz-Baldes} (Remark \ref{remk:CTvsR}).
In particular, for type A and {\typeAII} topological insulators, it coincides with the Hall-conductance and the Kane-Mele invariant respectively (Example \ref{exmp:2dIQHE} and Example \ref{exmp:2dSQHE}). 
Even in this case, our approach gives a generalization of the bulk-edge correspondence for materials whose distribution of atoms have no translation symmetry such as quasi-crystals.
In particular, we obtain the bulk-edge correspondence for systems whose edge is not straight, which is studied from a different point of view for type A and {\typeAII} insulators by Prodan~\citelist{\cite{MR2554445}\cite{MR2525473}}.

Moreover, in this framework we can deal with topological crystalline insulators with symmetry-preserving disorder. 
For example, when we consider the group $\mathscr{A} \times \Zt$ where $\Zt$ acts on $\Euc^d$ as a reflection, we obtain the same classification list as indicated in \citelist{\cite{PhysRevB.88.075142}\cite{PhysRevB.88.125129}} (Example \ref{exmp:cryst}). 
Here, the use of twisted equivariant $\K$-theory is essential since we deal with the case that the reflection operator commutes or anticommutes with CPT operators.

It is pointed out by Jean Bellissard and Johannes Kellendonk that, although the uniform Roe algebra is similar to the groupoid $\Cst$-algebra of Bellissard, the (uniform) Roe algebras are not able to play the role of the noncommutative Brillouin zone since they are too large as the $\Cst$-algebra of observables. Actually, they are not even separable. 
A reasonable interpretation is that the uniform Roe algebra of $X$ is the universal algebra containing every observable algebras controlled at $X$.

\section{Preliminaries}\label{section:2}
\subsection{Twisted equivariant $\K$-theory}
Throughout this paper, we use the letter $\real$ for the real number field as a real $\Cst$-algebra (instead of the real line as a topological space) unless otherwise noted. 
We use the same letter for a real $\Cst$-algebra (a real Banach $\ast$-algebra satisfying the $\Cst$-condition) and the corresponding Real $\Cst$-algebra (a $\Cst$-algebra with an antilinear involution). For a complex $\Cst$-algebra $A$, we write $A_\real$ for the underlying real $\Cst$-algebra.
For an element $a$ in a Real $\Cst$-algebra $A$, we write $\lu 0 a :=a $ and $\lu 1 a:=\overline{a}$. 
\subsubsection{Definitions}
First, we prepare terminologies on twists of groups in the sense of \cite{MR3119923}. For more details, see for example Section 2 of \cite{Kubota1}.
\begin{defn}\label{defn:tw}
For a countable discrete group $G$, a \textit{twist} on $G$ is a triple $(\phi,c,\tau)$ where $\phi,c \in \Hom (G,\Zt)$ and $\tau \in H^2(G;\mathbb{T}_\phi)$, where $\mathbb{T}_\phi$ is the abelian group $\mathbb{T}$ with the $G$-action given by $g \cdot t:=\lu{\phi (g)}t$. 
\end{defn}
A $2$-cocycle $\tau \in H^2(G;\mathbb{T}_\phi)$ corresponds to a $\phi$-twisted central extension of $G$ by $\mathbb{T}$, that is, an extension
$1 \to \mathbb{T} \to G^\tau \to G \to 1$
such that $gtg^{-1}=\lu{\phi(g)} t $ for any $t \in \mathbb{T}$ and $g \in G^\tau$. 
For a twist $(\phi,c,\tau)$ of $G$, a $(\phi,c,\tau)$-twisted representation $(\Hilb, U)$ of $G$ (or a PUA representation \cite{mathph14067366}) is a unitary representation of $G^\tau$ on a $\Zt$-graded Hilbert space $\Hilb$ which is 
\begin{itemize}
\item $\phi$-linear ($U_g$ is complex linear if $\phi (g)=0$ and antilinear if $\phi (g)=1$),
\item $c$-graded ($U_g$ is even if $c(g)=0$ and odd if $c(g)=1$) and
\item $\tau$-projective ($U_t$ are the scalar multiplications for $t \in \mathbb{T}$). 
\end{itemize}
We write $\gamma _{\Hilb}$ for the grading operator of $\Hilb$. 
When $G$ is finite, we remark that $\lu \phi \Hilb _G^{c,\tau}:= (\ell^2(G) \oplus \ell^2(G)^{\mathrm{op}})^\infty \hotimes \mathscr{V}$ is independent of the choice of $(\phi,c,\tau)$-twisted representations $\mathscr{V}$ and contains all separable $(\phi,c,\tau)$-twisted representations of $G$ when $G$ is finite. 
We simply write $\lu \phi \Kop _G^{c,\tau}:=\Kop (\lu \phi \Hilb _G^{c,\tau})$. Note that $\Kop _{G}^{c_1,\tau _1} \hotimes \Kop _G^{c_2,\tau_2} \cong \Kop _G^{c,\tau}$ where $c:=c_1+c_2$ and $\tau:=\tau _1 +\tau _2 + \epsilon (c_1,c_2)$. Here, the $2$-cocycle $\epsilon (c_1,c_2)$ is determined by $\epsilon (c_1,c_2)(g,h):=(-1)^{c_1(h)c_2(g)}$.

A \textit{$\phi$-twisted $G$-$\Cst$-algebra} is a $\Cst$-algebra $A$ together with a $\phi$-linear $G$-action $\alpha : G\times A \to A$. For example, any Real $G$-$\Cst$-algebra has a structure of $\phi$-twisted $\Cst$-algebra for arbitrary $\phi$ by the action $\tilde{\alpha} _g(a):=\lu{\phi(g)}\alpha _g(a)$. In particular, we call the action $\alpha _g(a):=\lu{\phi(g)}a$ the \emph{trivial} $\phi$-twisted action of $G$ on a Real $\Cst$-algebra $A$.

For a $\phi$-twisted $G$-$\Cst$-algebra $A$, we say that $a \in A$ is \emph{$c$-twisted $G$-invariant} if $\alpha _g(a)=(-1)^{c(g)}a$ and we write $\mathscr{F}^G_{c,\mathscr{V}}(A)$ for the space of $c$-twisted $G$-invariant symmetries (i.e.\ self-adjoint unitaries) $s \in A \hotimes \Kop (\mathscr{V})$. For an inclusion $\mathscr{V} \subset \mathscr{W}$, we have $\mathscr{F}^G_{c,\mathscr{V}}(A) \to \mathscr{F}^G_{c, \mathscr{W}}(A)$ given by $s \mapsto s \oplus \gamma_{\mathscr{V}^\perp}$.

\begin{defn}[Theorem 5.14 of \cite{Kubota1}]\label{defn:twK}
Let $G$ be a finite group and let $A$ be a $\phi$-twisted $G$-$\Cst$-algebra. We define the twisted equivariant $\K _0$-groups by
\[
\lu \phi \K _{0,c,\tau}^G(A):= \bigcup \nolimits_{\mathscr{V}} \mathscr{F}^G_{c,\mathscr{V}}(A)/\sim _{\mathrm{homotopy}}
\]
where $\mathscr{V}$ runs over all finite dimensional $(\phi,c,\tau)$-twisted representations.
\end{defn}
In the same way as the usual $\K$-theory, we have an isomorphism
\[ \lu \phi \K _{-1,c,\tau}^G(A)\cong \lu \phi \K _{0,c,\tau }^G (\Bop (\Hilb _{G,A})/ \Kop (\Hilb _{G,A}))\]
where $\Hilb _{G,A}:=\ell^2(G)^\infty \otimes A$.
In other words,
\[ \lu \phi \K _{-1,c,\tau}^G(A) \cong \mathrm{Fred}_{c}^{G} (\lu \phi \Hilb _{G,A}^{c,\tau})/\sim _{\mathrm{homotopy}},\]
where $\mathrm{Fred}_{c}^G (\lu \phi \Hilb _{G,A}^{c,\tau})$ is the set of bounded, $c$-twisted $G$-invariant and self-adjoint operators $F$ on $\lu \phi \Hilb _{G,A}^{c,\tau}:=\lu \phi \Hilb _{G}^{c,\tau} \otimes A$ such that $F^2-1$ is compact.
For an exact sequence $0 \to I \to A \to A/I \to 0$ of $\phi$-twisted $G$-$\Cst$-algebras, we define the homomorphism
\[
\partial : \lu \phi \K _{0,\tau,c}^G(A/I)\to \lu \phi \K _{-1,\tau,c}^G(I), \ \partial [s]=[\sigma(\hat{s})]
\]
where 
\[ \sigma: A \hotimes \Kop(\mathscr{V}) \subset A \hotimes \Kop (\lu \phi \Hilb _G^{c,\tau})  \to \mathcal{M}(I \otimes \Kop (\lu \phi \Hilb _{G}^{c,\tau})) \cong \Bop (\lu \phi \Hilb _{G,I}^{c,\tau})\]
is the $\ast$-homomorphism as in Proposition 2.1 of \cite{MR1325694} (for a $\Cst$-algebra $A$, we write $\mathcal{M}(A)$ for its multiplier algebra) and $\hat{s}$ is a $c$-twisted $G$-invariant self-adjoint lift of $s$ in $A \hotimes \Kop(\mathscr{V})$. Theorem 5.14 of \cite{Kubota1} asserts that 
\[
\cdots \to \lu \phi \K _{0,\tau,c}^G(A) \to \lu \phi \K _{0,\tau,c}^G(A/I) \xra{\partial} \lu \phi \K _{-1,\tau,c}^G(I) \to \lu \phi \K _{-1,\tau,c}^G(A) \to \cdots 
\]
is a long exact sequence.

\begin{remk}\label{remk:detw}
In \cite{Kubota1}, the twisted equivariant $\K$-groups $\lu \phi \K ^G_{*,c,\tau}(A)$ are in general defined for $\Zt$-graded $\phi$-twisted $G$-$\Cst$-algebras. 
We remark that 
\maa{\lu \phi \K^G_{*,c,\tau}(A) \cong \lu \phi \K_*^G(A \hotimes \Kop_G^{c,\overline{\tau}}) \cong \KR^G_*((A \hotimes \Kop_G^{c,\overline{\tau}})_\real) \label{form:detw}}
where $\overline{\tau}:=\tau + \epsilon (c,c)$ (Proposition 3.12 of \cite{Kubota1}). 
\end{remk}

\begin{remk}
For a $\phi$-twisted $\Zt$-graded $G$-$\Cst$-algebra $A$, the twisted crossed product $G \ltimes ^\phi _{c,\tau}A$ is the $\Zt$-graded real $\Cst$-algebra generated by unitaries $\{ u_g \}_{g \in G}$ and $A_\real$ with the relations 
\[ u_gau_g^*=(-1)^{c(g)|a|}\alpha _g(a), u_gu_h=\tau(g,h)u_{gh}, \]
which has a canonical $\Zt$-grading determined from that of $A$ and $\gamma(u_g):=(-1)^{c(g)}u_g$ (for more precise definition, see Definition 4.4 of \cite{Kubota1}). The twisted Green-Julg theorem (Theorem 4.10 of \cite{Kubota1}) asserts that there is an isomorphism
\maa{\lu \phi \K ^G_{*,c,\tau}(A) \cong \KR_*(G \ltimes^\phi_{c,-\overline{\tau}} A ). \label{form:GJ}}
\end{remk}

\begin{remk}\label{remk:KK}
In Section 3 of \cite{Kubota1}, the twisted analogue of the $\KK$-group, in particular the $\K$-homology group, is also defined. 

An element of $\lu \phi \KK^G_{c,\tau}(A,\real)$ is represented by a \emph{$\lu \phi \K_G^{0,c,\tau}$-cycle} , that is, a triple $(\Hilb, \varphi, F)$ where $\Hilb$ is a $(\phi,c,\tau)$-twisted representation of $G$, $\varphi : A \to \Bop (\Hilb)$ is a graded $\ast$-homomorphism, and $F \in \Bop (\Hilb)$ is an odd self-adjoint operator such that $[F,\varphi (a)], \varphi (a)(F^2-1), \varphi (a)[U_g,F] \in \Kop$ and $[U_g,\varphi (a)]=0$ for any $a \in A$ and $g \in G$. 

The twisted Kasparov product gives a pairing 
\[\langle \blank , \blank \rangle : \lu \phi \K ^G_{c_1,\tau_1} (A) \otimes \lu \phi \KK ^G_{c_2,\tau_2}(A, \real) \to \lu \phi \K ^G_{c,\tau}(\real), \]
where $c=c_1+c_2$ and $\tau =\tau_1+\tau_2 +\epsilon (c_1,c_2)$.
\end{remk}

\begin{remk}\label{remk:lind}
For a subgroup $H \leq G$, there is a canonical isomorphism
\[ \lu \phi \K^G_{*,c,\tau}(A \otimes C(G/H)) \cong \lu \phi \K^H_{*,c,\tau}(A)  \]
given by $[s] \mapsto [s_{eH}]$, where $s=\bigoplus s_{gH} \in (A\hotimes \Kop (\mathscr{V})) \otimes C(G/H)$ (Theorem 20.5.5 of \cite{MR1656031}). Hence we obtain the \emph{$\ell ^2$-induction} functor 
\[ \lInd _H^G: \lu \phi \K^H_{*,c,\tau}(A) \cong \lu \phi \K^G_{*,c,\tau}(A \otimes C(G/H)) \xra{l_*} \lu \phi \K^G_{*,c,\tau}(A),\]
where $l:C(G/H) \to \Kop (\ell^2(G/H))$ is the inclusion. 

For example, in the case of topological $\K$-theory, $\Ind _G^{G \times \mathscr{T}} : \K_G^0(X) \to \KR_G^0(X)$ maps a complex vector bundle $V$ to the underlying real vector bundle $V_\real$.

Passing through the Green--Julg isomorphism (\ref{form:GJ}), $\lInd _H^G$ is the same thing as the homomorphism of $\KR$-groups induced from the inclusion of real $\Cst$-algebras $H \ltimes ^\phi _{c,-\overline{\tau}}A \to G \ltimes ^\phi_{c,-\overline{\tau}}A$. 
\end{remk}

Finally, we remark that the notion of twists and twisted equivariant $\K$-theory are in general defined for groupoids. For the definition, see Section 7 of \cite{MR3119923}. In particular, for an action groupoid $G \ltimes X$, a twist $\nu \in H^2_G(X, \mathbb{T}_\phi)$ corresponds to a $(\phi ,c,\tau)$-twisted $G$-equivariant Hilbert bundle $\mathscr{E}$ (Theorem 2.1 of \cite{Kubota1}) and $\lu \phi \K^{*,c,\nu}_G(X) \cong \lu \phi \K ^{*}_G(\Kop(\mathscr{E}))$.

\begin{exmp}\label{exmp:ext}
Let $1 \to\Pi \to G \to Q \to 1$ be an extension of discrete groups such that $Q$ is finite and $\Pi$ is free abelian and let $(\phi,c,\tau)$ be a twist of $Q$. 
We write $B_\Pi$ for the Pontrjagin dual of $\Pi$, on which $q \in Q$ acts as 
\[ (q \cdot \chi)(a)=\lu{\phi(q)}\chi (\tilde{q}a\tilde{q}^{-1})\]
where $\tilde{q}$ is a lift of $q$ on $G$. 
As in Section 2 of \cite{MR1909514}, we write $\ell ^2_\Pi(G)$ for the completion of $c_c(G)$ by the $C^*_r(\Pi)$-valued inner product
\[ \langle f_1,f_2 \rangle= \sum\nolimits_{h \in \Pi} \overline{f_1(gh)}f_2(h) \lambda _h. \]
In other words, $\ell^2_\Pi(G)$ is the section space of a Hilbert bundle over $B_\Pi$. Hence the $\phi$-twisted $G$-action $U_g(f)(h):=\lu{\phi(g)} f(gh)$ on $\ell^2_\Pi(G)$ gives a $(\phi,\nu)$-twisted action of $Q$ for some $\nu$. 
By Theorem 4.1 and Corollary 3.7 of \cite{MR1002543}, we can check that 
\[G \ltimes B \cong Q \ltimes _\nu (C^*_r(\Pi) \otimes B) \sim_{\mathrm{Morita}} Q \ltimes (\Kop (\ell^2_\Pi(G)) \otimes B) \]
for any $Q$-$\Cst$-algebra $B$ and hence we obtain
\maa{\lu \phi \K_Q^{*,c,\tau+\nu}(B_\Pi) \cong \lu \phi \K^Q_{*,c,\tau}(\Kop(\ell ^2_\Pi(G)))  \cong \KR_*(G \ltimes (\lu \phi \Kop_Q^{c,\overline{\tau}})_\real) \label{form:extK}}
by (\ref{form:detw}) and the Green--Julg theorem.

When $\nu =0$ (that is, $G \cong Q \ltimes \Pi$), the collapsing map $B_\Pi \to \pt$ and the inclusion $\pt \to B_\Pi$ into the origin induce a direct sum decomposition 
\[ \lu \phi \K _Q^{*,c,\tau}(B_\Pi) \cong \lu \phi \K^{*,c,\tau} _Q(B_\Pi, \pt) \oplus \lu \phi \K^{*,c,\tau} _Q(\pt),\]
where $\lu \phi \K^{*,c,\tau} _Q(B_\Pi, \pt)$ is the reduced $\K$-group and $\lu \phi \K^{*,c,\tau} _Q(\pt)$ is the trivial part. 
On the other hand, there is no such decomposition when $\nu \neq 0$. 
In this case, we characterize ``trivial'' elements as following. 
For each finite subgroup $F$ of $G$, we identify it with the image $q(F) \subset Q$ and set $H:=q^{-1}(F)$. 
Then, $H \cong F \ltimes \Pi$ since there is a splitting. Now we consider $\lInd _F^Q: \lu \phi \K^{*,c,\tau} _F(B_\Pi) \to \lu \phi \K^{*,c,\tau+\nu} _Q(B_\Pi)$ and we write 
\[ \mathrm{Triv} := \mathop{\mathrm{span}} \{ \lInd _{F}^Q (\lu \phi \K^{*,c,\tau} _F(\pt)) \mid \text{$F \leq G$ is finite} \} \subset \lu \phi \K^{*,c,\tau +\nu}_Q(B_\Pi). \]
\end{exmp}

\subsubsection{CT-type symmetries}
For a quadruple $(G, \phi , c, \tau)$, we write $G_0:=\Ker (\phi,c)$ and $\mathscr{A}:=\Im (\phi,c) \subset \Zt \times \Zt$.
\begin{defn}
We say that a quadruple $(G, \phi,c,\tau)$ is
\begin{itemize}
\item a \emph{CT-type symmetry} if there is a splitting $s: \mathscr{A} \to G$ and $\tau$ is the pull-back of a $2$-cocycle on $\mathscr{A}$,
\item a \emph{product CT-type symmetry} if $G=\mathscr{A} \times G_0$ and $\tau$ is a $2$-cocycle on $\mathscr{A}$, 
\item a \emph{CT-symmetry} if $G_0$ is trivial.
\end{itemize}
For a CT-type symmetry, we say that the pair $(\mathscr{A}, \tau)$ is its \textit{CT-type}.
\end{defn}

It is proved in Proposition 6.4 of \cite{MR3119923} that there are $10$ choices of pairs $(\mathscr{A}, \tau)$ where $\mathscr{A}$ of $\mathscr{G}:=\Zt \times \Zt$ and $\tau$ is a $2$-cocycle of $\mathscr{A}$, each of which corresponds to a Cartan labels as indicated in Table \ref{table:CT}. Here, we write $\mathscr{C}$, $\mathscr{T}$, $\mathscr{P}$ for subgroups of $\mathscr{G}$ generated by $\underline{C}:=(1,1)$, $\underline{T}:=(1,0)$, $\underline{P}:=(0,1)$ respectively. In fact, $(\mathscr{A},\tau)$ is classified by $C^2=\pm 1$ and $T^2=\pm 1$ where $C,T,P$ are lifts of $\underline{C}, \underline{T},\underline{P}$ in $\mathscr{A}^\tau$ such that $P^2=1$ and $CT=P$.

\begin{lem}[Corollary 5.16 of \cite{Kubota1}]\label{lem:KCT}
Let $(G \times \mathscr{A},\phi,c,\tau)$ be a product CT-type symmetry and let $A$ be a Real $G$-$\Cst$-algebra on which $\mathscr{A}$ acts trivially. Then, $\lu \phi \K_{0,c,\tau}^{G \times \mathscr{A}}(A)$ is isomorphic to one of $\K_*^G(A)$ or $\KR_*^G(A)$ as indicated in Table \ref{table:CT} below.
\end{lem}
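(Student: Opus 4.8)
The plan is to read this off from the two structural isomorphisms of \cite{Kubota1} together with the classification of CT-types in \cite{MR3119923}; the statement is precisely the equivariant, coefficient-algebra version of the fact that the ten pairs $(\mathscr{A},\tau)$ correspond to the two complex and eight real Clifford algebras. First I would pass to the crossed-product picture via the twisted Green--Julg theorem (\ref{form:GJ}):
\[
\lu \phi \K^{G \times \mathscr{A}}_{0,c,\tau}(A) \;\cong\; \KR_0\bigl( (G \times \mathscr{A}) \ltimes^\phi_{c,-\overline{\tau}} A \bigr).
\]
Because the symmetry is a \emph{product} CT-type, $(\phi,c)$ and $\tau$ are pulled back along the projection $G \times \mathscr{A} \to \mathscr{A}$, so $G$ acts on $A$ by (complex-linear, even, untwisted) Real automorphisms while $\mathscr{A}$ acts on $A$ trivially in the $\phi$-twisted sense — i.e. the antilinear generators of $\mathscr{A}$ just implement the Real structure of $A$. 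I would then move the $\mathscr{A}$-crossed product to the outside,
\[
(G \times \mathscr{A}) \ltimes^\phi_{c,-\overline{\tau}} A \;\cong\; \mathscr{A} \ltimes^\phi_{c,-\overline{\tau}} \bigl( G \ltimes_\comp A \bigr),
\]
and observe that on the right the trivial $\phi$-twisted $\mathscr{A}$-action interacts with $G \ltimes_\comp A$ only through its Real structure; up to graded Morita equivalence this exhibits the algebra as $(G \ltimes_\real A_\real) \hotimes \Cl$, where $\Cl := \bigl( \mathscr{A} \ltimes^\phi_{c,-\overline{\tau}} \comp \bigr)_\real$ is the finite-dimensional graded real $\Cst$-algebra underlying the twisted group algebra of $\mathscr{A}^\tau$.

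The next step is to identify $\Cl$ concretely. By Proposition 6.4 of \cite{MR3119923} — equivalently, by reading off the signs $C^2 = \pm 1$ and $T^2 = \pm 1$ of the lifts $C,T$ of the generators of $\mathscr{A}$ in $\mathscr{A}^\tau$ — the pair $(\mathscr{A},\tau)$ is one of the ten in Table \ref{table:CT}, and the associated twisted group algebra is graded-Morita equivalent to a Clifford algebra: a complex Clifford algebra $\Cliff_n$ for the two classes with $\mathscr{A} \subseteq \mathscr{P}$ (types A and \typeAIII{}) and a real Clifford algebra $\Cl_{r,s}$ for the eight real classes, with $n$, respectively $(r,s)$, determined by the Cartan label. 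I would record this correspondence once and for all.

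Finally, combining graded Morita invariance of $\KR$-theory, the Green--Julg isomorphism in reverse ($\KR_*(G \ltimes_\real A_\real) \cong \KR^G_*(A)$, and the complex analogue), and the fundamental Clifford periodicity $\KR_*(B \hotimes \Cl_{r,s}) \cong \KR_{*+r-s}(B)$ — which in the two complex cases collapses $\KR$ to $\K$ and gives $\K_{*+n}$ — yields
\[
\lu \phi \K^{G \times \mathscr{A}}_{0,c,\tau}(A) \;\cong\; \KR^G_{\,r-s}(A) \qquad \text{resp.} \qquad \K^G_{\,n}(A),
\]
with exactly the degree shift listed in Table \ref{table:CT}; since the ten pairs $(\mathscr{A},\tau)$ run over all ten Clifford algebras, this produces all ten rows at once. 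The main obstacle is the bookkeeping in the middle step: verifying that the $G$-equivariance genuinely factors out of the crossed product as the stated graded tensor product (this uses crucially that $\mathscr{A}$ acts trivially on $A$ and that $\phi, c, \tau$ are trivial on the $G$-factor), and then matching the Clifford degree — modulo $8$ in the real cases, modulo $2$ in the complex cases — to each Cartan label while keeping the sign conventions straight (the normalization $\overline{\tau} = \tau + \epsilon(c,c)$ together with the $-\overline{\tau}$ in the crossed product exchanges $\Cl_{r,s}$ with $\Cl_{s,r}$, so this step fixes the orientation of the table rather than presenting any genuine difficulty, as no single class is harder than another).
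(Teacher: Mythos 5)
Your proposal is correct and unpacks exactly the argument the paper compresses into one line: pass to crossed products via the twisted Green--Julg theorem (\ref{form:GJ}), identify the twisted group algebra of $\mathscr{A}$ with the Clifford algebra corresponding to the Cartan label, and read off the degree shift via Clifford periodicity. The paper states only ``it follows from the fact that the twisted group algebras of CT-symmetries are isomorphic to Clifford algebras and (\ref{form:GJ})''; your version just fills in the intermediate factorization $(G \times \mathscr{A}) \ltimes^\phi_{c,-\overline{\tau}} A \cong (G \ltimes A_\real) \hotimes \Cl$ that this hinges on.
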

It follows from the fact that the twisted group algebras of CT-symmetries are isomorphic to Clifford algebras and (\ref{form:GJ}). In this paper, we simply write as $\KL_* ^ G(A):=\lu \phi \K_{*,c,\tau}^{G \times \mathscr{A}}(A)$ by using its Cartan label ${\mathrm{L}}$. For example, $\KAII^G_*(A) \cong \KQ_*^G(A)$ and $\KBDI ^G_*(A) \cong \KR_{*+1}^G(A)$.

\begin{table}[h]
\scalebox{0.92}[0.92]{
\begin{tabular}{|r||c|c||c|c|c|c|c|c|c|c|} \hline
$\mathscr{A}$ & $1$ & $\mathscr{P}$ & \multicolumn{2}{c|}{$\mathscr{T}$} & \multicolumn{2}{c|}{$\mathscr{C}$} & \multicolumn{4}{c|}{$\mathscr{G}$}\\ \hline
$C^2$& \diaghead{$\Cliff _{4}^2$}{}{} & \diaghead{$\Cliff _{4,2}^2$}{}{} & \diaghead{$\Cliff _{4,2}^2$}{}{} & \diaghead{$\Cliff _{4,2}^2$}{}{} & $1$& $-1$ & $1$ & $1$ & $-1$ & $-1$ \\ \cline{1-3}\cline{4-11}
$T^2$& \diaghead{$\Cliff _{4}^2$}{}{} &\diaghead{$\Cliff _{4,2}^2$}{}{}& $1$ & $-1$ & \diaghead{$\Cliff _{4,2}^2$}{}{}  & \diaghead{$\Cliff _{4,2}^2$}{}{} & $1$ & $-1$ & $1$ & $-1$ \\ \hhline{|=#=|=#=|=|=|=|=|=|=|=|} 
Cartan &\typeA & \typeAIII  & \typeAI & \typeAII  & \typeD & \typeC & \typeBDI & \typeDIII  & \typeCI  & \typeCII \\ \hline
$\KL _0^G $ & $\K _{0}^G$ & $\K _{1}^G$ & $\KR _0^G$ & $\KR _{4}^G$ & $\KR _{2}^G$ & $\KR _{6}^G$ & $\KR _{1}^G$ & $\KR _{3}^G$ & $\KR_{7}^G$ & $\KR _{5}^G$ \\ \hline
\end{tabular}
}
\caption{The 10-fold way for twisted equivariant $\K$-groups}
\label{table:CT}
\end{table}

\begin{remk}\label{remk:CTvsR}
The groups $\KL_0(A)$ are essentially the same thing as the $DK$-groups studied in \cite{mathKT150906271}. Moreover, for real Cartan labels, the isomorphisms $\KL_0(A) \cong \KR _j(A)$ are given in another way by Boersema--Loring~\cite{mathOA150403284}. 
Actually, when $P \not \in \mathscr{A}$, $U^{(j)}_\infty (A)$ in \cite{mathOA150403284} is the same thing as $\bigcup \lu \phi \F_{c,\mathscr{V}}^{\mathscr{A}}(A)$. Similarly, when $P \in \mathscr{A}$, $s \mapsto \Phi(s\gamma_\mathscr{V})\Phi$ (where $\Phi:=(1+P)/2$) gives a one-to-one correspondence between $\bigcup \lu \phi \F_{c,\mathscr{V}}^{\mathscr{A}}(A)$ and $U^{(j)}_\infty (A)$. 
Set $\mathscr{V}':=\mathscr{V}$ if $P \not \in \mathscr{A}$ and $\mathscr{V}':=\Phi\mathscr{V}$ if $P \in \mathscr{A}$.

They introduce Real $\Cst$-algebras $A_j$ which are $\KK$-equivalent to $\Cl_{0,j}$ and $\KKR(A_j, A) \cong [A_j, A \otimes \Kop]$. 
By functoriality, for any $s \in \lu \phi \F_{c,\mathscr{V}}^{\mathscr{A}}(A)$ there is a $\ast$-homomorphism $\psi _s:A_j \to A \otimes \Kop(\mathscr{V}')$ such that $[s]=[\psi _s(u_0)]$ under the above identification, where $u _0\in U_\infty ^{(j)}(\tilde{A}_j)$ is a distinguished element such that $[u_0] \in \KR_j(A_j) \cong \zahl$ is the generator.

This identification is useful for studying the Kasparov product. Let $(\mathscr{A},\phi,c,\tau)$ be the CT-symmetry corresponding to $\KR_d$-theory and set $\Hilb _d:=\Hilb _{\mathscr{A}}^{c,\tau}$. Set
\[\mathrm{FS}_{j,d}:=\{ (F,s) \in \lu \phi \mathrm{Fred}_c^{\mathscr{A}}(\Hilb _d)^{\mathrm{odd}} \times U^{(j)}(\Bop (\Hilb _d))^{\mathrm{even}} \mid  [F,s] \in \Kop \}\]
if $j=-1,0,1,2$ and $\mathrm{FS}_{j,d}:=\mathrm{FS}_{j-4,d+4}$ if $j=3,4,5,6$.
Applying the above identification for the $\Cst$-algebra $D:=\{ T \in \Bop(\Hilb _d)^{\mathrm{even}} \mid [T,F] \in \Kop \}$, we obtain an isomorphism $\KR^d(A_j) \cong \pi_0(\mathrm{FS}_{j,d})$ (note that $\KR^d(A_{j+4})=\KR^d(A_j \otimes \mathbb{H}) \cong \KR^{d+4}(A_j)$). 

Let $s \in \lu \phi \F^\mathscr{A}_{c,\mathscr{V}}(A)$ corresponding to $\psi _s : A_j \to A \otimes \Kop (\mathscr{V}')$. Then, by the above identification, the Kasparov product
\[\KKR(A_j,A) \otimes \KKR^d(A,\real) \to \KKR^d(A_j,\real)\]
is given by 
\[[\psi _s] \otimes_A [\Hilb , \varphi , F] = [\Hilb \otimes \mathscr{V}', \psi _s \circ \varphi, F \otimes 1],\]
which corresponds to $[F \otimes 1, (\varphi \otimes \id_{\Kop (\mathscr{V}')})(s)] \in \pi_0(\mathrm{FS}_{j,d})$ under an arbitrary embedding of $\Hilb \otimes \mathscr{V}'$ into $\Hilb _d$ ($j=-1,0,1,2$) or $\Hilb _{d+4}$ ($j=3,4,5,6$).

Explicit isomorphisms of $\pi _0(\mathrm{FS}_{j,d})$ with $\zahl$ or $\Zt$ are given by the index pairings investigated by Gro{\ss}mann--Schulz-Baldes~\cite{GrossmannSchulz-Baldes}. In other words, their index pairing is the same thing as the Kasparov product.
\end{remk}

\begin{exmp}\label{exmp:rep}
The groups $\KR_j(\real)$ for $j=0,1,2,3,4,5,6,7$ are isomorphic to $\zahl, \zahl _2, \zahl _2, 0 , \zahl , 0,0,0$ (the $8$-fold Bott periodicity).
Hence the twisted equivariant $\K$-group $\lu \phi \K ^{\mathscr{A}}_{0,c,\tau}(\real)$ for CT-symmetries are isomorphic to one of $\zahl $, $\Zt$ or $0$ as indicated in Table \ref{table:Kitaev}. 
\end{exmp}
\begin{table}[h]
\scalebox{0.90}[0.90]{
\begin{tabular}{|r||c|c||c|c|c|c|c|c|c|c|} \hline
$d$ & A          & {\typeAIII}    & {\typeAI}         &{\typeBDI}       & D            & {\typeDIII}       & {\typeAII}          & {\typeCII}        & C            & {\typeCI}      \\ \hline
$0$      & $\zahl$ & $0$      & $\zahl$ & $\Zt$    & $\Zt $     & $0$        & $\zahl $ & $0$        & $0$        & $0$     \\ \hline
$1$      & $0$       &$\zahl$ & $0$       & $\zahl$ & $\Zt$      & $\Zt $    & $0$        & $\zahl $ & $0$        &  $0$    \\ \hline
$2$      & $\zahl$ & $0$      & $0$       &$0$        & $\zahl$   & $\Zt$     & $\Zt $    & $0$        & $\zahl $ & $0$       \\ \hline 
$3$      & $0$      & $\zahl$       &$0$        &$0$        &$0$    & $\zahl$  & $\Zt$     & $\Zt $    & $0$        & $\zahl $   \\ \hline
\end{tabular}
}
\caption{the group $\KL_d(\real )$}
\label{table:Kitaev}
\end{table}

\begin{exmp}
The group $\KR^{G}_{d}(\real)$ for $d=0,\dots, 7$ is written in terms of the representation ring as
\[ R_\real (G), R_\real (G)/\rho R(G), R(G)/jR_\mathbb{H}(G), 0, R_\mathbb{H}(G), R_{\mathbb{H}}(G)/\eta R(G), R(G)/iR_\real (G), 0 \]
(for the definition of $i$, $j$, $\rho$ and $\eta$, see Section 6 and Section 8 of \cite{MR0259946}). 
Hence, we can write down $\KL ^G_*(\real)$ in terms of the representation rings.

These groups are also understood in terms of the Frobenius--Schur indicator  (Chapter 13.2 of \cite{MR0450380}).
For any finite group $G$, 
\[ C^*_r(G) \cong \bigoplus _{\pi \in \Irr(G)_1} \Mop _{l_\pi}(\real) \oplus \bigoplus _{[\pi] \in \Irr(G)_0/\sim} \Mop_{l_\pi}(\comp) \oplus \bigoplus _{\pi \in \Irr(G)_{-1}} \Mop _{l_\pi/2}(\quot),  \]
where $l_\pi:=\dim_\comp \pi$, $\Irr(G)_{i}$ is the set of irreducible representations whose Frobenius--Schur indicator are $i=1,0,-1$ and $\pi \sim \pi'$ in $\Irr(G)_0$ if and only if $\pi'=\pi$ or $\pi^*$. Hence we obtain
\maa{\KR_{*}^G(\real) \cong  \KR_{*}(\real)^{n_1} \oplus \K_{*}(\comp)^{n_0/2} \oplus \KQ_{*}(\real)^{n_{-1}} \label{form:group}}
where $n_i:=|\Irr (G)_i|$ for $i=1,0,-1$.

For example, consider the case that $G=\zahl /k\zahl$. By the Pontrjagin duality, 
we obtain $C^*_r(C_{k}) \cong C(\hat{G})$ with the involution $\chi \mapsto \overline{\chi}$ (here $\hat{G}$ is identified with the group of roots of unity). Hence we obtain
\maa{ 
\KL^{C_k}_*(\real ) \cong \begin{cases}
\K_{j+*}(\comp)^k & \KL_* \cong \K_{j+*}, \\
\KR_{j+*} ( \real)  \oplus \K_{j+*} (\comp) ^{\frac{k-1}{2}} & \text{$\KL_* \cong \KR_{j+*}$, $k$ is odd}\\
\KR_{j+*} (\real) ^2 \oplus \K_{j+*}(\comp) ^{\frac{k-2}{2}} & \text{$\KL_* \cong \KR_{j+*}$, $k$ is even}.
\end{cases}
\label{form:rot}}
\end{exmp}

\begin{exmp}\label{exmp:ind}
As in Example \ref{exmp:rep}, $\KAII ^2_G(\real )$, $\KDIII ^1_G(\real )$ and $\KCII ^3_G(\real )$ are isomorphic to $R(G)/R_{\mathbb{H}}(G)$. 
In fact, the canonical surjection from $R(G)$ to these groups is given by the $\ell ^2$-induction from $G':=\Ker \phi$. 

To see this, first we reduce the problem to $\lInd _e^{\mathscr{T}}: \K_0^G(\Cliff_2) \to \KR_0^G(\Cl_{2,0})$ by the Green--Julg isomorphism (\ref{form:GJ}) and show that $\lInd _e^{\mathscr{T}}:\K^{-2}_G(\pt) \to \KR^{-2}_G(\pt )$ is the desired surjection. 
It follows from the fact that  the surjection $R(G) \cong M^\real_2(G) \to \KR_G^0(D^2,S^1) $ in Section 8 of \cite{MR0259946} maps $[V]$ to
\[[ D ^2 \times V_\real , D^2 \times V_\real , x+iy ] \in \KR _G^0 (D^2,S^1),\]
which is the same thing as $\lInd _e^{\mathscr{T}} (\beta \otimes [V])$ where $\beta$ is the Bott generator.
\end{exmp}

\begin{exmp}\label{exmp:reflect}
Let $\mathscr{R}:=\Zt$ and consider an arbitrary twist on $G:=\mathscr{R} \times \mathscr{A}$ such that $\phi$ and $c$ come from that of $\mathscr{A}$. We write the generator of $\Zt$ as $\underline{R}$. 
In the same way as Lemma 6.17 of \cite{MR3119923}, lifts $C,T,R$ of $\underline{C}$, $\underline{T}$ and $\underline{R}$ satisfies $(CT)^2=1$ and $R^2=1$. Hence, $\tau$ is classified by $C^2=\pm 1$, $T^2=\pm 1$, $CR=\pm RC$ and $TR=\pm RT$. 
The classification list is given in Table I of \cite{PhysRevB.88.075142}.

Let us consider the $\mathscr{R}$-action on $\Cl _{0,1}$ induced from the reflection of $\real$, that is, the $\ast$-automorphism given by $\underline{R}\cdot 1=1$ and $\underline{R} \cdot e=-e$. 
For the latter use, we give a calculation of the twisted equivariant $\K$-group $\lu \phi \K ^G_{0,c,\tau}(\Cl _{0,1})$. 

First, we consider the case that $P \not \in \mathscr{A}$ or $PR=RP$. 
By definition of the twisted crossed product, we have the isomorphism
\[ G \ltimes ^\phi _{c,\overline{\tau}} \Cl_{0,1} \cong \mathscr{A} \ltimes ^\phi _{c,\overline{\tau}} (\mathscr{R} \ltimes \Cl _{0,1}). \]
Here, $\mathscr{A}$ acts on $\mathscr{R} \ltimes \Cl_{0,1}$ as $e_1 \mapsto e_1$ and $R \mapsto \eta(g)R$, where $\eta(g) \in \{\pm 1\}$ is determined by $Rg=\eta(g)gR$.

Now, the underlying complex $\Cst$-algebra $\mathscr{R} \ltimes \Cliff_1$ is isomorphic to $\Cliff_{2}$ generated by $e$ and $iRe$.
Hence the fixed point subalgebra of the $\mathscr{A}$-action on it is $\Cl_{1,1}$ when $TR=RT$ and $\Cl_{0,2}$ when $TR=-RT$.  
In other words, the $\mathscr{A}$-action on $R \ltimes \Cl_{0,1}$ is the same thing as the trivial action on $\Cl_{1,1}$ (resp.\ $\Cl_{0,2}$) when $TR=RT$ (resp.\ $TR=-RT$).

For the case that $PR=-RP$, apply the same argument for $G=\mathscr{A} \times \mathscr{R}'$, where $\mathscr{R}'$ is the subgroup generated by $RP$. 
It is directly checked that the underlying complex $\Cst$-algebra $\mathscr{R}' \ltimes_c \Cliff_1$ is isomorphic to $\Cliff _1 \oplus \Cliff_1$ (generated by $(1+RP)e/2$ and $(1-RP)e/2$) and the $\mathscr{A}$-action on it is the same thing as the trivial action on $\Cl_{0,1} \oplus \Cl_{0,1}$ (resp.\ $(\Cliff _1)_\real$)when $TRP=RPT$ (resp.\ $TRP=-RPT$).

Finally, by the Green-Julg isomorphism (\ref{form:GJ}) we obtain 
\[
\lu \phi \K ^G_{d-1,c,\tau}(\Cl_{0,1}) \cong \begin{cases}\lu \phi \K^{\mathscr{A}}_{d-1,c,\tau}(\real) & \text{if $PR=RP$ and $TR=RT$},\\
\lu \phi \K^{\mathscr{A}}_{d+1,c,\tau}(\real) & \text{if $PR=RP$ and $TR=-RT$}.\\
\lu \phi \K^{\mathscr{A}}_{d,c,\tau}(\real)^2 & \text{if $PR=-RP$ and $TRP=RPT$},\\
\K_{d,c,\tau}(\real) & \text{if $PR=-RP$ and $TRP=-RPT$}.\\
\end{cases}
\]
\end{exmp}

\subsection{Coarse $\Cst$-algebras}
Next, we review definitions and basic properties of $\Cst$-algebras arising in coarse geometry and set up our notation. The reader can find more details in \citelist{\cite{MR1147350}\cite{MR1399087}\cite{MR1817560}\cite{MR2007488}}.
Let $X$ be a proper (i.e.\ all bounded closed subsets are compact) discrete metric space with a proper isometric action of $G$. 

\begin{defn}
A triplet $(\Hilb , U, \pi)$ is a \textit{(Real) $(X,G)$-module} if $(\Hilb,U)$ is a (Real) unitary representation of $G$ and $\pi:c_0(X) \to \Bop (\Hilb)$ is a $G$-equivariant (Real) $\ast$-homomorphism (equivalently, a decomposition $\Hilb =\bigoplus _{x \in X} \Hilb _x$ compatible with the $G$-action). We say that a (Real) $(X,G)$-module is \textit{finite} or \textit{uniformly finite} if the rank of projections $\pi (\delta _x)$ ($x \in X$) are finite or uniformly finite respectively.
\end{defn}

Hereafter, we make a technical assumption on $G$: the order of finite subgroups of $G$ is uniformly bounded. 
For example, any discrete group acting on $\Euc ^d$ satisfies this condition. 
For such $G$ acting properly and isometrically on $X$,  set $\tilde{X}:=\{ (g,x) \in G \times X \mid gx=x \}$, which is a free $G$-space by the $G$-action $g \cdot (h,x):=(hg^{-1}, gx)$.
Then, we obtain a uniformly finite Real $(X,G)$-module 
\[\Hilb_{X,G}:= \ell ^2 (\tilde{X})=\bigoplus\nolimits_{x\in X} \ell ^2 (G_x) \]
with the Real structure given by $\overline{f}(x):=\overline{f(x)}$. 

\begin{defn}\label{defn:Roe}
Let $X$, $G$ be as above and let $(\Hilb, U, \pi)$ be a (Real) $(X,G)$-module. We say that $T \in \Bop(\Hilb)$ is 
\begin{itemize}
\item \emph{controlled} (or $T$ has \textit{finite propagation}) if there is $R>0$ such that $T_{xy}:=\pi (\delta_y)T\pi(\delta_x)=0$ if $d(x,y) >R$, 
\item \emph{locally compact} if $\pi (\delta _x)T$ and $T\pi(\delta _x)$ are compact operators for any $x \in X$,
\item \emph{supported near $A$} if there is $R>0$ such that $T_{xy}=0$ if $d(x , A)>R$ or $d(y,A)>R$.
\end{itemize}
We write $C^*_\Hilb(X)^G$ (resp.\ $C^*_\Hilb(A \subset X)^G$) for the closure of the (Real) $\ast$-algebra of $G$-invariant, controlled and locally compact operators on $\Hilb$ (resp.\ supported near $A$).
In particular, we simply write 
\[\begin{array}{l}
C^*_u(X)^G := C^*_{\Hilb _{X,G}}(X)^G,\\
C^*(X)^G := C^*_{\Hilb _{X,G}^\infty}(X)^G,
\end{array}
\ 
\begin{array}{l}
C^*_u(A \subset X)^G:=C^*_{\Hilb _{X,G}}(A \subset X)^G, \\
C^*(A \subset X)^G := C^*_{\Hilb _{X,G}^\infty}(A \subset X )^G, 
\end{array}\]
and call them the \emph{$G$-invariant (relative) uniform Roe algebra} and  \emph{$G$-invariant (relative) Roe algebra} respectively. 
\end{defn}
The $\Cst$-subalgebra $C^*_\Hilb (A \subset X)^G$ of $C^*_\Hilb (X)^G$ is an ideal. In this paper, we simply write 
\[C^*_\Hilb (X, A )^G:=C^*_{\Hilb} (X)^G/C^*_{\Hilb} (A \subset X)^G\]
and define $C^*(X,A)^G$, $C^*_u(X,A)^G$ in the same way.

\begin{exmp}\label{exmp:eqRoe}
When $G$ acts on $X$ cocompactly, $C^*_u(X)^G$ is isomorphic to $\Kop (\ell ^2_G(X))$ (Lemma 2.3 of \cite{MR1909514}). 
For example, for the Cayley metric space $|G|$ on which $G$ acts by the left multiplication, we have
\[C^*_u(|G|)^G \cong C^*_r(G) , \  C^*(|G|)^G \cong C^*_r(G) \otimes \Kop, \]
where $C^*_r(G)$ is the reduced group $\Cst$-algebra.
Similarly, for an extension $1 \to \Pi \to G \to Q \to 1$ as in Example \ref{exmp:ext},  the $Q$-$\Cst$-algebra $C^*_{\ell ^2 (G)}(|G|)^\Pi$ is $Q$-equivariantly isomorphic to $\Kop (\ell^2_\Pi (G))$. 
\end{exmp}

\begin{exmp}\label{exmp:groupoid}
It is known that these $\Cst$-algebras are related to certain groupoids when $G$ is trivial.
Let $\G_X$ be the translation groupoid defined in Section 3.2 of \cite{MR1905840}. 
Then, by the same argument as Lemma 4.4 of \cite{MR1905840}, we obtain an isomorphism
\[C^*_\Hilb (X) \cong \G_X \ltimes c_b(X; \Kop(\Hilb)) \]
where $c_b(X;\Kop (\Hilb)):=\prod _{x \in X} \Kop (\Hilb _x)$. 
For example, $C^*_u(X) \cong C^*_r (\mathcal{G}_X)$ and $C^*(X) \cong \mathcal{G}_X \ltimes c_b(X,\Kop)$.
In particular, for the Cayley metric space $|G|$, we have 
\[C^*_u(|G|) \cong G \ltimes c_b(G), C^*(|G|) \cong G \ltimes c_b(G ,\Kop).\]
\end{exmp}

Example \ref{exmp:groupoid} immediately implies that the algebra of observables $\zahl ^d \ltimes C(\Omega )$ introduced in \citelist{\cite{MR862832}\cite{MR1295473}} is a subalgebra of $C^*_u(X)$. 
The same holds for the observable algebra of aperiodic solids.
Let $X$ be a Delone subset of $\Euc^d$ and let $\Gamma _X$ be the groupoid of the transversal of the hull of $X$ (Definition 1.9 of \cite{MR1798994}). Note that the target space $\Gamma _X^0$ is regarded as a compactification of $X$.
\begin{lem}\label{lem:groupoid}
There is a surjective groupoid homomorphism $\G _X \to \Gamma _X$ which induces the inclusion $C^*_r(\Gamma _X) \subset C^*_u(X)$.
\end{lem}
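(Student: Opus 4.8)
The plan is to produce an explicit continuous surjection $\pi\colon \G_X \to \Gamma_X$ of étale groupoids which is bijective on the fibres of the source map (equivalently the range map), and then to argue that such a morphism induces an isometric $\ast$-homomorphism $f\mapsto f\circ\pi$ from $C^*_r(\Gamma_X)$ into $C^*_r(\G_X)\cong C^*_u(X)$, the last identification being Example \ref{exmp:groupoid}. Recall from Section 3.2 of \cite{MR1905840} that the unit space of $\G_X$ is $\beta X$, that its arrow space is $\bigcup_{R>0}\overline{E_R}$ where $E_R:=\{(x,y)\in X\times X : d(x,y)\le R\}$ and the closures are taken in $\beta(X\times X)$, and that range, source, product and inverse extend the obvious operations on $X\times X$, with $(x,y)$ regarded as the partial translation $\delta_y\mapsto\delta_x$. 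On the other side, by Definition 1.9 of \cite{MR1798994} the transversal $\Gamma_X^0$ is compact, the punctured orbit $\{X-x : x\in X\}$ is dense in it, and $\Gamma_X=\{(\omega,t)\in\Gamma_X^0\times\Euc^d : \omega-t\in\Gamma_X^0\}$, with $r(\omega,t)=\omega$, $s(\omega,t)=\omega-t$, $(\omega,t)(\omega-t,t')=(\omega,t+t')$ and $(\omega,t)^{-1}=(\omega-t,-t)$; for each $R$ the ``short'' piece $\Gamma_X^{(R)}:=\{(\omega,t)\in\Gamma_X : |t|\le R\}$ is compact. The morphism I want is $\pi^0(x)=X-x$ on units and $\pi(x,y)=(X-x,\,y-x)$ on arrows.

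For the construction, the map $x\mapsto X-x$ from $X$ into the compact Hausdorff space $\Gamma_X^0$ extends uniquely, by the universal property of $\beta X$, to a continuous $\pi^0\colon\beta X\to\Gamma_X^0$, whose image is compact, hence closed, and dense, hence all of $\Gamma_X^0$; so $\pi^0$ is surjective. For arrows, since $(X-x)-(y-x)=X-y\in\Gamma_X^0$ and $|y-x|=d(x,y)$, the assignment $(x,y)\mapsto(X-x,y-x)$ maps $E_R$ into the compact space $\Gamma_X^{(R)}$; as $\overline{E_R}$ is canonically $\beta E_R$, this extends uniquely to a continuous map $\overline{E_R}\to\Gamma_X^{(R)}$, and for $R<R'$ the two extensions agree on $\overline{E_R}\subseteq\overline{E_{R'}}$ by density of $E_R$ and Hausdorffness, so they glue to a continuous $\pi\colon\G_X\to\Gamma_X$. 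On pairs coming from $X\times X$ one checks directly that $r\circ\pi=\pi^0\circ r$, $s\circ\pi=\pi^0\circ s$, $\pi((x,y)(y,z))=\pi(x,y)\pi(y,z)$ and $\pi((x,y)^{-1})=\pi(x,y)^{-1}$; since $X\times X$ is dense in $\G_X$ and composable pairs from $X\times X$ are dense among composable pairs of $\G_X$, continuity of the structure maps upgrades these to all of $\G_X$, so $\pi$ is a groupoid homomorphism.

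Next I would show that $\pi$ restricts to a bijection on each range fibre $\G_X^u:=r^{-1}(u)$, $u\in\beta X$. Surjectivity onto $\Gamma_X^{\pi^0(u)}$: write $u=\lim_\alpha x_\alpha$ with $x_\alpha\in X$, so $\pi^0(u)=\omega=\lim_\alpha(X-x_\alpha)$; given $(\omega,t)\in\Gamma_X$, the condition $\omega-t\in\Gamma_X^0$ forces $t\in\omega$, so there are $x'_\alpha\in X$ with $x'_\alpha-x_\alpha\to t$, the pairs $(x_\alpha,x'_\alpha)$ lie in $E_R$ for $R=|t|+1$, hence cluster at some $\gamma\in\overline{E_R}\subseteq\G_X$, and by continuity $r(\gamma)=\lim x_\alpha=u$ and $\pi(\gamma)=\lim(X-x_\alpha,\,x'_\alpha-x_\alpha)=(\omega,t)$. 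Injectivity is the step that genuinely uses that $X$ is Delone: let $r_0>0$ witness uniform discreteness. An element $\gamma\in\G_X^u$ is the germ at the ultrafilter $\mathcal U_u$ underlying $u$ of a controlled partial translation written $x\mapsto x+w_\gamma(x)$, and then $\pi(\gamma)=(\omega,\lim_{\mathcal U_u}w_\gamma)$. If $\gamma_1,\gamma_2\in\G_X^u$ have $\pi(\gamma_1)=\pi(\gamma_2)$, then $\lim_{\mathcal U_u}w_{\gamma_1}=\lim_{\mathcal U_u}w_{\gamma_2}=t$, so on the $\mathcal U_u$-large set where $|w_{\gamma_i}(x)-t|<r_0/2$ for $i=1,2$ the two images of $x$ are points of $X$ within distance $r_0$, hence equal; thus the underlying partial translations agree on a set in $\mathcal U_u$ and $\gamma_1=\gamma_2$. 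Together with surjectivity of $\pi^0$ this also shows $\pi$ itself is surjective, giving the first assertion of the lemma.

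Finally, for the $C^*$-statement, if $f\in C_c(\Gamma_X)$ with $\supp f\subseteq\Gamma_X^{(R)}$ then $f\circ\pi$ is continuous and, since the $\Euc^d$-coordinate of $\pi(\delta)$ has norm at most the propagation of $\delta$, its support is closed and contained in the compact set $\overline{E_{R+1}}$, hence compact; so $f\mapsto f\circ\pi$ maps $C_c(\Gamma_X)$ into $C_c(\G_X)$. Using the homomorphism property of $\pi$ and the fibrewise bijections to reindex the convolution $(f\ast g)(\gamma)=\sum_{\alpha\in\Gamma_X^{r(\gamma)}}f(\alpha)\,g(\alpha^{-1}\gamma)$, one checks that $f\mapsto f\circ\pi$ respects the convolution products and the involution; and via the unitary $\ell^2(\G_X^u)\cong\ell^2(\Gamma_X^{\pi^0(u)})$ induced by $\pi|_{\G_X^u}$ the regular representation $\lambda_u$ of $\G_X$ intertwines $f\circ\pi$ with $\lambda_{\pi^0(u)}(f)$, so taking the supremum over $u\in\beta X$ and using surjectivity of $\pi^0$ gives $\|f\circ\pi\|_{C^*_r(\G_X)}=\|f\|_{C^*_r(\Gamma_X)}$. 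Thus $f\mapsto f\circ\pi$ extends to the desired isometric embedding $C^*_r(\Gamma_X)\hookrightarrow C^*_r(\G_X)\cong C^*_u(X)$; concretely, evaluating at the unit $X\in\Gamma_X^0$ this is just the representation of Bellissard's observable algebra on $\ell^2(X)$, whose operators are controlled and locally compact and hence lie in $C^*_u(X)$, and which is faithful because $X$ has dense orbit in $\Gamma_X^0$. I expect the real obstacle to be the fibrewise injectivity of $\pi$ over the corona $\beta X\setminus X$, where one must compare germs of partial translations along ultrafilters rather than points of $X$, and it is precisely there that uniform discreteness of the Delone set is indispensable; the only other technical point, that $f\circ\pi$ is compactly supported, rests on the elementary observation that $\pi$ does not increase propagation.
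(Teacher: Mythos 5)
Your proof is correct and is built on exactly the same construction as the paper's: extend $x\mapsto X-x$ to $\beta X\to\Gamma_X^0$ and extend the displacement function $(x,y)\mapsto y-x$ over $\overline{E_R}\subset\beta(X\times X)$ to get $\pi\colon\G_X\to\Gamma_X$. The paper's own proof stops after exhibiting the map; you supply the verifications it leaves implicit — surjectivity, the fibrewise bijectivity over the corona (where uniform discreteness of the Delone set is indeed the essential ingredient), and the fact that $f\mapsto f\circ\pi$ is an isometric $\ast$-embedding $C^*_r(\Gamma_X)\hookrightarrow C^*_r(\G_X)\cong C^*_u(X)$ — so your argument is a fuller version of the same route rather than a different one.
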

\begin{proof}
Since $\Gamma _X^0$ is a compactification of $X$, we have a unique surjection $f: \beta X \to \Gamma _X^0$. 
Recall that the morphism set $\G _X^1$ is given by the union of all subspaces of $\beta (X \times X)$ which are closures of controlled (i.e.\ $d(x,y)$ is uniformly bounded) subspaces of $X \times X$. 
Let $Z$ be a controlled subset of $X \times X$. Then, the bounded function $v: Z \to \real ^d$ determined by $v(x,y) \cdot x=y$ extends to the function on $\beta Z=\overline{Z} \subset \beta (X \times X)$. Now we obtain the desired groupoid morphism $z \mapsto (f(x),v(z))$.
\end{proof}

Next, we prepare some notions on inclusions of metric spaces related to the tight-binding approximation of the position space. 
For a subspace $Y$ of $X$ and $R>0$, we write $\Pen (Y,R):=\{ x \in X \mid d(x,Y)<R\}$. 
\begin{defn}\label{defn:Delone}
A discrete subspace $X$ of a metric space $M$ is 
\begin{itemize}
\item \emph{uniformly discrete} if there is $r>0$ such that any closed ball of radius $r$ in $M$ contains at most one point in $X$,
\item \emph{relatively dense} if there is $R>0$ such that any closed ball of radius $R$ in $M$ contains at least one point in $X$,
\item a \emph{Delone subset} if $X$ is uniformly discrete and relatively dense.
\end{itemize}
For s closed subspace $N$ of $M$, we say that a pair of discrete subspaces $Z \subset Y$ of $M$ is a \emph{Delone pair} of $N \subset M$ if $Y$ is a Delone subset of $M$ and the Hausdorff distance $d(N,Z)$ is finite (in other words, $Z$ is a Delone subset of $\Pen (N,R)$ for some $R$). For a partition $M=M_+ \cup M_-$, $X =Y_+ \cup Y_-$ is a \emph{Delone partition} of $M=M_+ \cup M_-$ if $X$ is a Delone subset of $M$ and $d(Y_\pm , M_\pm) <\infty$, $d(Z,N)<\infty$ holds, where $N:=M_+ \cap M_-$ and $Z:=Y_+ \cap Y_-$.
\end{defn}
A discrete subspace $X$ is relatively dense in $M$ if and only if the inclusion is a coarse equivalence. 
If $M$ has bounded geometry, so does $X$ if and only if $X$ is uniformly discrete. 
If a partition $M=M_+ \cup M_-$ is $\omega$-excisive (i.e.\ for any $R>0$ there is $S>0$ such that $\Pen (Y_1,R) \cap \Pen (Y_2,R) \subset \Pen (Y_1 \cap Y_2,S))$, then so are its Delone partitions.

Finally, we remark on coarse invariance of uniform Roe algebras. We say that two $G$-spaces are \emph{$G$-equivariantly coarsely equivalent} if there are $G$-maps $f : \tilde{X} \to \tilde{Y}$ and $g: \tilde{Y} \to \tilde{X} $ such that there is a nondecreasing function $\rho : \real _{\geq 0} \to \real _{\geq 0}$ and $R>0$ satisfying
\begin{itemize}
\item $d_Y(f(x),f(x'))) \leq \rho (d_X(x,x'))$ and $d_X(g(y),g(y')) \leq \rho (d_Y(y,y'))$,
\item $d_X(gf(x),x)<R, d_Y(fg(y),y)<R$,
\end{itemize}
for any $x,x' \in \tilde{X}$ and $y,y' \in \tilde{Y}$, where $d_X$ and $d_Y$ are metric functions on $X$ and $Y$ lifted to $\tilde{X}$ and $\tilde{Y}$. 
For example, two $G$-invariant Delone subsets of a $G$-metric space $M$ are $G$-equivariantly coarsely equivalent.

\begin{lem}\label{lem:Morita}
Let $X$ and $Y$ be uniformly discrete $G$-metric spaces with bounded geometry which are $G$-equivariantly coarse equivalent. Then, for any normal subgroup $\Pi$ of $G$, $C^*_{\Hilb_{X,G}}(X)^\Pi$ and $C^*_{\Hilb _{Y,G}}(Y)^\Pi$ are $G/\Pi$-equivariantly Morita equivalent \cite{MR0419677}.
\end{lem}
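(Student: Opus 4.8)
The plan is to produce an explicit Morita equivalence bimodule out of a coarse equivalence, adapting the standard argument showing that the uniform Roe algebra is a coarse invariant (see \cite{MR1147350}, \cite{MR2007488}) while keeping track of the extra $G$-symmetry. First I would fix $G$-maps $f:\tilde{X}\to\tilde{Y}$ and $g:\tilde{Y}\to\tilde{X}$ as in the definition of $G$-equivariant coarse equivalence, together with the controlling function $\rho$ and the constant $R$. Using bounded geometry and uniform discreteness of $X$ and $Y$, each of $f$ and $g$ has uniformly bounded fibres, so the induced map on $\ell^2$-spaces $\Hilb_{X,G}=\ell^2(\tilde X)\to\Hilb_{Y,G}=\ell^2(\tilde Y)$ (by summing over fibres, possibly after subdividing to make it injective) is a bounded controlled operator, and similarly for $g$. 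Concretely, I would build a partial isometry or, more robustly, a controlled $G$-invariant projection $p\in C^*_{\Hilb_{X,G}\oplus\Hilb_{Y,G}}(X\sqcup Y)^G$ whose two corners are a full projection in $C^*_{\Hilb_{X,G}}(X)^G\otimes\Kop$ and in $C^*_{\Hilb_{Y,G}}(Y)^G\otimes\Kop$ — this is the coarse analogue of the fact that a bijective quasi-isometry induces an isomorphism and a general coarse equivalence induces a stable isomorphism. Then restricting everything to the $\Pi$-invariant parts (which makes sense since $\Pi\trianglelefteq G$, so $\Pi$-invariance is preserved by the $G$-action and hence the corners remain $G/\Pi$-algebras) gives the asserted $G/\Pi$-equivariant Morita equivalence.

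In more detail, the key steps in order are: (1) Replace $f$ by a map with injective $\ell^2$-realization: since $\rho$-controlled and $f$ has uniformly bounded fibres, enlarge the target module to $\Hilb_{Y,G}\otimes\ell^2(\nat)$ and choose an injection $\tilde X\hookrightarrow \tilde Y\times\nat$ over $f$; this is $G$-equivariant because $G$ acts trivially on the $\nat$-factor and $f$ is a $G$-map. The resulting isometry $V_f:\Hilb_{X,G}\to\Hilb_{Y,G}\otimes\ell^2(\nat)$ is controlled (its propagation is governed by $\rho$) and $G$-invariant. (2) Check that conjugation by $V_f$ carries $C^*_u(X)^G$ into $C^*_u(Y)^G\otimes\Kop$ and that $V_fV_f^*$ is a controlled $G$-invariant projection; symmetrically build $V_g$. (3) Use the coarse-inverse condition $d(gf(x),x)<R$ to see that $V_gV_f$ differs from a controlled $G$-invariant isometry into a fixed-rank dilation by a controlled operator, so that $V_f$ and $V_g$ exhibit the two uniform Roe algebras as full corners of a common algebra $C^*_{\Hilb'}(X\sqcup Y)^G$ for a suitable uniformly finite $(X\sqcup Y,G)$-module $\Hilb'$. (4) Take $\Pi$-fixed points: since the whole construction is natural in $G$, the corner algebras become $C^*_{\Hilb_{X,G}}(X)^\Pi$ and $C^*_{\Hilb_{Y,G}}(Y)^\Pi$ and they are corners of $C^*_{\Hilb'}(X\sqcup Y)^\Pi$, with residual $G/\Pi$-action, so they are $G/\Pi$-equivariantly Morita equivalent in the sense of \cite{MR0419677}.

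The main obstacle I anticipate is fullness of the corners together with compatibility of the equivariant structures: one has to verify that $V_fV_f^*$ is not just a projection but that the ideal it generates in the ambient algebra is everything, which uses relative denseness of $f(\tilde X)$ in $\tilde Y$ (equivalently that $fg$ is close to the identity), and one must be careful that the ``enlarging'' step in (1) does not destroy uniform finiteness of the modules — this is where the hypothesis that $X,Y$ have bounded geometry and the order of finite subgroups of $G$ is uniformly bounded gets used, guaranteeing that $\Hilb_{X,G}$ and $\Hilb_{Y,G}$ are uniformly finite so the dilated modules still are. A second, more bookkeeping-type obstacle is that a coarse equivalence need not be $G$-equivariantly realizable by a point map with bounded fibres without first passing to a cobounded modification; here I would invoke that two $G$-invariant Delone subsets of $M$ are $G$-equivariantly coarsely equivalent via maps that are literally controlled set maps (nearest-point projections), so in the application (which is all that is needed later) this subtlety does not arise, and in general one reduces to this case by composing with such maps. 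Everything else — that conjugation by a controlled $G$-invariant operator preserves controlledness, local compactness, and $\Pi$-invariance — is routine.
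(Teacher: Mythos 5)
Your proposal is correct and takes essentially the same approach as the paper, which simply refers to Theorem~4 of \cite{MR2363428} and says to run that argument with $\tilde{X}$ and $\tilde{Y}$ in place of $X$ and $Y$. You have reconstructed the details of that standard dilation/full-corner argument, correctly noting where bounded geometry and the uniform bound on stabilizer orders keep the dilated modules uniformly finite, and where passing to $\Pi$-fixed points yields the asserted $G/\Pi$-equivariant Morita equivalence.
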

\begin{proof}
The proof is given in the same way as Theorem 4 of \cite{MR2363428} by using $\tilde{X}$ and $\tilde{Y}$ instead of $X$ and $Y$.
\end{proof}

\subsection{Twisted equivariant $\K$-theory of coarse $\Cst$-algebras}
Our main subject in this paper is the topology of $c$-twisted $G$-invariant controlled symmetries on \emph{$(\phi ,c ,\tau)$-twisted $(X,G)$-modules} (a variation of $(X,G)$-modules using $(\phi,c,\tau)$-twisted representations as $(\Hilb, U)$). 
In order to reduce the problem to calculation of twisted equivariant $\K$-groups, we make the following assumption (note that equivariant $\K$-groups are not represented by $G$-invariant operators when $G$ is not compact).
\begin{assump}\label{assump:ext}
There is an extension $1 \to \Pi \to G \to Q \to 1$ such that $Q$ is finite and $(\phi,c,\tau)$ is the pull-back of a twist of $Q$.
\end{assump}
For example, CT-type symmetries satisfy this assumption. In addition to that, we often assume that $\Pi$ is free abelian. 

\begin{exmp}\label{exmp:extRoe}
Let $G:=G' \times N$, where $N$ is a finite group acting on $\Euc ^d$ trivially and $G'$ is a discrete subgroup of the Euclidean group $\Aut (\Euc ^d)$. 
By Theorem 1 and Theorem 2 of \cite{MR574501}, we obtain 
\begin{itemize}
\item a group extension $1 \to \Pi \to G' \to Q' \to 1$ such that $\Pi$ is a cocompact free abelian normal subgroup of rank $n$,
\item a decomposition $\Euc ^d \cong \Euc ^n \times \Euc ^{d-n}$ such that $G'$ acts on $\Euc ^n$ cocompactly and on $\Euc ^{d-n}$ as a $Q'$-action. 
\end{itemize}
We remark that the $Q'$-action on $\Euc ^n$ has a fixed point $x_0 \in \Euc$ and hence it is identified with the induced linear action on $V:=T_{x_0}\Euc ^n$ by the exponential map.
Set $Q:=Q' \times N$ and let $(\phi,c,\tau)$ be a twist of $Q$. Then, $(G,\phi,c,\tau)$ satisfies Assumption \ref{assump:ext}
\end{exmp}

Let $(G,\phi,c,\tau)$ be as Assumption \ref{assump:ext} and let $X$ be a proper metric space on which $G$ acts properly and isometrically. 
For a $(X,G)$-module $\Hilb$, $C^*_\Hilb (X)^\Pi$ has a canonical Real $Q$-$\Cst$-algebra structure. Hereafter, we simply write 
\[C^*_{u,G}(X)^\Pi:=C^*_{\Hilb _{X,G}}(X)^\Pi, \ C^*_G(X) ^\Pi:= C^*_{\Hilb _{X,G}^\infty}(X)^\Pi\]
respectively. 
Since any uniformly finite $(\phi,c,\tau)$-twisted $(X,G)$-module $\Hilb$ is contained in $\Hilb _{X,G} \hotimes \mathscr{V}$ for some finite dimensional $(\phi,c,\tau)$-twisted representation $\mathscr{V}$, a controlled $c$-twisted $G$-invariant symmetry on $\Hilb$ determines an element of the group $\lu \phi \K ^Q_{0,c,\tau}(C^*_{u,G}(X)^\Pi )$.

\subsubsection{Direct calculations}
First of all, we give a calculation of the $\K$-group of uniform Roe algebras and Roe algebras of $\zahl ^d$ by using the Pimsner-Voiculescu exact sequence~\cite{MR587369}. 
For example, since $\K _0(c_b(\zahl)) \cong B(\zahl)$ (where $B(\zahl) \subset \zahl ^\zahl$ is the group of bounded sequences of integers), $\K_0(c_b(\zahl , \Kop)) \cong \zahl ^\zahl$ and $\K_1(c_b(\zahl , \Kop)) \cong 0 \cong \K_1(c_b(\zahl , \Kop))$,
we have exact sequences
\ma{0 \to \K_{1}(C^*_u(|\zahl|)) \to B(\zahl) &\xra{\mathrm{shift}} B(\zahl) \to \K_0 (C^*_u(|\zahl|)) \to 0, \\
0 \to \K_{1}(C^*(|\zahl|)) \to \zahl ^\zahl &\xra{\mathrm{shift}} \zahl ^\zahl \to \K_0 (C^*(|\zahl|)) \to 0.
}
In particular, $\K_0(C^*_u(|\zahl|)) =0$ although $\K_0(C^*_u(|\zahl|))$ has uncountably many basis (Example 3.4 of \cite{Spakula}).

By the same argument, we can calculate several $\KR_*$-groups of uniform Roe algebras and compare them with that of Roe algebras.
\begin{lem}\label{lem:uRoe}
The canonical map
\[ \KR_j(C^*_u(|\zahl ^d|)) \to \KR_j(C^*(|\zahl ^d|)) \]
is an isomorphism for $j=1,2,3,5,6,7$ when $d=1$, $j=1,2,5,6$ when $d=2$ and $j=1,5$ when $d=3$. Moreover, for $j=0,4$,
\[ \KR_j(C^*_u(|\zahl ^d|))/\Im (\iota_u) _* \to \KR_j(C^*(|\zahl ^d|)), \]
where $\iota _u : c_b(\zahl ^d) \to C^*_u(|\zahl ^d|)$ is the inclusion, is an isomorphism when $d=0,1,2,3$. 
\end{lem}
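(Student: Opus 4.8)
The plan is to compute both sides via the Pimsner--Voiculescu (PV) exact sequence applied iteratively to the $\zahl^d$-action, using the fact that $C^*_u(|\zahl^d|) \cong \zahl^d \ltimes c_b(\zahl^d)$ and $C^*(|\zahl^d|) \cong \zahl^d \ltimes c_b(\zahl^d,\Kop)$, and to track how the canonical inclusion $\iota_u$ interacts with these sequences. The key input is the $\KR$-theory of the coefficient algebras: $\KR_*(c_b(\zahl^d;\Kop)) \cong \KR_*(\real)^{\zahl^d}$ (an infinite product), concentrated in degrees $0,1,2,4$ with $\zahl^{\zahl^d}$, $(\Zt)^{\zahl^d}$, $(\Zt)^{\zahl^d}$, $\zahl^{\zahl^d}$ respectively, while $\KR_*(c_b(\zahl^d))$ replaces the $\zahl$-summands and $\Zt$-summands by the corresponding \emph{bounded} sequences (for the $\zahl$-part, $B(\zahl^d)$). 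The shift automorphisms induced by the $\zahl$-generators act coordinatewise on these product groups, so the PV differentials are of the form $1-\mathrm{shift}$ on a product and on the bounded subgroup simultaneously, and the comparison maps $\iota_u$ are just the inclusions of bounded subsequences into full products.

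First I would set up, for each $d=1,2,3$, the $d$-fold iterated PV sequence as a spectral-sequence-type computation: start from $\KR_*(c_b(\zahl^d;\Kop))$ (resp.\ $\KR_*(c_b(\zahl^d))$) and apply $d$ successive PV sequences, each introducing one copy of $1-\mathrm{shift}$. Because the coefficient groups are products (resp.\ bounded-sequence subgroups) over $\zahl^d$ of the fixed groups $\KR_*(\real)$, and $1-\mathrm{shift}$ on $\zahl^{\zahl}$ is surjective with kernel $0$ (an easy telescoping argument: given $(a_n)$, solve recursively, using boundedness only to stay inside $B(\zahl)$ when needed), each PV differential $1-\mathrm{shift}$ is injective when acting on the $\zahl$-summands and on $B(\zahl^k)$-summands; on the $\Zt$-summands it is likewise surjective with trivial kernel in the relevant range. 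The net effect is: after taking all $d$ shifts, the only possible surviving contributions to $\KR_j(C^*(|\zahl^d|))$ come from the ``top'' degree (a single $\zahl$ from the $d$-fold boundary of the $\zahl^{\zahl^d}$ in degree $0$ or $4$), plus lower-degree torsion that gets killed. Concretely this reproduces $\KR_j(C^*(|\zahl^d|)) \cong \KR^{d-j}(\pt)$ as quoted in the introduction, and in particular shows these groups vanish in exactly the degrees $j$ listed in the statement (degrees complementary to $0,1,2,\dots$ mod $8$ shifted by $d$).

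Next I would run the \emph{same} iterated PV computation for $C^*_u$. The crucial point is that $1-\mathrm{shift}$ is \emph{also} injective and has the ``same-sized'' cokernel on the bounded-sequence groups $B(\zahl^k)$ and on the $(\Zt)$-bounded groups — indeed the telescoping solution of $(1-\mathrm{shift})(x)=a$ automatically produces a bounded $x$ when $a$ is bounded and the partial sums telescope, which is precisely the phenomenon already exploited in the excerpt to get $\K_0(C^*_u(|\zahl|))=0$. Therefore the $C^*_u$-side PV computation yields groups isomorphic to the $C^*$-side ones in every degree \emph{except} in degrees $j \equiv 0, 4$, where on the $C^*_u$-side there is an extra contribution: the degree-$0$ (resp.\ degree-$4$) part starts from the bounded-sequence group $B(\zahl^d)$ rather than $\zahl^{\zahl^d}$, and $B(\zahl^d)$ is \emph{not} coinvariant-free under the shifts in the way $\zahl^{\zahl^d}$ is — its cokernel under the full $d$-fold shift is larger. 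By naturality of PV with respect to the inclusion $c_b \hra c_b(\blank,\Kop)$ (which on $\K$-theory is the inclusion of $B(\zahl^d)$ into $\zahl^{\zahl^d}$ in degrees $0,4$ and the identity/an iso in the other degrees), the comparison map $\KR_j(C^*_u) \to \KR_j(C^*)$ is an isomorphism for $j \neq 0,4$ (in the stated ranges), and in degrees $j=0,4$ becomes an isomorphism after quotienting the source by the image of the sub-bounded-sequence group, i.e.\ by $\Im(\iota_u)_*$. I would make the identification of this image precise by chasing the commuting ladder of the two PV sequences.

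The main obstacle I expect is the careful bookkeeping in the iterated PV computation for $d=2,3$: one must verify that all the intermediate $\Ext$-type extension problems split (or at least that the surviving groups are as claimed), control the exact range of $j$ for which the non-$\iota_u$ degrees survive (this is why the statement lists fewer values of $j$ as $d$ grows — higher $d$ pushes more degrees into the range where a $\Zt$ could survive or a differential could fail to be injective), and confirm that the shift differential really is injective on all the relevant bounded-sequence and $\Zt^{\zahl^k}$ groups, not just on $B(\zahl)$. The $d=1$ case is essentially done in the excerpt; the induction step requires knowing that at each stage the class of groups appearing (finite products of $\zahl$, $\Zt$, $B(\zahl^k)$, $B(\zahl^k;\Zt)$, and their shift-cokernels) is closed under the operation ``take another PV sequence,'' which I would verify by a direct lemma on $1-\mathrm{shift}$ acting on functions $\zahl \to M$ for $M$ any of these building blocks. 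Once that closure lemma is in hand, the two computations run in parallel and the comparison statement follows by naturality.
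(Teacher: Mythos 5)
Your overall strategy is the same as the paper's: apply the Pimsner--Voiculescu exact sequence to $\zahl^d \ltimes c_b(\zahl^d)$ and $\zahl^d \ltimes c_b(\zahl^d,\Kop)$, using that the coefficient $\KR$-groups agree except in degrees $0,4$ (where they are $B(\zahl^d)$ versus $\zahl^{\zahl^d}$), and chase naturality for the map $(\iota_u)_*$. So the route is right. However, the central computation you carry out is wrong, and in a way that would make the whole argument produce the wrong answer if pushed through.

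You assert repeatedly that $1-\mathrm{shift}$ is injective --- that it has ``kernel $0$'' on $\zahl^{\zahl}$, that it is ``injective when acting on the $\zahl$-summands and on $B(\zahl^k)$-summands,'' and that it has ``trivial kernel'' on the $\Zt$-summands. None of these is true: the kernel of $1-\mathrm{shift}$ on $\zahl^{\zahl}$, on $B(\zahl)$, and on $\Zt^{\zahl}$ consists of the constant sequences, i.e., $\zahl$, $\zahl$, and $\Zt$ respectively. This is not a cosmetic slip; it is exactly the mechanism by which the Roe algebra acquires nonzero $\K$-theory. If the kernels were trivial, then since the shift is surjective on $\zahl^{\zahl^d}$ and $\Zt^{\zahl^d}$, the PV sequence would force $\KR_j(C^*(|\zahl^d|)) = 0$ in every degree, contradicting $\KR_j(C^*(|\zahl^d|)) \cong \KR_{j-d}(\real)$ which you yourself quote. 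The survivors you describe as ``a single $\zahl$ from the $d$-fold boundary'' are in fact exactly the constants in the kernel slot of PV, fed iteratively through $d$ stages (and they are not only $\zahl$; the degrees $1,2$ feed a $\Zt$ into the kernel slot, which is why $\KR_2, \KR_3$ of the Roe algebra of $\zahl$ are $\Zt$, not ``torsion that gets killed'').

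Your proposal is also internally inconsistent about surjectivity on the bounded-sequence group. In the second paragraph you say the telescoping solution of $(1-\mathrm{shift})x=a$ ``automatically produces a bounded $x$ when $a$ is bounded,'' which would make $1-\mathrm{shift}$ surjective on $B(\zahl)$; but the constant sequence $a_n=1$ is bounded and has no bounded preimage, so the map is not surjective there. In the third paragraph you correctly say its cokernel on $B(\zahl^d)$ is larger than on $\zahl^{\zahl^d}$. These two claims contradict each other, and the second one is the correct one: the failure of surjectivity on $B(\zahl^d)$ is precisely what produces $\Im(\iota_u)_*$. (The displayed assertion ``$\K_0(C^*_u(|\zahl|))=0$'' in the excerpt, which you lean on, is a misprint; the following parenthesis and the PV sequence just above it make clear that $\K_0(C^*_u(|\zahl|))$ is the cokernel of $1-\mathrm{shift}$ on $B(\zahl)$, hence uncountable, while it is $\K_0(C^*(|\zahl|))$ that vanishes.) To repair the proposal you would replace the ``trivial kernel'' claims by the correct statement that the kernel is the constants at every stage, observe that these constants agree for $c_b$ and $c_b(\cdot,\Kop)$ in all degrees, and track the cokernel contribution from $B(\zahl^d)$ in degrees $0,4$ through the iterated sequences as the source of $\Im(\iota_u)_*$; that is what the paper's short proof is gesturing at with ``the shift homomorphisms are surjective.''
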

\begin{proof}
They are direct consequences of the Pimsner-Voiculescu exact sequence \cite{MR587369} for the crossed product $C^*_u(|\zahl^d|) \cong \zahl ^d \ltimes c_b(\zahl ^d)$. Note that $\KR_*(c_b(\zahl ^d)) \cong (\Zt)^{\zahl ^d}$ for $\ast=1,2$ and hence the shift homomorphisms are surjective.
\end{proof}

\subsubsection{Coarse Baum-Connes isomorphism}
Let $M$ be a uniformly contractible metric space with a proper isometric $G$-action and let $X$ be a $G$-invariant Delone subset of $M$. 
Following to \cite{MR1451759} and \cite{MR2565716}, we consider the \emph{twisted equivariant coarse assembly map} 
\[\mu_{X,G}: \lu \phi \K _{*,c,\tau}^G(M) \to \lu \phi \K _{*,c,\tau}^Q(C^*_G(X)^\Pi) \]
defined as following. Let $[\Hilb, \pi, F]$ be a $\lu \phi \K _{*,c,\tau}^G$-cycle of $C_0(M)$.
As in Lemma 9 of \cite{MR2565716}, we can replace $F$ with $F'$ which is $G$-invariant and controlled.
Moreover, as in Proposition 6.3.12 of \cite{MR1817560}, we regard $\Hilb$ as a $(\phi,c,\tau)$-twisted $(X,G)$-module by the $\ast$-representation $\pi'$ given by $\pi' (\delta _x):= \pi (\chi _{F_x})$, where $M=\bigcup _{x \in X} F_x$ is a uniformly bounded Borel partition such that $F_x \cap X=\{x\}$ and $g(F_x)=F_{gx}$. 

By taking direct sum with a degenerate cycle, we may assume that $\Hilb \cong \Hilb _{X,G}^\infty \hotimes \mathscr{V} \hotimes \hat{\real} ^{2,2}_c$ as $(\phi,c,\tau)$-twisted $(X,G)$-modules, where $\mathscr{V}$ is a finite dimensional $(\phi,c,\tau)$-twisted representation.
Therefore, we get
\[[F'] \in \lu \phi \K_{1,c,\tau}^Q(\mathcal{M}(C^*_G (X)^\Pi)/C^*_G(X)^\Pi)_\vD. \]
Here, for the definition of $\hat{\real} ^{2,2}_c$ and $\lu \phi \K_{1,c,\tau}^Q(\blank)_\vD$, see Definition 5.4 of \cite{Kubota1}. 
Finally, we define the coarse assembly map as
\[\mu_{X,G} [\Hilb , \pi , F] := \partial [F'] \in \lu \phi \K_{0,c,\tau}^Q(C^*_G(X)^\Pi). \]

\begin{thm}\label{thm:cBC}
When $M=\Euc ^d$, $\mu _{X,G}$ is an isomorphism. 
\end{thm}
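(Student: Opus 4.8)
The plan is to establish the equivariant coarse Baum--Connes isomorphism for $M = \Euc^d$ by reducing it, via the structural results of Example \ref{exmp:extRoe} and Assumption \ref{assump:ext}, to cases where the assembly map is already known. First I would invoke the decomposition from Example \ref{exmp:extRoe}: after replacing $G$ by a group of the stated form and splitting $\Euc^d \cong \Euc^n \times \Euc^{d-n}$ with $G$ acting cocompactly on $\Euc^n$ and through the finite quotient $Q'$ on $\Euc^{d-n} \cong V$, one has $X \simeq_G \Pi \times X'$ coarsely, where $X'$ is a $Q$-invariant Delone subset of $V$ and $\Pi$ acts cocompactly on the first factor. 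By Lemma \ref{lem:Morita}, $C^*_{u,G}(X)^\Pi$ and $C^*_G(X)^\Pi$ only depend on the coarse type, so it suffices to treat the model $X = |\Pi| \times X'$. The twisted equivariant $\K$-homology $\lu\phi\K^G_{*,c,\tau}(\Euc^d)$ likewise factors through the Green--Julg/induction machinery of Remark \ref{remk:lind}, Example \ref{exmp:ext}, and Example \ref{exmp:eqRoe}, identifying $C^*_G(\Euc^n \text{-direction})^\Pi$ with (a stabilization of) $\Kop(\ell^2_\Pi(G))$ as a $Q$-$\Cst$-algebra.

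The key steps, in order, are: (1) rewrite the coarse assembly map $\mu_{X,G}$ using a K\"unneth-type / external-product factorization over the splitting $\Euc^d = \Euc^n \times V$, so that the $\Euc^n$-factor contributes the cocompact piece and the $V$-factor contributes a genuinely equivariant (finite group $Q$) coarse assembly map for the Euclidean space $V$ with its linear $Q$-action; (2) for the cocompact $\Euc^n$-factor, identify $\mu$ with the (twisted, Green--Julg transported) descent of the classical Baum--Connes assembly map for the free abelian group $\Pi$, which is an isomorphism by the Pimsner--Voiculescu computation already recorded in the excerpt (cf.\ Lemma \ref{lem:uRoe} and the surrounding discussion, extended from $\zahl^d$ to general $\Pi \cong \zahl^n$); (3) for the $V$-factor, reduce to the statement that the $Q$-equivariant coarse assembly map for $\Euc^m$ with a linear finite-group action is an isomorphism — here $Q$ is finite, so the relevant Roe algebra is, after Morita equivalence, a matrix algebra over a crossed product by a finite group, and the assembly map is computed by the equivariant Bott periodicity / Kasparov's $\gamma$-element argument, or equivalently by noting $\lu\phi\K^Q_{*,c,\tau}(\Euc^m) \cong \lu\phi\K^{Q}_{*+m,c,\tau}(\pt)$ and matching this with the $\K$-theory of the invariant Roe algebra via the computations in Example \ref{exmp:ext}, Example \ref{exmp:ind} and Table \ref{table:CT}; (4) assemble (2) and (3) by a five-lemma / compatibility argument across the product, using naturality of $\mu$ with respect to the coarse Mayer--Vietoris sequence and to induction $\lInd_F^Q$.

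I expect the main obstacle to be step (1)–(3): proving that the twisted equivariant coarse assembly map is multiplicative under the product decomposition $\Euc^d = \Euc^n \times V$ in a way that is compatible with the Green--Julg isomorphism and with the $\vD$-version of the $\K$-group used to define $\mu_{X,G}$. Concretely, one must show that the external Kasparov product of the Bott class on $V$ (giving equivariant Bott periodicity for the finite group $Q$) with the cocompact assembly isomorphism on $\Euc^n$ recovers $\mu_{X,G}$ up to the identifications above; this requires careful bookkeeping of the twists $(\phi,c,\tau)$, the auxiliary Clifford-type factor $\hat{\real}^{2,2}_c$, and the degenerate-cycle normalization in the definition of $\mu$. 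The cocompact case (step 2) is essentially the classical computation and the finite-group case (step 3) is standard equivariant index theory; the genuinely new content is gluing them, and that is where I would spend the bulk of the argument. An alternative, possibly cleaner, route avoiding an explicit product formula is to run the coarse Mayer--Vietoris / Eilenberg-swindle induction on $d$ directly: decompose $\Euc^d$ into two $G$-invariant (equivariantly $\omega$-excisive) half-spaces whose intersection is coarsely $\Euc^{d-1}$-like, use naturality of $\mu$ with respect to this partition on both the $\K$-homology and Roe-algebra sides, and invoke the inductive hypothesis together with the fact that $\mu$ for a ``flasque'' half-space is an isomorphism onto zero groups; the base case $d=0$ is Example \ref{exmp:eqRoe} combined with the representation-ring computations. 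Either way the heart of the proof is the same: reduce everything to the cocompact case plus equivariant Bott periodicity, both of which are available from the machinery already set up in Section \ref{section:2}.
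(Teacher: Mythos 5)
Your proposal correctly identifies the two ingredients the paper also reduces to (the cocompact case, and an equivariant Bott-periodicity computation for the direction on which the finite quotient $Q$ acts), but the technical device you use to glue them is genuinely different from the paper's. The paper follows Yu's localization-algebra strategy from \cite{MR1451759}: it introduces a \emph{partial} localization algebra $C^*_{G,L'}(D)$ that localizes only in the $\Euc^{d-n}$-direction (via the projection $p\colon \Euc^d\to\Euc^n$), factors $\mu_{\Euc^d,G}$ as $(\ev_0)_*\circ\mu_{\Euc^d,G,L'}$, proves that the localized assembly map is an isomorphism by a Mayer--Vietoris induction over the skeleta $K^{(k)}=\tilde K^{(k)}\times\Euc^n$ of a $G$-equivariant triangulation $\tilde K$ of $\Euc^{d-n}$ (base case $K^{(0)}$ being cocompact), and handles $(\ev_0)_*$ by a scaling argument on $\Euc^{d-n}$. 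Your primary route --- a K\"unneth/external-product factorization of $\mu$ across $\Euc^d\cong\Euc^n\times V$ --- is conceptually plausible but, as you correctly flag, rests on a multiplicativity property of the coarse assembly map that neither you nor the paper actually establishes; the paper sidesteps it entirely by inducting on skeleta rather than factoring $\mu$ as an external product. Your alternative route (coarse Mayer--Vietoris induction on $d$ via half-spaces) is closer in spirit but still not what the paper does: the paper's induction is on simplicial skeleta \emph{inside} the localization-algebra framework, which avoids dealing with equivariant flasqueness of half-spaces when $Q$ acts nontrivially. Finally, for the cocompact base case you invoke the Pimsner--Voiculescu computation (Lemma \ref{lem:uRoe}); the paper instead identifies $\mu_{X,G}$ with the Real Baum--Connes assembly map with coefficients in $\Kop(\mathscr V)_\real$ and cites the Haagerup property of $G$. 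Both work for $\Pi$ free abelian, but the Haagerup route applies verbatim to the (possibly non-abelian) cocompact crystallographic group $G$ before any reduction, and fits more cleanly with the descent/induction machinery already set up in Section~\ref{section:2}.
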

\begin{proof}[Sketch of the proof]
In the proof, we use the Roe algebra for general (not necessarily discrete) metric spaces. For the definition, see for example Definition 6.3.8 of \cite{MR1817560}. We only remark that $C^*_G(X)^\Pi$ and $C^*_G(\Euc^d)^\Pi$ have the same (twisted equivariant) $\K$-theory (Proposition 6.3.12 of \cite{MR1817560}).

First, consider the case that the $G$-action on $\Euc ^d$ is cocompact. 
By (\ref{form:detw}), (\ref{form:extK}) and the same argument as in \cite{MR1909514}, $\mu_{X,G}$ is isomorphic to the Real Baum--Connes assembly map with coefficient in $\Kop(\mathscr{V})_\real$, which is an isomorphism since $G$ has the Haagerup property \cite{MR1487204}.

For general $G$, the proof is given in a similar way as in \cite{MR1451759}. Let us choose a decomposition $\Euc ^d \cong \Euc ^{d-n} \times \Euc ^n$ as in Example \ref{exmp:extRoe}.
For any $G$-subspace $D$ of $\Euc ^d$, consider the ``partial'' localization algebra $C^*_{G,L'}(D)$, that is, the closure of the $\ast$-algebra of elements $f \in C([0,\infty), C^*(D))$ such that there is $R \in C_0[0,\infty)$ such that $\xi f(t)\eta =0$ if the distance of $p (\supp \xi)$ and $p (\supp \eta)$ is larger than $R(t)$, where $p : \Euc ^d \to \Euc ^n$ is the projection.
In the same way as \citelist{\cite{MR1451759}\cite{MR2565716}}, we get 
\[\mu _{D,G,L'} : \lu \phi \K ^G_{*,c,\tau}(D) \to \lu \phi \K^Q_{*,c,\tau}(C^*_{G,L'}(D))\]
such that $\mu_{\Euc ^d,G}=(\ev_0)_* \circ \mu _{\Euc ^d, G,L'}$, where $\ev_0$ is the evaluation map. 

It suffices to show that $\mu _{\Euc ^d,G,L'}$ and $(ev_0)_*$ are isomorphisms.
Choose a simplicial decomposition $\tilde{K}$ of $\Euc ^{d-n}$ compatible with the $G$-action and let $K^{(k)}:=\tilde{K}^{(k)} \times \Euc ^{n}$.
By a Mayer-Vietoris argument as in the proof of Theorem 3.2 of \cite{MR1451759},$\mu _{K^{(k)},G,L'}$ are isomorphisms for any $k$. 
Note that the isomorphism for $\mu _{K^{(0)},G,L'}$ follows from the Real Baum--Connes isomorphism discussed above.
We can check that $(\ev_0)_*$ is an isomorphism in the same way as in Section 4 of \cite{MR1451759} by using the scaling map on $\Euc ^{d-n}$.
\end{proof}

\begin{lem}\label{lem:cliff}
Let $G$ be a finite group and let $V$ be a $d$-dimensional real representation of $G$. Then, there is $c'' \in \Hom (G,\Zt)$ and $\tau'' \in H^2(G,\Zt)$ such that the twisted equivariant $\K$-homology group $\lu\phi \K^G_{*,c,\tau}(V)$ is isomorphic to $\lu \phi \K_G^{\dim V-*,c',\tau'}(\pt)$ where $c':=c+c''$ and $\tau'=\tau + \tau''+\epsilon (c,c'')$. Moreover, $c''=0$ if and only if $\pi$ preserves the orientation of $V$ and $\tau''=0$ if and only if $G$ is a Pin representation (i.e.\ $G \to O(V)$ gives rise to $G \to \mathop{\mathrm{Pin}}(V)$). 
\end{lem}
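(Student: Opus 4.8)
The plan is to treat Lemma \ref{lem:cliff} as an equivariant Thom isomorphism in twisted equivariant $\K$-homology, with the correction terms $c''$ and $\tau''$ recording, respectively, the orientation character and the $\mathrm{Pin}$ obstruction of the representation $\pi\colon G\to O(V)$. Concretely I would (i) replace $C_0(V)$ by the Clifford algebra $\Cl(V)$ via equivariant Bott periodicity, (ii) decompose $\Cl(V)$, together with its $G$-action, as $\Cl_{0,d}\hotimes\Kop_G^{c'',\tau''}$ up to a $G$-equivariant graded Morita equivalence, and (iii) absorb the two tensor factors into the coefficient $\real$ and into the Clifford degree. For step (i): fix a $G$-invariant inner product on $V$ and let $\Cl(V)$ be the associated Clifford algebra, a $\Zt$-graded Real $G$-$\Cst$-algebra on which $G$ acts by the graded $\ast$-automorphisms induced by $\pi(g)\in O(V)$. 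Kasparov's equivariant Bott periodicity theorem gives a $\KK^G$-equivalence $C_0(V)\hotimes\Cl(V)\sim_{\KK^G}\real$; working inside the twisted equivariant $\KK$-theory of Remark \ref{remk:KK}, with the Real structure and the twist $(\phi,c,\tau)$ carried through unchanged (and the whole statement reducible to genuine Real $KK$-theory via (\ref{form:detw}) and (\ref{form:GJ})), this yields
\[
\lu\phi\K^G_{*,c,\tau}(V)\ \cong\ \lu\phi\K^G_{*-\dim V,\,c,\tau}\bigl(\Cl(V)\bigr),
\]
the degree shift being the one by which $\Cl_{0,d}$ differs from $\real$ in the $8$-periodic indexing.

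For step (ii): inside $\Cl(V)^{\times}$ sits the group $\mathrm{Pin}(V)$ of products of unit vectors of $V$, fitting into a central extension $1\to\Zt\to\mathrm{Pin}(V)\to O(V)\to1$ and implementing the $O(V)$-action on $\Cl(V)$ by graded-twisted conjugation. Pulling this back along $\pi$ gives a $\Zt$-central extension $1\to\Zt\to\widetilde G\to G\to1$; its class is the desired $\tau''\in H^2(G;\Zt)$, which vanishes exactly when $\pi$ lifts to $\mathrm{Pin}(V)$, i.e.\ when $\pi$ is a $\mathrm{Pin}$ representation. The tautological map $\widetilde G\to\mathrm{Pin}(V)\subset\Cl(V)$ implements the $G$-action on $\Cl(V)$; the element over $g$ is even or odd according to whether $\pi(g)$ is a product of an even or odd number of unit vectors, i.e.\ according to $\det\pi(g)\in\{\pm1\}$, so with $c'':=\det\circ\pi\colon G\to\Zt$ the group $\widetilde G$ acts on a spinor module of $\Cl(V)$ as a $(\phi,c'',\tau'')$-twisted representation. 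Unwinding the definition of $\lu\phi\Kop_G^{c'',\tau''}$ from Section \ref{section:2}, this produces a $G$-equivariant graded Morita equivalence $\Cl(V)\sim\Cl_{0,d}\hotimes\Kop_G^{c'',\tau''}$ with $G$ acting trivially on the first factor, and the characterisations of $c''$ and $\tau''$ claimed in the lemma are exactly what comes out.

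Step (iii) is then formal: by Morita invariance of twisted equivariant $\K$-theory, the isomorphism $\Kop_G^{c_1,\tau_1}\hotimes\Kop_G^{c_2,\tau_2}\cong\Kop_G^{c,\tau}$ (with $c=c_1+c_2$, $\tau=\tau_1+\tau_2+\epsilon(c_1,c_2)$) recalled in Section \ref{section:2}, and the standard identification of $\K$-homology and $\K$-theory of a point, the right-hand side of the displayed isomorphism becomes $\lu\phi\K_G^{\dim V-*,\,c',\tau'}(\pt)$ with $c'=c+c''$ and $\tau'=\tau+\tau''+\epsilon(c,c'')$; the sign of the degree shift (from $*-\dim V$ to $\dim V-*$) is pinned down by the $8$-periodicity $\Cl_{0,d}\hotimes\Cl_{d,0}\sim\real$.

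The standard topological inputs (equivariant Bott periodicity, Morita invariance) cause no trouble; the main obstacle is the bookkeeping in (ii)–(iii): checking that the spinor-module construction identifies the twist of $\Cl(V)$ with precisely $(c'',\tau'')$ and not some cohomologous-but-shifted variant, and that the $\overline{\tau}$-corrections appearing in (\ref{form:detw}) together with the various $\epsilon$-terms assemble exactly into $\tau'=\tau+\tau''+\epsilon(c,c'')$. This hinges on a consistent choice of sign conventions — in particular $v^2=+q(v)$ versus $v^2=-q(v)$ in $\Cl(V)$, equivalently $\mathrm{Pin}^{+}(V)$ versus $\mathrm{Pin}^{-}(V)$ — which must be matched to the $\Cl_{0,d}$-versus-$\Cl_{d,0}$ conventions used elsewhere in the paper.
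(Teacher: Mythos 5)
Your proof takes essentially the same route as the paper: invoke Kasparov's equivariant Bott periodicity (Theorem 7 of the reference the paper cites, giving $\lu\phi\KK^G_{c,\tau}(C_0(V),\real)\cong\lu\phi\KK^G_{c,\tau}(\real,\Cl(V))$), and then identify the extra twist $(c'',\tau'')$ of the $G$-action on $\Cl(V)$ through the $\mathrm{Pin}(V)\to O(V)$ extension pulled back along $\pi$, so that $\tau''$ is the obstruction to a Pin lift and $c''=\det\circ\pi$ records the orientation character, in agreement with the ``Moreover'' clause. Up to bookkeeping these are the same inputs the paper uses, and your identification of $(c'',\tau'')$ matches the paper's.

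There is, however, one genuine gap in your step (ii). You assert that the $G$-action on $\Cl(V)$ is captured by the twisted action of $\widetilde G$ on ``a spinor module of $\Cl(V)$'' and then extract a $G$-equivariant graded Morita equivalence $\Cl(V)\sim\Cl_{0,d}\hotimes\Kop_G^{c'',\tau''}$. But the Real Clifford algebra $\Cl(V)$ has a unique $\Zt$-graded irreducible spinor module $\Delta_V$ with $\Cl(V)\cong\Kop(\Delta_V)$ only when $\dim V\equiv 0\pmod 8$; for other $d$ the graded Real algebra $\Cl_{0,d}$ is not a matrix algebra (it can be, e.g., $\comp$, $\ham$, or a direct sum), so the module you want does not exist, and the asserted Morita equivalence is not directly available. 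The paper deals with this exactly by first taking a direct sum with enough trivial one-dimensional representations so that $\dim V=8k$; then $\Cl(V)\cong\Kop(\Delta_V)$ literally holds, the twisted representation on $\Delta_V$ makes sense, and the degree shift trivializes modulo $8$. You should insert this stabilization step at the start, after which your step (ii) is precisely the paper's argument. Relatedly, your step (i) introduces a degree shift $*\mapsto *-\dim V$ that does not appear in Kasparov's isomorphism (which is degree-preserving, with the full shift coming from the $\Cl_{0,d}$ factor in step (ii)); you acknowledge the sign-convention issue at the end but do not resolve it, and stabilizing to $\dim V=8k$ makes this a nonissue as well.
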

\begin{proof}
By taking direct sum with trivial representations, we may assume that $\dim V=8k$. Then, the $\phi$-twisted $G$-action on $\Cl (V) \cong \Kop (\Delta _{V})$, where $\Delta _{V}$ is the unique $\Zt$-graded representation of $\Cl(V)$, gives rise to a $(\phi,c'',\tau'')$-twisted representation of $G$ on $\Delta _{V}$. It is actually induced from $G^{\tau''} \to \mathop{\mathrm{Pin}}(V)$ where the action of $g \in \mathop{\mathrm{Pin}}(V)$ on $\Delta _V$ is even if and only if $g \in \mathop{\mathrm{Spin}}(V)$ (see Section 1.2 of \cite{LawsonMichelsohn1989}).  

By Theorem 7 of \cite{MR582160}, we have the isomorphism $\lu \phi \KK^G_{c,\tau}(C_0(V),\real) \cong \lu \phi \KK^G_{c,\tau}(\real, \Cl (V))$. Now we obtain the desired isomorphism by (\ref{form:detw}).
\end{proof}

\begin{cor}\label{cor:cBC}
Assume that $\Pi$ is free abelian. Let $n:=\rank \Pi$, $\nu \in H^2_Q(B_\Pi, \mathbb{T}_\phi)$ as in Example \ref{exmp:ext}, $V:=T_{x_0}\Euc ^{d-n}$, $(c'',\tau'')$ the twist corresponding to $\Cl (V)$ and $c':=c+c''$, $\tau':=\tau +\tau'' +\epsilon (c,c'')$.
Then, there is an isomorphism
\[\Psi_{X,G} :\lu \phi \K _{*,c,\tau}^Q(C^*_G(X)^\Pi) \to \lu \phi \K ^{-*+n,c',\tau'+\nu}_Q(B_\Pi). \]
\end{cor}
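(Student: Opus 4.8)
The plan is to combine Theorem \ref{thm:cBC} with the computation of twisted equivariant $\K$-homology of the Euclidean space via Lemma \ref{lem:cliff} and the ``separation of variables'' decomposition $\Euc^d \cong \Euc^n \times \Euc^{d-n}$ from Example \ref{exmp:extRoe}. First I would invoke Theorem \ref{thm:cBC} to replace the left-hand side: since $X$ is a $G$-invariant Delone subset of $M=\Euc^d$, the twisted equivariant coarse assembly map $\mu_{X,G}$ gives an isomorphism $\lu \phi \K^Q_{*,c,\tau}(C^*_G(X)^\Pi) \cong \lu \phi \K^G_{*,c,\tau}(\Euc^d)$. So it suffices to identify the latter $\K$-homology group with $\lu \phi \K^{-*+n,c',\tau'+\nu}_Q(B_\Pi)$.

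Next I would use the $G$-equivariant splitting $\Euc^d \cong \Euc^n \times \Euc^{d-n}$, where $G'$ acts on $\Euc^n$ cocompactly with $\Pi$ a cocompact lattice and acts on $\Euc^{d-n}$ through the finite quotient $Q'$ fixing a point $x_0$, identifying $\Euc^{d-n}$ with $V=T_{x_0}\Euc^{d-n}$ via the exponential map; the finite factor $N$ acts trivially. Then $\lu \phi \K^G_{*,c,\tau}(\Euc^d) \cong \lu \phi \K^G_{*,c,\tau}(\Euc^n \times V)$. Applying Lemma \ref{lem:cliff} to the finite group $Q$ acting on $V$ (the $\Pi$-action on $V$ is trivial, so this is genuinely a $Q$-representation) produces the Clifford twist $(c'',\tau'')$ and reduces the $V$-factor: $\lu \phi \K^G_{*,c,\tau}(\Euc^n \times V) \cong \lu \phi \K^G_{*-\dim V,c',\tau'}(\Euc^n)$ with $c'=c+c''$, $\tau'=\tau+\tau''+\epsilon(c,c'')$. (Here I would be careful that the Clifford-symbol reduction is compatible with the $G$-action on the $\Euc^n$ factor, which it is because $G$ acts on $V$ only through $Q$ and the isomorphism of Lemma \ref{lem:cliff} is $G$-natural.) Since $\dim V = d-n$, the degree shifts by $d-n$; combined with a Poincaré-type duality it will come out as $-*+n$ after accounting for the cocompact factor. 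Finally, for the cocompact factor $\Euc^n$ with the lattice $\Pi \trianglelefteq G$, $\Euc^n$ is $G$-equivariantly coarsely equivalent to $|G'|$ (hence to $\Pi$), and one identifies $\lu \phi \K^G_{*,c',\tau'}(\Euc^n)$ with $\lu \phi \K^{*,c',\tau'}_Q(\Kop(\ell^2_\Pi(G)))$ via the cocompact case of the coarse Baum–Connes isomorphism (the first, cocompact, half of the proof of Theorem \ref{thm:cBC}, together with Example \ref{exmp:eqRoe}); by the computation (\ref{form:extK}) in Example \ref{exmp:ext} this is $\lu \phi \K^{*,c',\tau'+\nu}_Q(B_\Pi)$, where $\nu \in H^2_Q(B_\Pi,\mathbb{T}_\phi)$ is exactly the twist measuring the non-splitting of $1 \to \Pi \to G \to Q \to 1$. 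Assembling the degree bookkeeping (a $\K$-homology-to-$\K$-theory flip on $B_\Pi$ contributing the minus sign, plus the $+n$ from Poincaré duality on the $n$-torus direction and the $-\dim V=-(d-n)$ Clifford shift, which together give $-*+n$) yields the claimed isomorphism $\Psi_{X,G}$.

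The main obstacle I anticipate is the degree and twist bookkeeping: making sure that the shift by $\dim V$ from Lemma \ref{lem:cliff}, the $\K$-homology/$\K$-theory duality on the $B_\Pi$ side, and the Bott periodicity built into the definition of the twisted groups all combine to give precisely $-*+n$ and precisely $\tau'+\nu$ (rather than, say, $\overline{\tau'}+\nu$ or a shift by $\epsilon(c',\cdot)$). In particular one must track that the twist $\nu$ appearing in (\ref{form:extK}) is unaffected by the Clifford reduction on the transverse factor $V$ — which holds because $V$ carries a trivial $\Pi$-action so the extension class is untouched — and that passing from the coarse assembly map's target (an equivariant Roe algebra of $X$) to the uniform/cocompact model does not introduce extra $\Kop$-stabilization ambiguities. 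A secondary point needing care is that Theorem \ref{thm:cBC} is stated for $C^*_G(X)^\Pi$ while the coarse equivalence and the identification $C^*_{\ell^2(G)}(|G|)^\Pi \cong_Q \Kop(\ell^2_\Pi(G))$ of Example \ref{exmp:eqRoe} must be threaded through compatibly with the $Q$-action and the Real structure; Lemma \ref{lem:Morita} handles the change of model up to $G/\Pi = Q$-equivariant Morita equivalence, which suffices since twisted equivariant $\K$-theory is Morita-invariant. Once these compatibilities are in place, the isomorphism $\Psi_{X,G}$ is simply the composite of the displayed isomorphisms.
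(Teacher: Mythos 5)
Your argument coincides with the paper's: invoke Theorem \ref{thm:cBC} to pass to $\lu \phi \K^G_{*,c,\tau}(\Euc^d)$, split $\Euc^d \cong \Euc^n \times V$ as in Example \ref{exmp:extRoe}, absorb the linear $V$-factor into the twist via Lemma \ref{lem:cliff}, and finish by applying the coarse Baum--Connes isomorphism for the cocompact factor $\Euc^n$ together with Examples \ref{exmp:eqRoe} and \ref{exmp:ext}. Your explicit remarks that Lemma \ref{lem:cliff} applies because $G$ acts on $V$ only through the finite quotient $Q$, and that the degree and twist bookkeeping is the delicate point, are both on target.
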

In particular, for a product CT-type symmetry $(G \times \mathscr{A},\phi,c,\tau)$ such that $G$ is finite and acts on $\Euc ^d$ by a Spin representation,  $\lu \phi \K^{G \times \mathscr{A}}_{*,c,\tau}(C^*_G(X))$ is written in terms of the representation ring of $G$ as indicated in Example \ref{exmp:rep}.
\begin{proof}
By Theorem \ref{thm:cBC} and Lemma \ref{lem:cliff}, we obtain the isomorphism
\[ \lu \phi \K _{*,c,\tau}^Q(C^*(X)^\Pi ) \xleftarrow{\mu_{X,G}} \lu \phi \K _{*,c,\tau}^G (\Euc ^d) \cong \lu \phi \K _{*-n,c',\tau'}^G (\Euc ^{d-n}) \xrightarrow{\mu_{X',G}} \lu \phi \K _{*,c',\tau'}^Q(C^*(X')^\Pi ), \]
where $X'$ is a $G$-invariant Delone subset of $\Euc ^{d-n}$. It is the desired isomorphism by Example \ref{exmp:eqRoe}.
\end{proof}

\subsubsection{Coarse Mayer-Vietoris exact sequence}
Here we review the coarse analogue of the Mayer-Vietoris exact sequence introduced by Higson--Roe--Yu~\cite{MR1219916} (see also \cite{mathKT12120241}). 

\begin{thm}[Section 5 of \cite{MR1219916}]\label{thm:MV}
There is an exact sequence
\[
\begin{array}{l}
\cdots \to \lu \phi \K _{*,c,\tau}^{Q}(C^*_\Hilb (Z \subset X)^\Pi) \to \lu \phi \K _{*,c,\tau}^{Q}(C^*_\Hilb (Y_1 \subset X)^\Pi) \oplus \lu \phi \K _{*,c,\tau}^{Q}(C^*_\Hilb (Y_2 \subset X)^\Pi) \to \\
\hspace{1.5em} \to \lu \phi \K _{*,c,\tau}^{Q}(C^*_\Hilb (X)^\Pi) \xra{\partial_{\mathrm{MV}}}\lu \phi \K _{*-1,c,\tau}^{Q}(C^*_\Hilb (Z \subset X)^\Pi) \to \cdots.
\end{array}
\]
The boundary map $\partial _{\mathrm{MV}}: \lu \phi \K _{0,c,\tau}^{Q}(C^*_\Hilb (X)^\Pi) \to \lu \phi \K _{-1,c,\tau}^{Q} (C^*_\Hilb (Z)^\Pi)$ is given by
\[ \partial _{\mathrm{MV}}[s]=[\sigma (\hat{s})] \in \lu \phi \K _{-1,c,\tau}^{Q} (C^*_\Hilb (Z \subset X)^\Pi) \]
where $\hat{s}:=\pi (\chi _{Y_1})s \pi (\chi _{Y_1}) \in C^*_\Hilb (Y_1)$.% and $\mu : C^*_\Hilb (Y_1) \to M(C^*_\Hilb (Z \subset Y_1))$ is the canonical inclusion. 
\end{thm}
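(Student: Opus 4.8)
The plan is to deduce the sequence from the standard Mayer--Vietoris principle for a pair of closed two-sided ideals in a $\Cst$-algebra, as in Section 5 of \cite{MR1219916}, checking along the way that every construction respects the Real, $\Zt$-graded, $\phi$-twisted $Q$-equivariant structure so that the long exact sequence of Theorem 5.14 of \cite{Kubota1} applies. Fix the $G$-invariant $\omega$-excisive partition $X=Y_1\cup Y_2$ with $Z:=Y_1\cap Y_2$, and write $A:=C^*_\Hilb(X)^\Pi$, $I:=C^*_\Hilb(Y_1\subset X)^\Pi$, $J:=C^*_\Hilb(Y_2\subset X)^\Pi$; as noted immediately after Definition \ref{defn:Roe}, $I$ and $J$ are closed ideals of $A$, and they inherit its Real $\Zt$-graded $\phi$-twisted $Q$-$\Cst$-algebra structure because the cut-down projections $\pi(\chi_{Y_i})$ are $G$-invariant, homogeneous of degree zero and real.

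First I would establish the two identities $I+J=A$ and $I\cap J=C^*_\Hilb(Z\subset X)^\Pi$. For the former, any controlled, locally compact, $\Pi$-invariant operator $T$ decomposes as $T=\pi(\chi_{Y_1})T+\pi(\chi_{X\setminus Y_1})T$ with the two summands lying in $I$ and $J$ respectively (using $X\setminus Y_1\subseteq Y_2$); since such $T$ are dense in $A$ and the sum of two closed ideals in a $\Cst$-algebra is closed, $I+J=A$. The inclusion $C^*_\Hilb(Z\subset X)^\Pi\subseteq I\cap J$ is clear. For the reverse inclusion one uses the approximation $\pi(\chi_{\Pen(Y_i,R)})T\pi(\chi_{\Pen(Y_i,R)})\to T$ valid for $T\in C^*_\Hilb(Y_i\subset X)^\Pi$: given $T\in I\cap J$, the operators $\pi(\chi_{\Pen(Y_1,R)})\pi(\chi_{\Pen(Y_2,R)})T\pi(\chi_{\Pen(Y_2,R)})\pi(\chi_{\Pen(Y_1,R)})$ converge to $T$ and, by $\omega$-excisiveness ($\Pen(Y_1,R)\cap\Pen(Y_2,R)\subseteq\Pen(Z,S)$ for a suitable $S$), are supported near $Z$, whence $T\in C^*_\Hilb(Z\subset X)^\Pi$. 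This is the technical heart of the matter; it is the coarse-geometric argument of \cite{MR1219916}, and here one only has to note that all the cut-down projections involved are $G$-invariant, even and real, so the approximations stay inside the relevant invariant subalgebras.

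Granting these identities, the commutative ladder of the two short exact sequences $0\to I\cap J\to I\to I/(I\cap J)\to 0$ and $0\to J\to A\to A/J\to 0$, linked by the inclusions $I\cap J\hookrightarrow J$ and $I\hookrightarrow A$ and by the canonical isomorphism $I/(I\cap J)\xra{\cong}(I+J)/J=A/J$, together with the long exact sequence of Theorem 5.14 of \cite{Kubota1} and its naturality, feeds into the usual algebraic (Barratt--Whitehead) Mayer--Vietoris lemma and yields the asserted long exact sequence; the two maps involving the direct sum are induced by the four inclusions $C^*_\Hilb(Z\subset X)^\Pi\hookrightarrow I,J\hookrightarrow A$.

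It remains to identify $\partial_{\mathrm{MV}}$. By this construction it is the composite of $\lu\phi\K^Q_{0,c,\tau}(A)\to\lu\phi\K^Q_{0,c,\tau}(A/J)$, the inverse of the isomorphism induced by $I/(I\cap J)\cong A/J$, and the boundary map $\partial$ of the extension $0\to C^*_\Hilb(Z\subset X)^\Pi\to I\to I/(I\cap J)\to 0$ in the form given after Definition \ref{defn:twK} (via the lift $\sigma(\hat s)$). For a symmetry $s$ representing a class in $\lu\phi\K^Q_{0,c,\tau}(A)$, the compression $\hat s:=\pi(\chi_{Y_1})s\pi(\chi_{Y_1})$ lies in $I$, is $c$-twisted $G$-invariant and self-adjoint, and $s-\hat s=\pi(\chi_{X\setminus Y_1})s+\pi(\chi_{Y_1})s\pi(\chi_{X\setminus Y_1})$ lies in $J$, its first summand being supported near $Y_2$ and its second near $Z$ (again by $\omega$-excisiveness, so the second summand even lies in $I\cap J$). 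Hence $\hat s$ represents the preimage of the class of $s$ under $I/(I\cap J)\cong A/J$ and is a bona fide $c$-twisted $G$-invariant self-adjoint lift, so $\partial_{\mathrm{MV}}[s]=[\sigma(\hat s)]$, where $\hat s$ and the target are identified with elements and the $\K$-group of $C^*_\Hilb(Y_1)^\Pi$ and $C^*_\Hilb(Z)^\Pi$ via the standard coarse equivalences (Lemma \ref{lem:Morita}). The real work lies not in any single deep step but in the bookkeeping: verifying that the classical approximations of \cite{MR1219916} and the abstract Mayer--Vietoris machinery remain compatible with the Real, graded, twisted and $\Pi$-equivariant decorations, which they do because every projection in sight is $G$-invariant, even and real.
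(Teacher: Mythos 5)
Your proof is correct and complete, but it takes a genuinely different route from the paper's. You obtain the Mayer--Vietoris sequence by applying the algebraic Barratt--Whitehead lemma to the ladder formed by $0\to I\cap J\to I\to I/(I\cap J)\to 0$ and $0\to J\to A\to A/J\to 0$, together with the isomorphism $I/(I\cap J)\cong A/J$ (valid since $I+J=A$); the boundary map is then the composite $\K(A)\to\K(A/J)\cong\K(I/(I\cap J))\xra{\partial}\K(I\cap J)$. The paper instead sets $J:=I_1\cap I_2$ and works with the path algebra $\Omega(A;I_1,I_2):=\{f\in C([0,1],A):f(0)\in I_1,\ f(1)\in I_2\}$, deriving the sequence from $0\to SA\to\Omega(A;I_1,I_2)\to I_1\oplus I_2\to 0$ after observing that $\Omega(A;I_1,I_2)/C([0,1],J)$ is contractible (so $\K_*(\Omega(A;I_1,I_2))\cong\K_*(J)$), and then identifies $\partial_{\mathrm{MV}}=\partial\circ(p_1\circ\theta)_*$ via a mapping-cone diagram $\psi:\Omega(A;I_1,I_2)\to\cone(q_1)$. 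Both arguments rest on the same two coarse-geometric inputs $I_1+I_2=A$ and $I_1\cap I_2=C^*_\Hilb(Z\subset X)^\Pi$, which you verify directly from $\omega$-excisiveness while the paper takes them as known from \cite{MR1219916}, and both yield the same formula $\partial_{\mathrm{MV}}[s]=[\sigma(\hat s)]$ with $\hat s=\pi(\chi_{Y_1})s\pi(\chi_{Y_1})$. Your route is somewhat more elementary, avoiding suspensions and mapping cones; the paper's route mirrors the topological construction in \cite{MR1219916} that it cites and makes the identification of the connecting map more self-contained. One small slip: the identification of the $\K$-groups of $C^*_\Hilb(Z\subset X)^\Pi$ and $C^*_\Hilb(Z)^\Pi$ is via the inclusion $C^*_\Hilb(Z)^\Pi\hookrightarrow C^*_\Hilb(Z\subset X)^\Pi$ inducing an isomorphism in twisted equivariant $\K$-theory (cf.\ Lemma~1 of \cite{MR1219916}, as the paper notes), not via the coarse equivalence of Lemma~\ref{lem:Morita}.
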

Note that the inclusion $C^*_{\Hilb}(A)^\Pi \to C^*_{\Hilb}(A \subset X)^\Pi$ induces the isomorphism of twisted equivariant $\K$-groups (cf.\ Lemma 1 of \cite{MR1219916}).
\begin{proof}
Set $A:=C^*_\Hilb (X)^\Pi$, $I_k:=C^*_\Hilb (Y_k \subset X)^\Pi$ for $k=1,2$ and $J:=I_1 \cap I_2$. Then, by assumption we have $J=C^*_\Hilb (Z \subset X)^\Pi$ and $I_1+I_2=A$. Note that $A/J\cong  I_1 / J \oplus I_2 / J$. We write $q_k: A_k \to A_k/J$ and $\theta: A \to A/J$ for the quotient and $p _k$ for the projection $A/J \to I_k/J$. 

Now, the coarse Mayer-Vietoris exact sequence is obtained as the long exact sequence associated to the short exact sequence 
\[0 \to SA \to \Omega(A;I_1,I_2)  \to I_1 \oplus I_2 \to 0 \]
where
\[
\Omega (A;I_1,I_2) := \{  f\in C([0,1],A) \mid f(0) \in I_1 , \ f(1) \in I_2 \}. 
\]
Actually, since $\Omega(A;I_1,I_2) /C([0,1],J) \cong C_0([0,1),I/J_1) \oplus C_0((0,1],I/J_2)$ is contractible, we have $\lu \phi \K_{*,c,\tau}^{Q}(\Omega (A;I_1,I_2) ) \cong \lu \phi \K_{*,c,\tau}^Q(J)$. 

Let $\psi$ be the $\ast$-homomorphism from $\Omega(A;I_1,I_2)$ to
\[
\cone (q_1) := \{ (a,f) \in I_1 \oplus C([0,1), I_1/J) \mid f(0)=q_1(a) \}
\] 
given by $\psi (f)= (f(0), p_1 \circ \theta (f))$. Then, the diagram
\[
\xymatrix{
SA \ar[r]^{b} \ar[d]^{p_1 \circ \theta} &\Omega (A;I_1,I_2) \ar[d]^\psi   & J \ar[l]^i \ar[ld]^{i'} \\
S(I_1/J) \ar[r]^{b'} & \cone (q_1) &
}
\]
commutes. Now, by definition
\[
\partial _{\mathrm{MV}}=(i_*)^{-1}\circ b_*: \lu \phi \K _{*,c,\tau}^{Q} (SA) \to \lu \phi \K _{*,c,\tau}^{Q}(\Omega (A:I_1,I_2)) \cong \lu \phi \K _{*,c,\tau}^{Q}(J),\]
and hence we obtain
\[ \partial _{\mathrm{MV}}=(i_*')^{-1}\circ b'_* \circ (p_1 \circ \theta)_* =\partial \circ (p _1 \circ \theta)_*,\]
where $\partial$ is the boundary map of the exact sequence $0 \to J \to I_1 \to I_1/J \to 0$. 
\end{proof}

\begin{cor}\label{cor:Toep}
Let $\partial _{\mathrm{Toep}}$ be the boundary map of the Toeplitz extension
\[
0 \to A \otimes \Kop \to \mathcal{T}(A) \to \zahl \ltimes A \to 0
\]
where $A:=\zahl^{d-1}\ltimes c_b(\zahl^d)$. Then, we have $\partial _{\mathrm{MV}}=f_* \circ \partial _{\mathrm{Toep}}$, where $f: \zahl ^{d-1} \ltimes c_b(\zahl ^d) \to \zahl ^{d-1} \ltimes c_b(\zahl ^{d-1})$ be the $\ast$-homomorphism of crossed products induced from the restriction. 
\end{cor}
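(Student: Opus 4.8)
The plan is to produce a single commutative diagram of short exact sequences relating the Toeplitz extension to the Mayer--Vietoris extension of Theorem~\ref{thm:MV}, and then to conclude by naturality of the connecting homomorphism. Throughout, identify $X=\zahl^d=\zahl^{d-1}\times\zahl$ with the last coordinate playing the role of the $\zahl$ in $\zahl\ltimes A$, and set $Y_1:=\zahl^{d-1}\times\zahl_{\geq 0}$, $Y_2:=\zahl^{d-1}\times\zahl_{\leq 0}$ and $Z:=Y_1\cap Y_2=\zahl^{d-1}\times\{0\}$. This is an $\omega$-excisive partition, so the apparatus of Theorem~\ref{thm:MV} applies, and by Example~\ref{exmp:groupoid} one has $C^*_u(Z)\cong\zahl^{d-1}\ltimes c_b(\zahl^{d-1})$ together with $C^*_u(Z\subset Y_1)\cong C^*_u(Z)\otimes\Kop(\ell^2(\zahl_{\geq 0}))$.

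The crux is to construct a morphism of extensions
\[
\xymatrix{
0 \ar[r] & A\otimes\Kop \ar[r]\ar[d]^{g} & \mathcal{T}(A) \ar[r]\ar[d]^{h} & \zahl\ltimes A \ar[r]\ar[d]^{\bar h} & 0 \\
0 \ar[r] & C^*_u(Z\subset Y_1) \ar[r] & C^*_u(Y_1) \ar[r] & C^*_u(Y_1)/C^*_u(Z\subset Y_1) \ar[r] & 0 .
}
\]
The map $h$ is obtained from the universal property of the Toeplitz algebra of $\zahl\ltimes_\alpha A$: a $\ast$-representation $\pi$ of $A$ together with an isometry $V$ satisfying $V^*V=1$ and $V\pi(b)V^*=\pi(\alpha(b))VV^*$ determines a $\ast$-homomorphism out of $\mathcal{T}(A)$. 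Inside $C^*_u(Y_1)$ take $V_0$ to be the unilateral shift in the last coordinate and $\pi\colon A=\zahl^{d-1}\ltimes c_b(\zahl^d)\to C^*_u(Y_1)$ the $\ast$-homomorphism induced by the $\zahl^{d-1}$-equivariant restriction $c_b(\zahl^d)\to c_b(Y_1)$; a direct computation gives $V_0\pi(b)V_0^*=\pi(\alpha(b))V_0V_0^*$, so $(\pi,V_0)$ is Toeplitz covariant and yields $h$. Because $1-VV^*$ goes to $1-V_0V_0^*=\chi_Z$, the map $h$ sends the Toeplitz ideal $A\otimes\Kop$ into $C^*_u(Z\subset Y_1)$ and hence descends to $\bar h$; and since $\chi_Z\pi(b)\chi_Z$ is precisely $f(b)$ regarded on $\ell^2(Z)$ (restricting a bounded function on $\zahl^d$ to $Y_1$ and then to $Z$ is the same as restricting it to $Z$), inspection of generators shows $g=f\otimes\id_{\Kop}$, so that $g_*=f_*$ under the identifications $\K_*(A\otimes\Kop)\cong\K_*(A)$ and $\K_*(C^*_u(Z\subset Y_1))\cong\K_*(C^*_u(Z))$.

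Next I would match the bottom extension with the one used in the proof of Theorem~\ref{thm:MV}. For a symmetry $s\in\zahl\ltimes A=C^*_u(\zahl^d)$, written as a norm limit of finite-propagation operators, any self-adjoint lift of $s$ to $\mathcal{T}(A)$ agrees modulo $A\otimes\Kop$ with the expression obtained by replacing the last-coordinate shifts by $V,V^*$, so $h$ of it agrees modulo $C^*_u(Z\subset Y_1)$ with the half-space compression $\chi_{Y_1}s\chi_{Y_1}$; thus $\bar h[s]=[\chi_{Y_1}s\chi_{Y_1}]$. On the other hand $\chi_{Y_1}s\chi_{Y_1}\in C^*_u(Y_1\subset\zahl^d)=I_1$ and $s-\chi_{Y_1}s\chi_{Y_1}\in C^*_u(Y_2\subset\zahl^d)=I_2$, which is exactly the decomposition computing $p_1\circ\theta$ in that proof. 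By the remark after Theorem~\ref{thm:MV} the inclusions $C^*_u(Y_1)\hookrightarrow I_1$ and $C^*_u(Z\subset Y_1)\hookrightarrow C^*_u(Z\subset\zahl^d)=J$ are $\K$-isomorphisms, hence so is the induced map on quotients by the five lemma, and these assemble into a morphism from the bottom extension to $0\to J\to I_1\to I_1/J\to 0$ carrying $\bar h$ to $p_1\circ\theta$. Naturality of the connecting map now gives $\partial'\circ\bar h_*=g_*\circ\partial_{\mathrm{Toep}}$ for the boundary $\partial'$ of the bottom row, and since $\partial'$ is carried to the boundary $\partial$ of $0\to J\to I_1\to I_1/J\to 0$ under the above $\K$-isomorphisms, the identity $\partial_{\mathrm{MV}}=\partial\circ(p_1\circ\theta)_*$ from the proof of Theorem~\ref{thm:MV} gives $\partial_{\mathrm{MV}}=f_*\circ\partial_{\mathrm{Toep}}$.

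The step I expect to be genuinely delicate is the construction and analysis of $h$: one must present the Toeplitz algebra compatibly with the crossed-product description $A=\zahl^{d-1}\ltimes c_b(\zahl^d)$ and with the Roe-algebra description of $C^*_u(Y_1)$, and keep careful track of two distinct ``transverse'' objects---the $\zahl$-direction encoded in $c_b(\zahl^d)=c_b(\zahl^{d-1}\times\zahl)$ and the algebra $\Kop(\ell^2(\zahl_{\geq 0}))$ arising from the Toeplitz construction---only the former of which is collapsed by $f$. The remaining verifications (that $\bar h$ computes the half-space compression, and the final assembly) are routine diagram chases.
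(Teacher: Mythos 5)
Your proposal is correct and, after unwinding the notation, implements essentially the same idea as the paper's proof. The paper represents $\mathcal{T}(A)$ concretely on $\ell^2(\zahl^d)\otimes\ell^2(\zahl_{\geq 0})$ and compresses by the projection $p$ onto $\ell^2(\zahl^{d-1}\times D_\zahl)$, asserting that the resulting short exact sequence coincides with the Mayer--Vietoris extension and that on the ideal the compression is $f$; your map $h$, built via the universal property of the Toeplitz algebra from the covariant pair $(\pi,V_0)$, is literally the map $T\mapsto pTp$ in disguise (both send $a\mapsto a|_{Y_1}$ and $V\mapsto$ the unilateral shift on $\ell^2(Y_1)$), so the morphism of extensions you write down is identical to the compression the paper uses, and your final step of invoking $\partial_{\mathrm{MV}}=\partial\circ(p_1\circ\theta)_*$ from the proof of Theorem~\ref{thm:MV} together with the $\K$-isomorphisms $C^*_u(Y_1)\hookrightarrow I_1$, $C^*_u(Z\subset Y_1)\hookrightarrow J$ just makes explicit what the paper leaves implicit in the phrase ``is the same thing as''.
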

In particular, $\partial _{\mathrm{MV}} \circ \mathop{\mathrm{Res}} =\mathop{\mathrm{Res}} \circ \partial_{\mathrm{Toep}}$ on $\lu \phi \K_{0,c,\tau}^G(C^*_r(\zahl ^d))$, where $\partial_{\mathrm{Toep}}$ is the boundary map of the Toeplitz exact sequence
\[0 \to C^*_r(\zahl ^{d-1}) \otimes \Kop \to C^*_r(\zahl ^{d}) \otimes \mathcal{T} \to C^*_r(\zahl ^d) \to 0\]
and $\mathrm{Res}$ is the map induced from the inclusion $C^*_u(|\zahl ^d|)^{\zahl ^d}\to C^*_u(|\zahl ^d|)$.

\begin{proof}
We identify $\zahl \ltimes A$ with the subalgebra of $\Bop (\ell ^2 (\zahl ^d)) \otimes \Bop(\ell ^2 (\zahl))$ generated by $A \otimes 1$ and $u \otimes v$ where $u$ is a generator of $C^*_r\zahl \subset \zahl \ltimes A$ and $v$ is the birateral shift operator on $\ell ^2 (\zahl)$. Then, $\mathcal{T}(A)$ is the $\Cst$-algebra generated by $(1 \otimes P)(\zahl \ltimes A)(1 \otimes P)$ where $P$ is the projection onto $\ell ^2(\zahl_{\geq 0})$. Let $p$ be the projection onto $\ell ^2 (\zahl ^{d-1} \times D_{\zahl } )$ where $D _\zahl$ is the diagonal subset of $\zahl \times \zahl$. Then, $p$ commutes with $\mathcal{T}(A)$ and the exact sequence 
\[
0 \to p(A \otimes \Kop)p \to p(\mathcal{T}(A))p \to p(\zahl \ltimes A)p \to 0
\]
is the same thing as $0 \to C^*_u(Z \subset Y_1) \to
 C^*_u(Y_1) \to C^*_u(Z,Y_1) \to 0$ and the projection $A \otimes \Kop \to p(A \otimes \Kop)p$ is the same thing as $f$.
\end{proof}

\subsubsection{$\K$-cycles and cyclic cocycles}
At the end of this section, we study fundamental cycles of the coarse $\Cst$-algebras for $G$-invariant Delone subsets of $\Euc ^d$. For simplicity, let $G$ be a finite group.

Let $V:=T_x\Euc ^d$ as in Example \ref{exmp:ext}, $V^-:=V$ with the negative definite inner product $-\ebk{\blank,\blank}$ and let $\tilde{V}:=V \oplus V^-$. 
Then, $\Cl(\tilde{V})$ has a unique irreducible $\Zt$-graded representation $\Delta _{\tilde{V}}$. 
For a finite $(X,G)$-module $(\Hilb , U, \pi)$, we obtain a twisted equivariant $\K$-cycle
\[
\Xi _{X,\Hilb}:=[\Hilb \hotimes \Delta _{\tilde{V}},\id , F] \in \lu \phi \KK^G (C^*_\Hilb (X)^\Pi \hotimes \Cl(V^-) , \real ).
\]
where $\id$ means the canonical inclusion $C^*_\Hilb (X)^\Pi \hotimes \Cl(V^-) \subset \Bop (\Hilb) \hotimes \Bop (\Delta _{\tilde{V}})$ and 
\[
F:=D(1+D^2)^{-1/2}, D:= \sum \nolimits_{j=1}^{d} \pi (x_j) \hotimes e_j \in \Bop (\Hilb ),
\]
where $x_j$ are coordinate functions on $X \subset \Euc ^d$ and $e_j$ are Clifford generators of $\Cl(V)$. 
Note that $C^*_\Hilb(X)$ are unital and in particular $\sigma$-unital.
Therefore, as in Remark \ref{remk:KK}, the twisted Kasparov product induces the group homomorphism
\[
\Xi _{X,\Hilb}^*:=\blank \otimes \Xi _{X,\Hilb}: \lu \phi \K ^G_{0,c,\tau}(C^*_\Hilb (X)) \to \lu \phi \KK^G_{0,c,\tau}(\Cl(V^-),\real) \cong  \lu \phi \K^G_{0,c,\tau}(\Cl (V)).
\]
Here, $\lu \phi \K ^G_{0,c,\tau}(\Cl(V))$ is isomorphic to $\lu \phi\K _G^{0,c',\tau'}(\pt)$ by Lemma \ref{lem:cliff}.
For example, for a CT-symmetry, $\Xi _X^*$ takes value in the group as in Table \ref{table:Kitaev}. 
In particular, we simply write $\Xi _X:=\Xi _{X,\Hilb _{X,G}} $.
\begin{prp}\label{prp:index}
The diagram
\[
\xymatrix@R=1.2em{
\lu \phi \K ^G_{0,c,\tau} (C^*_{u,G}(X)) \ar[d] \ar[r]^{\Xi _X^*} & \lu \phi \K_G^{d,c',\tau'}(\pt) \\
\lu \phi \K ^G_{0,c,\tau}(C^*_G(X))  \ar[ru] _{\Psi_{X,G}} &
}
\]
commutes, where ${\Psi _{X,G}}$ is the isomorphism given in Corollary \ref{cor:cBC}.
\end{prp}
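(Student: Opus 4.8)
The plan is to realize both legs of the triangle as Kasparov products and compare them by associativity, after first reducing the statement to one purely about $C^*_G(X)$. Applying the same formula on the ample module $\Hilb _{X,G}^\infty \hotimes \Delta _{\tilde V}$ defines a class $\Xi'_X \in \lu \phi \KK^G_{c,\tau}(C^*_G(X) \hotimes \Cl(V^-), \real)$, and the standard inclusion $\iota : C^*_{u,G}(X) \hookrightarrow C^*_G(X)$ (compression by a $G$-invariant rank-one projection on the multiplicity space) satisfies $\iota^* \Xi'_X = \Xi _X$ up to a degenerate cycle. Hence $\Xi _X^* = (\Xi'_X)^* \circ \iota _*$ by associativity, and since $\iota _*$ is exactly the vertical arrow of the diagram, the assertion is equivalent to the equality $(\Xi'_X)^* = \Psi _{X,G}$ of homomorphisms $\lu \phi \K^G_{0,c,\tau}(C^*_G(X)) \to \lu \phi \K_G^{d,c',\tau'}(\pt)$.

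By Corollary \ref{cor:cBC} (with $\Pi$ trivial, $Q = G$, $n = 0$) we have $\Psi _{X,G} = \delta _V \circ \mu _{X,G}^{-1}$, where $\mu _{X,G}$ is the coarse assembly isomorphism of Theorem \ref{thm:cBC} and $\delta _V : \lu \phi \K^G_{0,c,\tau}(\Euc ^d) \xrightarrow{\ \sim\ } \lu \phi \K_G^{d,c',\tau'}(\pt)$ is the Poincar\'e-duality isomorphism of Lemma \ref{lem:cliff}, that is, the Kasparov product with Kasparov's Dirac element for $\Euc ^d$. Since $\mu _{X,G}$ is an isomorphism, it suffices to prove $(\Xi'_X)^* \circ \mu _{X,G} = \delta _V$ on $\lu \phi \K^G_{0,c,\tau}(\Euc ^d)$. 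I would do this through the localization-algebra model used in the proof of Theorem \ref{thm:cBC}: there $\mu _{X,G} = (\ev _0)_* \circ \mu _{\Euc ^d, G, L'}$, with both $\mu _{\Euc ^d, G, L'}$ (an isomorphism by the Mayer--Vietoris argument in that proof) and $(\ev _0)_*$ isomorphisms, and $\mu _{\Euc ^d, G, L'}$ is the canonical identification of twisted equivariant $\K$-homology with the $\K$-theory of the localization algebra. The position Dirac $D = \sum _j \pi (x_j) \hotimes e_j$ determines a localized fundamental class $\Xi _L \in \lu \phi \KK^G(C^*_{G,L'}(\Euc ^d) \hotimes \Cl(V^-), \real)$ with $\ev _0^* \Xi _L$ corresponding to $\Xi'_X$ under the identification $C^*_G(\Euc ^d) \sim C^*_G(X)$ of Proposition 6.3.12 of \cite{MR1817560}; by associativity and functoriality this reduces the claim to $(\Xi _L)^* \circ \mu _{\Euc ^d, G, L'} = \delta _V$.

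The remaining identity is a Dirac--dual-Dirac computation. Since $\mu _{\Euc ^d, G, L'}$ is the $\K$-homology identification, $(\Xi _L)^* \circ \mu _{\Euc ^d, G, L'}$ is the index pairing of a $\K$-homology class of $\Euc ^d$ with $\Xi _L$, and one must see that it coincides with Kasparov's duality map $\delta _V$. When $d = 0$ this is immediate ($V = 0$, $\Delta _{\tilde V} = \real$, $D = 0$, and both sides are the identity of $\lu \phi \K_G^{0,c,\tau}(\pt)$), and the general case follows by the Mayer--Vietoris and rescaling arguments from the proof of Theorem \ref{thm:cBC}: being built from the position operator, $\Xi _L$ is compatible with the $G$-invariant simplicial filtration $\Euc ^d = \bigcup _k K^{(k)}$ and with the scaling maps on $\Euc ^d$, so the identity propagates from the $0$-skeleton. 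The main obstacle is precisely this last point: one must verify that passing from the controlled bounded transform $F'$ produced by $\mu$ to the position Dirac $D$ produced by $\Xi'_X$ -- on modules that are only coarsely, not isometrically, identified with $L^2$-sections over $\Euc ^d$ -- yields an operator homotopy of $\lu \phi \K$-cycles, equivariantly and with the twist $(\phi,c,\tau)$ preserved. The uniformly bounded Borel partition $\{ F_x \}$ used in the construction of $\mu$ (so that $\pi'(\delta _x) = \pi(\chi _{F_x})$) is what bounds the error, but upgrading this bound to an equality of Kasparov products, rather than a mere approximation of operators, is the technical heart of the proof; the other ingredients -- associativity of the Kasparov product, the localization model of $\mu$, and the scaling argument -- are all available from the proofs of Theorems \ref{thm:cBC} and \ref{thm:MV}.
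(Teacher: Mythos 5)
Your opening two reductions track the paper's proof: you first observe that $\Xi_X^*$ factors through $\lu\phi\K^G_{0,c,\tau}(C^*_G(X))$ (the paper does this via the colimit $C^*_G(X)=\overline{\bigcup_{\Hilb}C^*_\Hilb(X)}$ rather than a compression by a rank-one projection, but the effect is the same), and you then identify $\Psi_{X,G}=\delta_V\circ\mu_{X,G}^{-1}$ so that the claim becomes $(\Xi'_X)^*\circ\mu_{X,G}=\delta_V$. This is exactly where the two proofs part ways, and where your plan runs into trouble.

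The paper does not attempt a direct comparison of the two homomorphisms on all of $\lu\phi\K^G_{0,c,\tau}(\Euc^d)$. Instead, it extracts from Corollary~\ref{cor:cBC} the structural fact that every class in $\lu\phi\K^G_{0,c,\tau}(C^*_G(X))$ has the form $\mu(\alpha)\otimes\xi$, where $\alpha\in\lu\phi\K^G_d(\Euc^d)$ is the fixed Dirac (de Rham) element and $\xi$ ranges over $\lu\phi\K^G_{-d,c,\tau}(\real)$. Because both $\Xi_X^*$ and $\Psi_{X,G}$ are compatible with the external Kasparov product by $\xi$, the whole statement collapses to a single evaluation $\Xi_X^*(\mu(\alpha))=1\in\lu\phi\KK^G(\real,\real)$, and that is a known Dirac--dual-Dirac index computation (Theorem~1 of~\cite{MR1435703}), essentially $\beta\otimes_{C_0(\Euc^d)\otimes\Cl_{0,d}}\alpha=1$. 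Your plan, by contrast, wants to verify the equality of homomorphisms by a Mayer--Vietoris and rescaling induction on the skeleta of $\Euc^d$, mimicking the proof of Theorem~\ref{thm:cBC}.

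The concrete gap is the one you yourself flag and do not close: you would need to show that the localized fundamental class $\Xi_L$ built from the position Dirac $D=\sum\pi(x_j)\hotimes e_j$ is operator-homotopic, as a Kasparov cycle with the correct twist and equivariance, to the controlled bounded transform $F'$ that the assembly map produces, on modules identified with $L^2(\Euc^d)$-sections only up to a uniformly bounded Borel partition. You correctly note that the partition controls the error, but ``small error'' is not ``operator homotopy of cycles,'' and you offer no mechanism to upgrade one to the other. You also invoke compatibility of $\Xi_L$ with the simplicial filtration and with rescaling without checking it -- in particular, whether $D$ restricted to a skeleton $K^{(k)}=\tilde K^{(k)}\times\Euc^n$ still represents the same cycle after the Mayer--Vietoris cutting is not obvious and is not argued. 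The paper's route avoids all of this: since it only needs to evaluate $\Xi_X^*$ on the single generator $\mu(\alpha)$, the entire comparison is delegated to one clean, already-established index identity instead of a general homomorphism equality. If you adopt that reduction -- noting that $\Xi_X^*$ is a module map over $\lu\phi\K^G_{-d,c,\tau}(\real)$ and that $\mu(\alpha)$ generates -- your proof closes; as written it does not.
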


\begin{proof}
Since $C^*_G(X)=\overline{\bigcup _{\Hilb } C^*_\Hilb (X)}$ where $\Hilb$ runs over all finite $(X,G)$-submodules of $\Hilb_{X,G}^\infty$, we have the isomorphism
\[
\varinjlim \lu \phi \K_{0,c,\tau}^{Q}(C^*_\Hilb (X) ) \to \lu \phi \K_{0,c,\tau}^{Q}(C^*(X) )
\]
in the same as continuity of $\K$-groups. Since $\Xi _{X,\Hilb}^*$ commutes with the inclusion $C^*_{\Hilb}(X) \subset C^*_{\Hilb'}(X)$ for $\Hilb \subset \Hilb'$, $\Xi _X^*$ factors through $\lu \phi \K ^Q_{0,c,\tau}(C^*_G(X))$.

By Corollary \ref{cor:cBC}, any element in $\lu \phi \K^G_{0,c,\tau}(C^*_G(X))$ is of the form $\mu(\alpha) \otimes \xi$ for some $\xi \in \lu \phi \K_{-d,c,\tau}^G(\real)$, where $\alpha \in \lu \phi \K _{d}^Q(\Euc ^d)$ is the element represented by the de Rham operator on $\Euc ^d$ (see Section 5.II of \cite{MR582160}). 
Therefore, it suffices to show that $\Xi _X^*(\mu (\alpha))=1 \in \lu \phi \KK^Q(\real, \real)$. 
Actually, it is proved in the same way as Theorem 1 of \cite{MR1435703} that $\Xi _X^*(\mu (\alpha))$ coincides with $\beta \otimes _{C_0(\Euc ^d) \otimes \Cl_{0,d}} \alpha =1$, where $\beta$ is the Bott element. 
\end{proof}

\begin{remk}\label{remk:genind}
For a general $(G,\phi,c,\tau)$ satisfying Assumption \ref{assump:ext}, we obtain a similar index pairing. Recall that we have a decomposition $\Euc ^d \cong \Euc ^{n-d} \times \Euc ^{n}$ as in Example \ref{exmp:extRoe}.
By Lemma \ref{lem:Morita}, we may assume that $X=X' \times X''$ where $X' \subset \Euc ^{d-n}$ and $X'':=G \cdot x'' \subset \Euc ^n$ are $G$-invariant Delone subsets. 
Then, we have an isomorphism
\[C^*_{G}(X)^\Pi \cong C^*_{Q}(X') \otimes C^*_{G}(X'')^\Pi  \cong C^*_{Q}(X') \otimes \Kop (\ell^2_\Pi(G)).\]
Therefore, by using $\Xi _{X'}$, we obtain an index pairing which takes value in $\lu \phi \K ^Q_{-d,c',\tau'}(\Kop (\ell^2_\Pi(G))) \cong \lu \phi \K^{d-n,c',\tau' +\nu}_Q(B_\Pi)$.
By the same proof as Proposition \ref{prp:index}, we obtain $\Xi _{X'}^*=\Psi_{X,G}$ for this case.
\end{remk}

\begin{cor}\label{cor:iota}
Let $G$ be a discrete group as in Assumption \ref{assump:ext} such that $\Pi$ is free abelian ($n:=\rank \Pi$) and let $X$ be a $G$-invariant Delone subset of $\Euc ^d$. 
Let $\iota: c_b(X; \Kop (\Hilb _{X,G}^\infty)) \to C^*_G(X)^\Pi$ be the inclusion. Under the isomorphism in Corollary \ref{cor:cBC}, 
\begin{itemize}
\item $\Im \iota_*=0$ when $n<d$.
\item $\Im \iota_*$ is spanned by $\lInd _{G_x}^Q  (\lu \phi \K _{G_x}^{*,c,\tau}(\pt ))$ when $n=d$. In particular, $\Im \iota _* \subset \mathrm{Triv}$.
\end{itemize}
\end{cor}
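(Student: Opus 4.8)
The plan is to decouple the cocompact and non-cocompact directions of $\Euc^d$ by means of the $G$-equivariant splitting $\Euc^d\cong\Euc^{d-n}\times\Euc^n$ of Example~\ref{exmp:extRoe}, in which $G$ acts cocompactly on $\Euc^n$ and through $Q$ on $\Euc^{d-n}$, with projection $p\colon\Euc^d\to\Euc^{d-n}$, and to analyse $\iota_*$ through the ideal structure of $C^*_G(X)^\Pi$ and the coarse assembly of Theorem~\ref{thm:cBC}, which by Proposition~\ref{prp:index} and Remark~\ref{remk:genind} is the isomorphism of Corollary~\ref{cor:cBC}. Throughout, $c_b(X;\Kop(\Hilb_{X,G}^\infty))^\Pi$ denotes the ($\Pi$-invariant) diagonal subalgebra out of which $\iota$ is defined.

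First I would treat the case $n=d$. Then $G$ acts cocompactly on $\Euc^d$, so $X$ is $G$-cocompact and $X/\Pi$ is a finite $Q$-set; since $\Pi$ acts freely on $X$, $c_b(X;\Kop(\Hilb_{X,G}^\infty))^\Pi\cong\bigoplus_{[x]\in X/\Pi}\Kop(\Hilb_x^\infty)$, a finite direct sum. Decomposing over $Q$-orbits, the summand of an orbit $Q\cdot[x]$ is $Q$-equivariantly isomorphic to $\Kop(\Hilb_x^\infty)\otimes C(Q/G_x)$, with the canonical $(\phi,c,\tau)$-twisted $G_x$-action on the first factor through the regular representation of $G_x$ on $\Hilb_x^\infty$. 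By Remark~\ref{remk:lind} its contribution to $\lu\phi\K^Q_{*,c,\tau}$ is $\lu\phi\K^{G_x}_{*,c,\tau}(\Kop(\Hilb_x^\infty))$ and the diagonal inclusion induces $\lInd_{G_x}^Q$ on it; and since $\Kop(\Hilb_x^\infty)$ is $G_x$-equivariantly Morita equivalent to $\real$ via the bimodule $\Hilb_x^\infty$ --- an honest (i.e.\ $\phi$-twisted but not $c$- or $\tau$-twisted) representation, so that the equivalence respects the $(c,\tau)$-twist --- this group is $\lu\phi\K^{G_x}_{*,c,\tau}(\pt)$. Hence $\Im\iota_*=\sum_{[x]}\lInd_{G_x}^Q(\lu\phi\K_{G_x}^{*,c,\tau}(\pt))$, which is the asserted description, and each summand is one of the generators of $\mathrm{Triv}$ in Example~\ref{exmp:ext} (with the finite subgroup $G_x$), so $\Im\iota_*\subset\mathrm{Triv}$.

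Now the case $n<d$, where $d':=d-n\geq 1$ and I must show $\iota_*=0$. Fix $R>0$ and partition $X$ into the $G$-invariant subsets $X_1:=X\cap p^{-1}(\Euc^{d-n}\setminus B_R)$ and $X_2:=X\cap p^{-1}(B_R)$, where $B_R$ is a ball about a $Q$-fixed point $x_0$ of $\Euc^{d-n}$. Since $c_b(X;\Kop(\Hilb_{X,G}^\infty))^\Pi$ splits as the direct sum over this partition and $\iota$ carries each summand into the corresponding ideal $C^*(X_i\subset X)^\Pi$, it suffices to show that each $\lu\phi\K^Q_{*,c,\tau}(C^*(X_i\subset X)^\Pi)\to\lu\phi\K^Q_{*,c,\tau}(C^*_G(X)^\Pi)$ has zero image. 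For $i=1$: $\Euc^{d-n}\setminus B_R$ is coarsely equivalent to $S^{d'-1}\times[R,\infty)$, hence $G$-equivariantly flasque, so $X_1$ is flasque and $\lu\phi\K^Q_{*,c,\tau}(C^*(X_1\subset X)^\Pi)\cong\lu\phi\K^Q_{*,c,\tau}(C^*_G(X_1)^\Pi)=0$ by an Eilenberg--swindle argument carried out equivariantly and with twisted coefficients. For $i=2$: $B_R\times\Euc^n$ is uniformly contractible, so by Theorem~\ref{thm:cBC} and naturality of the coarse assembly for the closed inclusion $B_R\times\Euc^n\hookrightarrow\Euc^d$, the map in question is identified with the $\K$-homology push-forward $\lu\phi\K^G_{*,c,\tau}(B_R\times\Euc^n)\to\lu\phi\K^G_{*,c,\tau}(\Euc^d)$; as $B_R$ is compactly and $G$-equivariantly contractible onto $x_0$, this factors through the push-forward along the point inclusion $\{x_0\}\hookrightarrow\Euc^{d-n}$, which vanishes for $d'\geq 1$ (the point class being split off from $\lu\phi\K^G_{*,c,\tau}(\Euc^{d-n})$; this one checks, e.g.\ via the one-point compactification $\Euc^{d-n}\hookrightarrow S^{d-n}$ or by reducing coarsely to a lattice and using the iterated Pimsner--Voiculescu computation of Lemma~\ref{lem:uRoe}). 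Thus both contributions vanish and $\iota_*=0$.

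The hard part will be the two identifications with the target of $\Psi_{X,G}$: in the $n=d$ case, checking that the $\ell^2$-induction of Remark~\ref{remk:lind} together with the $G_x$-equivariant Morita equivalence really translates into the span of induced classes in $\lu\phi\K^{*,c',\tau'+\nu}_Q(B_\Pi)$ defining $\mathrm{Triv}$ (including $\nu\neq 0$, where the orbit summands carry a genuine twisted Hilbert bundle over $Q/G_x$); and in the $n<d$ case, verifying that flasqueness gives the vanishing of the relevant twisted equivariant $\K$-theory and --- the real crux --- that the point-class push-forward vanishes in the twisted $G$-equivariant setting, which is precisely where positivity of the codimension $d-n$ is used. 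The grading signs in splitting a diagonal symmetry over a partition are a routine point.
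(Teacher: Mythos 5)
The $n=d$ half of your argument is fine and is in fact what the paper has in mind: the diagonal algebra over a $\Pi$-cocompact $X$ is a finite sum over $\Pi$-orbits, each $Q$-orbit contributes $\lu\phi\K^{G_x}_{*,c,\tau}(\pt)$ via Remark~\ref{remk:lind} and the Morita equivalence $\Kop(\Hilb_x^\infty)\sim\real$, and the resulting span of $\lInd_{G_x}^Q$-induced classes is exactly the paper's ``it follows from the definition of the $\ell^2$-induction.''

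For $n<d$ the paper uses a one-line argument that is quite different from yours: by Proposition~\ref{prp:index} and Remark~\ref{remk:genind} the isomorphism $\Psi_{X,G}$ is realised as the Kasparov pairing with the cycle $\Xi_{X'}$ built from $D=\sum\pi(x_j)\hotimes e_j$, and a diagonal projection $p\in c_b(X;\Kop)^\Pi$ commutes with the coordinate operators $\pi(x_j)$, so the pairing kills it (concretely, in the Chern--Connes picture of Lemma~\ref{lem:pair} every term contains a factor $\nabla_j(p)=i[\pi(x_j),p]=0$). This argument uses precisely what is special about $\Im\iota_*$ --- commuting with position --- and nothing else.

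Your argument instead tries to kill all of $\lu\phi\K^Q_{*,c,\tau}(C^*(X_i\subset X)^\Pi)$ in $\lu\phi\K^Q_{*,c,\tau}(C^*_G(X)^\Pi)$, and this is too strong: that inclusion-induced map need not be zero. After your reductions, the $X_2$ piece becomes the $\K$-homology push-forward $\lu\phi\K^G_{*,c,\tau}(\pt)\to\lu\phi\K^G_{*,c,\tau}(\Euc^{d-n})$ along the inclusion of the $Q$-fixed origin; when $(\Euc^{d-n})^Q=0$ this is essentially multiplication by the $\K$-theoretic Euler class of the $Q$-representation $\Euc^{d-n}$, which is in general nonzero, so this map does not vanish (your two proposed justifications --- one-point compactification and iterated Pimsner--Voiculescu --- only see the non-equivariant case). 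The $X_1$ piece has a parallel problem: the radial shift on $\Euc^{d-n}\setminus B_R$ is $Q$-equivariant but is \emph{not} a coarse (uniformly controlled) self-map when $Q$ fixes only the origin, because the angular distortion of the $k$-th iterate grows linearly in $k$; so the equivariant flasqueness of $X_1$ you invoke is not available in the form you need. In short, the decomposition discards exactly the information --- that the classes in question come from the diagonal subalgebra --- that makes the statement true, whereas the paper's pairing argument is tailored to it. The ``split off'' parenthetical is also misleading: there is no splitting $\lu\phi\K^G_*(\Euc^{d-n})\cong\lu\phi\K^G_*(\pt)\oplus(\cdots)$ in locally compact $\K$-homology; the intended mechanism (if it worked at all) would be that the push-forward factors through a space with vanishing $\K$-homology, not that a direct summand drops out.
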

\begin{proof}
When $n<d$, it follows from $\Xi _X^* (\Im \iota_*)=0$. When $n=d$, it follows from the definition of the $\ell^2$-induction. 
\end{proof}

Next, we relate it with the coarse Mayer--Vietoris exact sequence. For this, let $G$ be a finite group acting on $\Euc ^{d-1}$. 
We choose a decomposition $\Euc ^1=\Euc ^1_+ \cup \Euc ^1_-$ such that $\Euc ^1_+ \cap \Euc ^1_- =\pt$ and set $\Euc ^d_\pm:= \Euc ^{d-1} \times \Euc ^1_{\pm}$.
\begin{cor}\label{cor:MVind}
Let $X=Y_+ \cup Y_-$ be a Delone partition of $\Euc ^d=\Euc ^d_+ \cup \Euc ^d_-$. Set $Z:=Y_+ \cap Y_-$. Then, we have
\[
\Xi _X^*[s] = \Xi _Z^* \partial _{\mathrm{MV}}[s] \in \lu \phi \K_G^{d,c',\tau'}(\pt).
\]
\end{cor}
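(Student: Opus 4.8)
The statement $\Xi_X^*[s] = \Xi_Z^* \partial_{\mathrm{MV}}[s]$ compares two index pairings taking values in $\lu \phi \K_G^{d,c',\tau'}(\pt)$, one computed on the ``bulk'' side via the $\K$-cycle $\Xi_X$ built from the Dirac-type operator $D = \sum_{j=1}^d \pi(x_j) \hotimes e_j$ on $\Euc^d$, the other on the ``edge'' side via $\Xi_Z$ with the analogous operator on $\Euc^{d-1}$. The natural strategy is to realize $\partial_{\mathrm{MV}}$, by Theorem \ref{thm:MV}, as $\partial \circ (p_1 \circ \theta)_*$ where $\partial$ is the boundary map of $0 \to C^*_u(Z \subset Y_1)^G \to C^*_u(Y_1)^G \to C^*_u(Y_1, Z)^G \to 0$, and then to identify the composite $\Xi_Z^* \circ \partial$ with $\Xi_X^*$ restricted to elements coming from the half-space. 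The cleanest route is on the level of $\KK$-cycles: one wants a factorization at the level of Kasparov modules exhibiting $\Xi_{Y_1}$ (the fundamental cycle for the half-space $Y_1$, relative to $Z$) as the Kasparov product of the extension class $[\partial] \in \KK^1(C^*_u(Y_1,Z)^G, C^*_u(Z \subset Y_1)^G)$ with $\Xi_Z$ — equivalently, that the Dirac operator on $\Euc^d_+$ ``restricted'' along the boundary hyperplane splits off the normal direction $\Euc^1_+$, which carries the Toeplitz/boundary data.

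\textbf{Key steps.} First I would rewrite both sides in terms of the coarse Baum--Connes picture: by Proposition \ref{prp:index} and its generalization Remark \ref{remk:genind}, $\Xi_X^*$ agrees with $\Psi_{X,G}$ and $\Xi_Z^*$ with $\Psi_{Z,G}$, so it is equivalent to prove the corresponding compatibility of $\partial_{\mathrm{MV}}$ with the coarse assembly maps $\mu_{X,G}$ and $\mu_{Z,G}$. That is, I would show the square relating $\partial_{\mathrm{MV}}: \lu \phi \K^Q_{0,c,\tau}(C^*_G(X)^\Pi) \to \lu \phi \K^Q_{-1,c,\tau}(C^*_G(Z)^\Pi)$ to the topological boundary map $\lu \phi \K^G_{d,c,\tau}(\Euc^d) \to \lu \phi \K^G_{d-1,c,\tau}(\Euc^{d-1})$ coming from the decomposition $\Euc^d = \Euc^d_+ \cup \Euc^d_-$ commutes. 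This is a statement purely about the localization/Roe algebras and the Dirac operator, and it should follow from naturality of the Mayer--Vietoris boundary map under the assembly map, exactly as the analogous compatibility is proved for the ordinary coarse Baum--Connes conjecture: one uses that $D$ on $\Euc^d$ decomposes, up to the usual resolvent perturbation, as $D_{\Euc^{d-1}} \hotimes 1 + 1 \hotimes D_{\Euc^1}$, and that cutting by $\chi_{Y_+}$ on the last factor is precisely the operation inducing $\partial_{\mathrm{Toep}}$ on the one-dimensional factor, whose boundary is the rank-one evaluation giving $\Xi_Z$. Concretely I would invoke Corollary \ref{cor:Toep} to replace $\partial_{\mathrm{MV}}$ by $f_* \circ \partial_{\mathrm{Toep}}$ (after reducing to the model case $X = \zahl^d$ via Lemma \ref{lem:Morita} and coarse invariance), and then the computation reduces to the known fact — essentially Theorem 1 of \cite{MR1435703} and the Bott-periodicity bookkeeping already used in the proof of Proposition \ref{prp:index} — that pairing with the de Rham operator on $\Euc^1_+$ and taking the Toeplitz boundary recovers pairing with the de Rham operator on the point $\Euc^1_+ \cap \Euc^1_-$.

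\textbf{Main obstacle.} The delicate point is the geometric decomposition: one must genuinely check that the fundamental $\K$-cycle $\Xi_{Y_1}$ for the half-space $Y_1 \subset \Euc^d_+$, relative to its boundary $Z$, is Kasparov-equivalent to the external product of $\Xi_Z$ with the one-dimensional half-line cycle, and that this equivalence is compatible with the cutting projection $\pi(\chi_{Y_1})$ that defines $\partial_{\mathrm{MV}}$ in Theorem \ref{thm:MV}. The operator $D$ does not literally split across the partition (the coordinate function $x_d$ and the cut $\chi_{Y_+}$ interact), so one has to argue that the relevant commutators are locally compact — which is where the $\omega$-excisiveness of the partition and the finite-propagation structure enter — and that the resulting perturbation does not change the class. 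Once this locally-compact-perturbation lemma is in hand, the rest is formal manipulation with Kasparov products and the long exact sequences of Theorem \ref{thm:MV} and of the Toeplitz extension, together with the already-established Proposition \ref{prp:index}.
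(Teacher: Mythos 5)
Your proposal matches the paper's proof exactly: by Proposition \ref{prp:index} (and Remark \ref{remk:genind}), the claim reduces to $\Psi_{Z,G} \circ \partial_{\mathrm{MV}} = \Psi_{X,G}$, i.e.\ compatibility of the coarse Baum--Connes isomorphism with the Mayer--Vietoris boundary, which the paper cites (Section 3 of \cite{mathKT12120241}) rather than reproves. The additional material in your ``Key steps'' and ``Main obstacle'' paragraphs (Toeplitz route, Dirac operator splitting, commutator estimates) is a correct account of where the technical content of that cited compatibility lives, but is not needed here since the paper treats it as a black box.
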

In the same way as Remark \ref{remk:genind}, we obtain the same result when $G$ is a general discrete group.
\begin{proof}
By Proposition \ref{prp:index}, it suffices to show that $\Psi_{Z,G} \circ \partial _{\mathrm{MV}}=\Psi _{X,G}$.
It follows from the fact that the coarse Baum-Connes isomorphism is compatible with the Mayer-Vietoris exact sequences (see for example Section 3 of \cite{mathKT12120241}).
\end{proof}

For the case of type {\typeA} or {\typeAIII} CT-symmetry, this pairing is also related to the pairing with the fundamental cyclic cocycle.
Let $\omega \in \beta \nat \setminus \nat$ be a free ultrafilter and let $\mathbf{P} _\omega \in c_b(X)^*$ be the corresponding probability measure on $\beta X$, that is, 
\[
\int f(x)d\mathbf{P} _\omega (x)  =\lim _{n \to \omega } |\Lambda _n |^{-1} \sum\nolimits _{x \in \Lambda _n} f(x)
\] 
where $\Lambda _n:=[-n,n] \cap X$.
As in the proof of Lemma 4.7 of \cite{MR2007488}, the state 
\[\mathcal{T}^\omega (a):=\int d\mathbf{P} _\omega (x) (\Tr (a_{xx}))  \] 
on $C^*_uX$ is tracial since $\Lambda _n$ is a F{\o}lner set of an amenable uniformly discrete metric space $X$. 
Hence, we obtain a cyclic $d$-cocycle
\maa{\mathcal{T} _d^\omega (a_0,a_1,\cdots , a_d):=\sum\nolimits_{\sigma \in\mathfrak{S}_n} (-1)^{|\sigma|}\mathcal{T} ^\omega(a _0 \nabla _{\sigma (1)}(a_1) \cdots \nabla _{\sigma (d)}(a_d)) \label{form:cyc}}
on the dense $\ast$-subalgebra $\mathcal{B}_u(X):=\{ a \in C^*_u(X) \mid \nabla_j (a) \in C^*_u(X) \}$ of $C^*_u(X)$, where $\nabla _j(a):=i[\pi(x_j),a]$. Since $\mathcal{B}_u(X)$ has the same $\K$-theory as $C^*_u(X)$ (see Appendix 3 of \cite{MR823176}), this pairing induces the group homomorphism $\ebk{\blank, \mathcal{T}^\omega_d}:\K _d(C^*_uX) \to \real$. 

\begin{lem}\label{lem:pair}
We have $\langle \xi , \mathcal{T}_d^\omega \rangle =\Xi_X^* \xi$ for any $\xi \in \K_d(C^*_u(X))$. 
In particular, $\langle \blank, \mathcal{T}_d^\omega \rangle $ is independent of the choice of $\omega \in \beta \nat \setminus \nat$.
\end{lem}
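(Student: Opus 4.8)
The plan is to read $\Xi_X^*$ as the analytic index of a finitely summable Fredholm module over $\mathcal{B}_u(X)$ and then to identify the Connes--Chern character of that module with the cyclic cocycle $\mathcal{T}_d^\omega$. Since the lemma concerns only type {\typeA} and {\typeAIII} symmetries, $\lu \phi \K_{0,c,\tau}(C^*_u(X))$ is just complex $\K_{d \bmod 2}(C^*_u(X))$ and $\Xi_X^*$ takes values in $\K^{d}(\pt) \cong \zahl$. Now $\Xi_X = [\Hilb_{X,G} \hotimes \Delta_{\tilde V}, \id, F]$ with $F = D(1+D^2)^{-1/2}$ and $D = \sum_j \pi(x_j) \hotimes e_j$ is the bounded transform of the ``position'' spectral triple $(\mathcal{B}_u(X), \Hilb_{X,G} \hotimes \Delta_{\tilde V}, D)$; since $D^2 = \pi(|x|^2) \hotimes 1$ and $X$ is a Delone subset of $\Euc ^d$, the operator $(1+D^2)^{-s/2}$ is trace class for every $s>d$, so $\Xi_X$ is $(d+1)$-summable over $\mathcal{B}_u(X)$ (which has the same $\K$-theory as $C^*_u(X)$), and for $a_i$ of finite propagation $[F,a_0][F,a_1] \cdots [F,a_d]$ is trace class. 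By Connes' index theorem for finitely summable Fredholm modules, the pairing $\xi \mapsto \Xi_X^* \xi$ on $\K_d(\mathcal{B}_u(X))$ is computed by the cyclic $d$-cocycle
\[
\mathrm{ch}_d(a_0, \dots, a_d) = \lambda_d \, \Tr_s \big( \gamma \, F [F,a_0][F,a_1] \cdots [F,a_d] \big)
\]
for the appropriate universal constant $\lambda_d$, in the even or odd form according to the parity of $d$ (with the Clifford-stabilization by $\Cl(V^-)$ already present in $\Xi_X$ absorbing the bookkeeping).

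The main step is to identify $\mathrm{ch}_d(\Xi_X)$, up to a scalar, with $\mathcal{T}_d^\omega$. For $a \in \mathcal{B}_u(X)$ one has $[D,a] = -i\sum_j \nabla_j(a) \hotimes e_j$, and writing $(1+D^2)^{-1/2}$ through its resolvent (Mellin) representation reduces each $[F,a]$ to $[D,a]$ times a positive power of $(1+D^2)^{-1/2}$, up to strictly lower order. The supertrace over the Clifford module $\Delta_{\tilde V}$ annihilates every monomial in the $e_j$ except the volume element $e_1 \cdots e_d$, which produces exactly the alternating sum $\sum_{\sigma \in \mathfrak{S}_d} (-1)^{|\sigma|}$ and a fixed scalar; carrying out the remaining ordinary-trace integral in the resolvent variables collapses to a universal constant times $\sum_{\sigma} (-1)^{|\sigma|} \sum_{x \in X} \big( a_0 \nabla_{\sigma(1)}(a_1) \cdots \nabla_{\sigma(d)}(a_d) \big)_{xx}$ weighted by a diagonal-concentrated kernel of constant asymptotic density. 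By amenability of $X$ and the fact that $\mathbf{P}_\omega$ is the F{\o}lner average, this reproduces $c_d \, \mathcal{T}_d^\omega(a_0, \dots, a_d)$, so $\mathrm{ch}_d(\Xi_X) = c_d \, \mathcal{T}_d^\omega$ in $HC^d(\mathcal{B}_u(X))$. Finally I would pin down $c_d = 1$ by testing on one generator: by Corollary \ref{cor:cBC} the image of $\K_d(C^*_u(X))$ in $\K_d(C^*(X))$ is generated by $\mu_{X,G}$ of the de Rham class of $\Euc^d$, represented for $X = \zahl^d$ by an explicit Bott projection, for which both $\Xi_X^*$ and $\langle \blank, \mathcal{T}_d^\omega \rangle$ equal the classical top Chern number $1$ (the normalization of (\ref{form:cyc}) being chosen exactly so that $c_d = 1$); the $\omega$-independence is then automatic, since $\Xi_X^*$ does not see $\omega$.

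The hard part is this identification: tracking the resolvent expansion and the Clifford contractions carefully enough to see that the surviving ordinary operator trace becomes precisely the F{\o}lner average in $\mathcal{T}^\omega$, i.e.\ passing from Connes' ordinary-trace index cocycle to the per-unit-volume cocycle. If the direct comparison of the two cocycles in $HC^d(\mathcal{B}_u(X))$ turns out to be too delicate, a safer route is to stay at the level of pairings: show that $\langle \blank, \mathcal{T}_d^\omega \rangle$, like $\Xi_X^*$ by Proposition \ref{prp:index}, factors through $\iota_* \colon \K_d(C^*_u(X)) \to \K_d(C^*(X)) \cong \zahl$ — using that $\mathcal{T}^\omega$ is a tracial weight whose $\K$-theory pairing is insensitive to the $C^*(X)$-class, two projections agreeing in $C^*(X)$ differing by a locally compact, $\mathcal{T}^\omega$-negligible perturbation by amenability — and then match the two resulting homomorphisms $\zahl \to \real$ on that single generator.
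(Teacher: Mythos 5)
Your primary route is exactly the paper's argument: the stated proof reduces the lemma to showing that the Connes--Chern character of the Fredholm module $\Xi_X$ over $\mathcal{B}_u(X)$ coincides with $\mathcal{T}_d^\omega$ as a cyclic cocycle, and then cites Theorem~10 of \cite{MR1295473} for that comparison, which is precisely the step you flag as the ``hard part'' (turning the ordinary operator trace in the Chern cocycle into the per-unit-volume F{\o}lner average $\mathcal{T}^\omega$); so you have correctly located where all the work lies even though, like the paper, you do not carry it out. One caution on the fallback route you propose: a finite-propagation, locally compact operator is not automatically $\mathcal{T}^\omega$-negligible --- already $\mathcal{T}^\omega(1)$ equals the asymptotic density of $X$, which is nonzero --- so establishing that $\langle\,\cdot\,,\mathcal{T}_d^\omega\rangle$ factors through $\iota_*\colon \K_d(C^*_u(X)) \to \K_d(C^*(X))$ is not a soft consequence of amenability but would itself require essentially the same index-theoretic comparison you were trying to sidestep.
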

\begin{proof}
It suffices to show that the Chern-Connes character of $\Xi_X$ coincides with $\mathcal{T}^\omega_d$ as cyclic cocycles. The proof is given in the same way as Theorem 10 of \cite{MR1295473}.
\end{proof}

Let $X=Y_+ \cup Y_-$ be a Delone partition of $\Euc ^d =\Euc ^d_+ \cup \Euc _-^d$.
In the same way as above, we obtain a cyclic $(d-1)$-cocycle 
\[
\hat{\mathcal{T}} _{d-1}^\omega (a_0,a_1,\cdots , a_{d-1}):=\sum_{\sigma \in \mathfrak{S}_{d-1}} (-1)^{|\sigma|}\mathcal{T}^\omega  (a _0 \nabla _{\sigma (1)}(a_1) \cdots \nabla _{\sigma (d-1)}(a_{d-1}))
\]
on a dense subalgebra of $C^*_u(Z \subset Y_1)$ where $Z:=Y_1 \cap Y_2$. 
By Corollary \ref{cor:MVind} and Lemma \ref{lem:pair}, we have 
\maa{\langle \xi, \mathcal{T}^\omega_d\rangle =\langle \partial_{\mathrm{MV}}(\xi), \hat{\mathcal{T}}^\omega_{d-1}\rangle. \label{form:MVcyc}}

\section{Controlled topological phases}\label{section:3}
In this section, we introduce the notion of controlled topological phase for bulk and edge quantum systems with an arbitrary symmetry of quantum mechanics in the sense of \cite{MR3119923}. 
Let $(G,\phi,c,\tau)$ be as Assumption \ref{assump:ext} and let $X$ be a proper metric space with proper isometric $G$-action.
\begin{defn}\label{defn:bulkTP}
We say that a \textit{bulk quantum system controlled at $X$} with the symmetry class $(G,\phi , c,\tau )$ is a quadruple $(\Hilb, U, \pi , H)$ where 
\begin{itemize}
\item $(\Hilb,U,\pi)$ is a uniformly finite $(\phi,c,\tau)$-twisted $(X,G)$-module,
\item $H \in \Bop (\Hilb)$ is a $c$-twisted $G$-invariant controlled self-adjoint operator with a spectral gap at $[-\varepsilon, \varepsilon]$ for some $\varepsilon >0$.
\end{itemize}
A \emph{controlled bulk topological phase} is a equivalence class of bulk quantum systems controlled at $X$ with the symmetry class $(G,\phi,c,\tau)$ with respect to the equivalence relation generated by the following:
\begin{itemize}
\item[(1)] $(\Hilb, U, \pi ,H) \sim (\Hilb ', U', \pi' ,H')$ if there is an isomorphism $V: \Hilb  \to \Hilb '$ of $(\phi,c,\tau)$-twisted $(X,G)$-modules such that $VHV^*=H'$,
\item[(2)] $(\Hilb, U, \pi ,H_0) \sim (\Hilb, U, \pi ,H_1)$ if there is a continuous path $H_t \in \Bop (\Hilb)$ of $c$-twisted $G$-invariant controlled self-adjoint operators with a spectral gap at $[-\varepsilon, \varepsilon]$ for some $\varepsilon >0$,
\item[(3)] $(\Hilb, U, \pi ,H) \sim (\Hilb \oplus \Hilb ', U \oplus U', \pi \oplus \pi ' ,H \oplus H')$ if $H'$ commutes with $\pi (c_0(X))$.
\end{itemize}
We write $\TP (X;G,\phi , c,\tau )$ for the set of bulk topological phases. For a CT-type $(\mathscr{A},\tau)$ with the Cartan label $\mathrm{L}$, we simply write $\TP(X;\mathrm{L}):=\TP(X;\mathscr{A},\phi,c,\tau)$ and $\TP(X;G,\mathrm{L}):=\TP(X;G \times \mathscr{A},\phi,c,\tau)$.
\end{defn}

\begin{remk}
In this framework, we can deal with a Hamiltonian $H$ with a long range term such that $\ssbk{H_{xy}}$ decays as $d(x,y) \to \infty$ uniformly. Actually, such $H$ is approximated by the controlled operators $H_R$ given by 
\[ (H_R) _{xy}= \begin{cases}H_{xy} & d(x,y)<R, \\  0 & d(x,y)\geq R, \end{cases}\]
in the norm topology.
For a sufficiently large $R>0$, $\ssbk{H-H_R}$ is small enough that $H_R$ has a spectral gap at $0$ and all $H_{R'}$ with $R'>R$ are in the same topological phase.
\end{remk}

\begin{remk}
Here, we assume that the Hilbert space $\Hilb$ of quantum states is equipped with an a priori $\Zt$-grading. In the case of {\typeAIII} systems, it corresponds to an a priori choice of reference isomorphisms $h_{\mathrm{ref}}$ in \cite{mathph150404863}. 
This assumption is not artificial because $\Hilb$ has a canonical $\Zt$-grading for a large class of examples. 
For example, when the system has the particle-hole symmetry such as Bogoliubov--de-Gennes Hamiltonians, $\Hilb$ is decomposed as the direct sum of the particle and hole components. 

When $\Hilb$ does not have a canonical $\Zt$-grading, we only have to fix an arbitrary choice of such $\Zt$-gradings. 
For example, when we consider type A, AI or {\typeAII} symmetry class, we can use the trivial grading $\gamma_{\Hilb} =1$. 
In fact, we give the relation (3) in order for the topological phase of a system to be independent of the choice of $\Zt$-gradings of $\Hilb$ which is compatible with $U$ and $\pi$. 
Actually, let $\Hilb  '$ be the Hilbert space $\Hilb$ with another $\Zt$-grading $\gamma$ compatible with $U$ and $\pi$. Then, 
\ma{
(\Hilb, U,\pi,H) &\sim (\Hilb, U, \pi ,H) \oplus (\Hilb', U, \pi,\gamma) \sim (\Hilb, U,\pi,\gamma ) \oplus (\Hilb ', U,\pi,H)\\
& \sim (\Hilb ', U,\pi,H),
}
where the second equivalence is given by the homotopy $R_t(H \oplus \gamma)R_t^*$ where $R_t$ are the $2$-dimensional rotation matrices. 

On the other hand, a general quantum system with sublattice symmetry does not admit any compatible $\Zt$-grading. 
For example, let $X_1$ and $X_2$ be mutually disjoint Delone subsets of $\Euc ^d$ and let $X:=X_1 \sqcup X_2$. 
If $P$ acts on each $\Hilb _x$ as $+1$ for $x \in X_1$ and $-1$ for $x \in X_2$, there is no compatible $\Zt$-grading on $\Hilb$.
However, even in this case, sometimes we can get a controlled bulk quantum system by deforming $X$.
More precisely, if there is another $G$-invariant Delone subset $X'$ of $\Euc ^d$ and a $G$-map $f : X \to X'$ such that $d(x,f(x))$ is uniormly bounded and $\sum_{x \in f^{-1}(x')} \dim \Hilb _{x}^0=\sum_{x \in f^{-1}(x')} \dim \Hilb _{x}^1$ for any $x' \in X'$, we can define a bulk quantum system $(\Hilb, U, \pi \circ f^*, H)$ controlled at $X'$. 
For example, in the case of a honeycomb lattice,  we get a map $X \to X_1$ by shifting $X_2$ as in Section 3.1 of \cite{MR3123539}.
\end{remk}

\begin{prp}\label{prp:bulk}
We have the isomorphism 
\[
\TP(X;G,\phi,\tau , c) \cong \lu \phi \K _{0,\tau,c}^Q (C^*_{u,G}(X)^\Pi )/\Im (\iota_u) _*,
\]
where $\iota_u : c_b(X,\Kop (\Hilb _{X,G}))^\Pi \to C^*_{u,G}(X)^\Pi$ is the inclusion.
\end{prp}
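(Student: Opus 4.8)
The plan is to build a direct-sum-compatible bijection
\[
\Phi\colon\TP(X;G,\phi,c,\tau)\longrightarrow\lu\phi\K_{0,c,\tau}^Q(C^*_{u,G}(X)^\Pi)/\Im(\iota_u)_*
\]
sending a bulk system to the $\K$-class of its \emph{spectral symmetry}. Given $(\Hilb,U,\pi,H)$ I would first embed $\Hilb$ as a sub-$(\phi,c,\tau)$-twisted $(X,G)$-module of $\Hilb_{X,G}\hotimes\mathscr{V}$ for a finite-dimensional $(\phi,c,\tau)$-twisted representation $\mathscr{V}$, which is possible since $\Hilb$ is uniformly finite; writing $\gamma^\perp$ for the grading operator of the orthogonal complement, $H\oplus\gamma^\perp$ is a controlled, $c$-twisted $G$-invariant, self-adjoint operator on $\Hilb_{X,G}\hotimes\mathscr{V}$ with a spectral gap at $0$, so $\sgn(H\oplus\gamma^\perp)=\sgn(H)\oplus\gamma^\perp$ is defined and, being a norm limit of polynomials in $H\oplus\gamma^\perp$, is a symmetry in $\mathscr{F}^Q_{c,\mathscr{V}}(C^*_{u,G}(X)^\Pi)$. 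I set $\Phi[(\Hilb,U,\pi,H)]$ to be its class modulo $\Im(\iota_u)_*$; since $\gamma^\perp$ has finite rank over each point its class lies in $\Im(\iota_u)_*$, so using that stabilization maps act as the identity on $\K$-groups this is independent of the embedding. Well-definedness on phases then amounts to checking the three relations of Definition~\ref{defn:bulkTP}: relation~(1) by conjugating with the module isomorphism; relation~(2) because $t\mapsto\sgn(H_t)$ is norm-continuous whenever $H_t$ is a norm-continuous path with uniform spectral gap (the spectra stay in a fixed compact set); and relation~(3) because a Hamiltonian commuting with $\pi(c_0(X))$ has block-diagonal spectral symmetry, whose class lies in $\Im(\iota_u)_*$ and so is killed in the quotient. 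Additivity of $[\,\cdot\,]$ under direct sums makes $\Phi$ a homomorphism, and equips $\TP$ with a group structure.

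For surjectivity, given a class represented by $s\in\mathscr{F}^Q_{c,\mathscr{V}}(C^*_{u,G}(X)^\Pi)$ I would approximate $s$ in norm by a controlled, $c$-twisted $G$-invariant, self-adjoint operator $H$ on $\Hilb_{X,G}\hotimes\mathscr{V}$: controlled locally compact operators are norm-dense, and $c$-twisted $Q$-invariance is restored by applying the contraction $E_Q\colon T\mapsto|Q|^{-1}\sum_{q\in Q}(-1)^{c(q)}\alpha_q(T)$, which preserves this set and fixes $s$. For $\|H-s\|$ small, $H$ has a spectral gap at $0$ and $\sgn(H)$ is homotopic to $s$, so $(\Hilb_{X,G}\hotimes\mathscr{V},U_{\mathrm{can}},\pi_{\mathrm{can}},H)$ is a bulk system with $\Phi$-value $[s]$.

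For injectivity, assume $\Phi[\mathcal{S}_0]=\Phi[\mathcal{S}_1]$. Using relation~(3) to absorb complements as grading operators and to match up the auxiliary representations, I would replace each $\mathcal{S}_i$ by an equivalent system $(\Hilb_{X,G}\hotimes\mathscr{V},U_{\mathrm{can}},\pi_{\mathrm{can}},H_i)$ on a common module. Then $[\sgn(H_0)]-[\sgn(H_1)]\in\Im(\iota_u)_*$ equals $[a]$ for a block-diagonal symmetry $a$; adding to $\mathcal{S}_1$ the ($\pi(c_0(X))$-commuting) system with Hamiltonian $a$ and to $\mathcal{S}_0$ a grading-operator system on the same auxiliary module, both by relation~(3), I obtain two systems on the same module whose spectral symmetries now have equal $\K$-class (using $[\sgn(H_1)]+[a]=[\sgn(H_0)]$ and that the class of a grading symmetry is null). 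By the description of $\lu\phi\K^Q_{0,c,\tau}$ as homotopy classes of symmetries (Definition~\ref{defn:twK}), after one more grading-stabilization these are joined by a norm-continuous path of symmetries in $C^*_{u,G}(X)^\Pi\hotimes\Kop(\mathscr{V})$; splicing in the straight-line paths from each Hamiltonian to its spectral symmetry gives a norm-continuous path $\Gamma$ of self-adjoint $c$-twisted $G$-invariant operators with uniform spectral gap from $H_0$ to $H_1$, lying in the norm closure. Finally, for a sufficiently fine subdivision I replace each node of $\Gamma$ by a controlled invariant operator within a small fraction of the gap (again via $E_Q$) and join consecutive nodes by straight-line segments, which remain controlled and gapped; by relation~(2) this exhibits $\mathcal{S}_0\sim\mathcal{S}_1$.

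The step I expect to be the main obstacle is exactly this last back-and-forth between \emph{controlled} operators---the only admissible Hamiltonians---and the norm closure $C^*_{u,G}(X)^\Pi$, in which the $\K$-theory actually lives: both the approximation of a symmetry by a gapped controlled Hamiltonian and the ``broken-line'' conversion of a norm-homotopy of symmetries into a relation~(2) homotopy must be carried out while simultaneously preserving self-adjointness, the spectral gap, the $\Zt$-grading, and the $c$-twisted $Q$-action. Everything else is bookkeeping, given the earlier facts that uniformly finite $(X,G)$-modules embed into $\Hilb_{X,G}\hotimes\mathscr{V}$ and that $\lu\phi\K^Q_{0,c,\tau}(\,\cdot\,)$ is computed as the set of homotopy classes of symmetries.
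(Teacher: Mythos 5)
Your proposal is correct and takes essentially the same route as the paper, only run in the opposite direction. The paper defines a surjection $\Theta\colon\lu\phi\K^Q_{0,c,\tau}(C^*_{u,G}(X)^\Pi)\to\TP(X;G,\phi,c,\tau)$ by sending a symmetry $s$ to the bulk system on $\Hilb_{X,G}\hotimes\mathscr{V}$ with Hamiltonian $s$, establishes surjectivity by the same stabilization $(\Hilb,U,\pi,H)\sim(\Hilb_{X,G}\hotimes\mathscr{V},U,\pi,H\oplus\gamma_{\Hilb^\perp})$ you use, and then computes $\Ker\Theta=\Im(\iota_u)_*$; you instead construct the inverse $\Phi$ by taking the spectral symmetry and check well-definedness, injectivity and surjectivity directly. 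The content is identical on both sides: uniformly finite modules absorb into $\Hilb_{X,G}\hotimes\mathscr{V}$, relation (3) corresponds to adding classes in $\Im(\iota_u)_*$, and relation (2) corresponds to homotopy of symmetries. One place where the paper is more explicit than you is the invariance of the $\K$-class under conjugation by even $G$-invariant unitaries $u\in c_b(X;\Kop(\Hilb_{X,G}\hotimes\mathscr{V}))$ (needed for relation (1) and for independence of the embedding): the paper proves this with the rotation trick $v_t=(u\oplus 1)R_t(u^*\oplus 1)R_t^*$, which is the precise content behind your appeal to ``stabilization maps act as the identity on $\K$-groups.'' Conversely, you are more careful than the paper about the distinction between genuinely controlled operators (the admissible Hamiltonians) and symmetries in the norm closure; the paper elides this by writing $\Theta[H\oplus\gamma_{\Hilb^\perp}]$ even though $H\oplus\gamma_{\Hilb^\perp}$ is not a unitary, whereas your approximation-by-controlled-operators and broken-line arguments fill this in. No gap on either side; the approaches are the same bijection viewed from opposite ends.
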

In particular, it is independent of the choice of full $G$-invariant Delone subsets $X$ of $M$ up to $\Im (\iota _u)_*$ by Lemma \ref{lem:Morita}.

\begin{proof}
We have a well-defined map
\[ \Theta : \lu \phi \K _{0,\tau,c}^Q (C^*_{u,G}(X)^\Pi ) \to \TP(X;G,\phi,c,\tau), [s] \mapsto (\Hilb _{X,G}\otimes \mathscr{V} , U,\pi , s ). \]

It is suejective because any $\Hilb $ is contained in some $\Hilb _{X,G} \otimes \mathscr{V}$ and 
\[(\Hilb , U,\pi , H) \sim (\Hilb _{X,G}\otimes \mathscr{V} , U,\pi , H \oplus \gamma _{\Hilb ^\perp} )=\Theta [H \oplus \gamma _{\Hilb ^\perp}].\] 
By the relation (3), $\Im \iota_* \subset \Ker \Theta $.
To see that $\Ker \Theta = \Im \iota _*$, it suffices to check $[usu^*]=[s]$ for any even $G$-invariant unitary $u \in c_b(X;\Kop (\Hilb _{X,G} \otimes \mathscr{V}))$. Set 
\[ v_t:=(u \oplus 1)R_t(u^* \oplus 1) R_t^* \in c_b(X, \Kop ((\Hilb_{X,G} \otimes \mathscr{V})^2)). \]
Then, since $v_0 =1 \oplus 1$ and $v_1=u \oplus u^*$, we get a path $v_t(s \oplus \gamma)v_t^*$ connecting $s \oplus \gamma$ and $usu^* \oplus \gamma $, which shows $[s]=[usu^*]$.
\end{proof}

\begin{remk}\label{rmk:twist}
Even if Assumption \ref{assump:ext} is not satisfied such as the case that the translation symmetry is twisted by the magetic field, we can consider cotroled topological phases in the same way. 
In this case, controlled topological phases are classified by the twisted equivariant $\K$-group of ``twisted'' invariant uniform Roe algebras defined by using twisted $(X,G)$-modules.  
For example, when $G=\zahl ^d$, $X=|\zahl ^d|$ and $\tau$ is nontrivial, this $\Cst$-algebra is the same thing as the noncommutative torus $C^*_{-\tau, r}(\zahl ^d)$ used in the previous researches~\citelist{\cite{MR1303779}\cite{MR1295473}\cite{MR1877916}}. 
\end{remk}

\begin{exmp}
When $\Pi=\zahl ^d$ and $X=|\zahl ^d|$, then $\TP(X;\Pi,{\mathrm{L}})$ is isomrophic to  the reduced $\KL ^0$-group of the Brillouin torus $B_\Pi$ because $C^*_u(X)^\Pi \cong C^*_r(\Pi)$ as is described in Example \ref{exmp:eqRoe}.  

More generally, when $G$  is a cocompact discrete Euclidean group, $\TP(X;G,{\mathrm{L}})$ is isomorphic to $\KL^{0,\nu}_Q (B_\Pi)/\mathrm{Triv}$, which is essentially the same thing as the set of reduced topological phases $\mathscr{RTP}_F(G,\phi,\tau,c)$ of type $F$ in \cite{MR3119923} (note that they deal with Galilean spacetimes). 
\end{exmp}

\begin{exmp}
For a Delone subset of $\Euc ^d$, the group $\TP(X;{\mathrm{L}})$ is isomorphic to $\KL_0(C^*_u(X))/\Im (\iota_u)_*$. By Lemma \ref{lem:Morita}, they are isomorphic to the groups calculated in Lemma \ref{lem:uRoe} when $\mathrm{L}$ is a real Cartan label.
\end{exmp}

\begin{remk}\label{remk:compare}
The set $\TP(X;L)$ is related to the classification of topological phases via $\K$-groups of noncommutative Brillouin torus.
Recall that the noncommutative Brillouin torus $A:=\zahl ^d \ltimes C(\Omega)$ introduced in \cite{MR1295473} is a subalgebra of $C^*_u(X)$ (Example \ref{exmp:groupoid} and Lemma \ref{lem:groupoid}). 
For any gapped Hamiltonian $H \in A\hotimes \Kop(\mathscr{V})$ as in \citelist{\cite{MR1295473}\cite{mathph14067366}\cite{mathKT150906271}}, the inclusion $j: A \to C^*_u(X)$ induces a homomorphism
\[\KL_0(A) \to \KL_0 (C^*_{u}(X) ) \to \TP(X;L)\]
mapping $[H]$ to $[\ell^2(X)\hotimes \mathscr{V},U,\pi,H]$. 
We also remark that $j_* \Xi _X \in \KR^d(A)$ is the same thing as the $\KR$-cycle used in \cite{GrossmannSchulz-Baldes} and \cite{mathKT150907210}.
\end{remk}

\begin{exmp}
Let $G$ be a discrete group acting on the hyperbolic space $M=SO^+(1,d)/SO(d)$ properly and isometrically and let $X$ be a $G$-invariant Delone subset of $M$. Then, the Hamiltonian studied in \citelist{\cite{MR1600480}\cite{MR1684000}} determines an element in $\TP (X;G, \mathrm{A})$ when the $2$-cocycle is trivial.
\end{exmp}

Next we introduce the notion of edge topological phase. In this paper, we use the hat symbol for objects in edge quantum systems instead of objects related to $\Zt$-gradings. Let $Z \subset Y$ be a pair of metric spaces with a proper isometric $G$-action.  Here we write $q: C^*_\Hilb(Y) \to C^*_\Hilb(Y,Z)$ for the quotient.
\begin{defn}\label{defn:edgeTP}
We say that an \textit{edge quantum system controlled at $Z \subset Y$} with the symmetry class $(G,\phi , c, \tau )$ is a quadruple $(\hat{\Hilb}, \hat{U}, \hat{\pi} ,\hat{H})$ where 
\begin{itemize}
\item $(\hat{\Hilb}, \hat{U}, \hat{\pi})$ is a uniformly finite $(\phi,c,\tau)$-twisted $(Y,G)$-module,
\item $\hat{H} \in \Bop (\hat{\Hilb})$ is a $c$-twisted $G$-invariant controlled self-adjoint operator such that $q (\hat{H}) \in C^*_{\Hilb}(Y,Z)$ has a spectral gap at $[-\varepsilon,\varepsilon]$ for some $\varepsilon >0$.
\end{itemize}
A \emph{controlled edge topological phase} is a equivalence class of edge quantum systems controlled at $Z \subset Y$ with the symmetry class $(G,\phi,c,\tau)$ with respect to the equivalence relation generated by the following:
\begin{itemize}
\item[(1)] $(\hat{\Hilb}, \hat{U}, \hat{\pi} ,\hat{H}) \sim (\hat{\Hilb} ',\hat{ U}', \hat{\pi}' ,\hat{H}')$ if there is a unitary $V: \hat{\Hilb}  \to \hat{\Hilb} '$ such that $V\hat{U}_gV^* =\hat{U}_g'$, $V\hat{\pi} (f)V^*=\hat{\pi} '(f)$ and $V\hat{H}V^*=\hat{H}'$,
\item[(2)] $(\hat{\Hilb}, \hat{U}, \hat{\pi} ,\hat{H}_0) \sim (\hat{\Hilb}, \hat{U}, \hat{\pi} ,\hat{H}_1)$ if there is a continuous path $\hat{H}_t \in \Bop (\hat{\Hilb})$ of $c$-twisted $G$-invariant controlled self-adjoint operator such that $q (\hat{H}) \in C^*_u(Y,Z)$ has a spectral gap at $[-\varepsilon,\varepsilon]$ for some $\varepsilon >0$,
\item[(3)] $(\hat{\Hilb}, \hat{U}, \hat{\pi},\hat{H}) \sim (\hat{\Hilb} \oplus \hat{\Hilb} ', \hat{U} \oplus \hat{U}', \hat{\pi} \oplus \hat{\pi} ' ,\hat{H} \oplus \hat{H}')$ if $\hat{H}'$ is invertible.
\end{itemize}
We write $\hat{\TP} (Z \subset Y;G,\varphi , \tau ,c)$ for the set of controlled edge topological phases.
\end{defn}

\begin{prp}\label{prp:edge}
We have the isomorphism
$$\hat{\mathscr{TP}}(Z \subset Y;G,\phi,\tau , c) \cong \Im \partial \subset  \lu \phi \K _{-1, \tau,c}^{Q} (C^*_{u,G}(Z \subset Y)^\Pi)$$
where $\partial $ is the boundary map of the long exact sequence associated to
\[0 \to C^*_{u,G}(Z \subset Y)^\Pi \to C^*_{u,G}(Y)^\Pi \to C^*_{u,G}(Y,Z)^\Pi \to 0.\]
\end{prp}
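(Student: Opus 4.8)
The plan is to build the isomorphism by hand, mirroring the proof of Proposition~\ref{prp:bulk} but carried out modulo the ideal. Write $B:=C^*_{u,G}(Y)^\Pi$, $I:=C^*_{u,G}(Z\subset Y)^\Pi$ and $q\colon B\to B/I=C^*_{u,G}(Y,Z)^\Pi$ for the quotient, and fix a continuous odd function $f\colon\real\to[-1,1]$ with $f(t)=\sgn(t)$ for $|t|\ge\varepsilon$. Given an edge quantum system $(\hat\Hilb,\hat U,\hat\pi,\hat H)$, embed $\hat\Hilb$ into a standard module $\Hilb_{Y,G}\hotimes\mathscr{V}$ (possible since every uniformly finite twisted $(Y,G)$-module does so); then $\hat H$ lies in $B\hotimes\Kop(\mathscr{V})$, the element $s_{\hat H}:=q(f(\hat H))=\sgn(q(\hat H))$ is a $c$-twisted $G$-invariant symmetry of $B/I\hotimes\Kop(\mathscr{V})$, and $f(\hat H)$ is a $c$-twisted $G$-invariant self-adjoint lift of it. Set
\[
\hat\Theta[(\hat\Hilb,\hat U,\hat\pi,\hat H)]:=\partial[s_{\hat H}]=[\sigma(f(\hat H))]\in\lu\phi\K^Q_{-1,\tau,c}(I),
\]
where $\partial$ is the boundary map of the long exact sequence of Definition~\ref{defn:twK}. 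By construction $\hat\Theta$ takes values in $\Im\partial$, and by exactness $\Im\partial\cong\lu\phi\K^Q_{0,\tau,c}(B/I)/\Im(q_*)$, so it remains to check that $\hat\Theta$ is a well-defined bijection onto $\Im\partial$.

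Well-definedness reduces to the three generating relations of Definition~\ref{defn:edgeTP}. Relation~(2) is immediate: a homotopy $\hat H_t$ with $q(\hat H_t)$ gapped gives a homotopy $s_{\hat H_t}$ of symmetries, so $\partial[s_{\hat H_t}]$ is constant. For relation~(1), an intertwining unitary $V$ commutes with $\hat\pi(c_0(Y))$, hence has propagation zero, descends to a unitary $q(V)$ conjugating $s_{\hat H}$ to $s_{\hat H'}$, and the rotation argument at the end of the proof of Proposition~\ref{prp:bulk}, now applied inside $B/I$, shows $[q(V)\,s_{\hat H}\,q(V)^*]=[s_{\hat H}]$. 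For relation~(3), an invertible controlled self-adjoint $\hat H'$ (after embedding into a standard module, enlarging by a grading operator if necessary) yields a symmetry $\sgn(\hat H')$ of $B\hotimes\Kop(\mathscr{W})$ with $q_*[\sgn(\hat H')]=[s_{\hat H'}]\in\Im(q_*)=\Ker\partial$, whence $\partial[s_{\hat H\oplus\hat H'}]=\partial([s_{\hat H}]+[s_{\hat H'}])=\partial[s_{\hat H}]$; since a grading operator on a standard module is itself controlled, invertible and $c$-twisted $G$-invariant, relation~(3) also lets us freely enlarge the underlying module to any $\Hilb_{Y,G}\hotimes\mathscr{V}$. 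Surjectivity onto $\Im\partial$ is then easy: any class there is $\partial[s]$ for a symmetry $s$ of some $B/I\hotimes\Kop(\mathscr{V})$; choosing a $c$-twisted $G$-invariant self-adjoint lift $\hat s\in B\hotimes\Kop(\mathscr{V})$ and approximating it in norm by a controlled $c$-twisted $G$-invariant self-adjoint $\hat H$ (possible because $B$ is the closure of its controlled part, and averaging over the finite group $Q$ preserves this), we get $q(\hat H)$ gapped with $s_{\hat H}=s$, so $(\Hilb_{Y,G}\hotimes\mathscr{V},U,\pi,\hat H)$ is sent to $\partial[s]$.

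Injectivity is the main work. If $\hat\Theta[\hat H_0]=\hat\Theta[\hat H_1]$, then after using relation~(3) to put both systems on a common standard module $\Hilb_{Y,G}\hotimes\mathscr{V}$ we have $[s_{\hat H_0}]-[s_{\hat H_1}]\in\Ker\partial=\Im(q_*)$; writing this difference as $q_*[u]$, representing $[u]$ by a symmetry, approximating it by an invertible controlled self-adjoint operator $\hat J$, and stabilising $\hat H_1$ by $\hat J$ via relation~(3), we reduce to the case $[s_{\hat H_0}]=[s_{\hat H_1}]$ in $\lu\phi\K^Q_{0,\tau,c}(B/I)$; unwinding the definition of this group and enlarging the module once more, we may assume $s_{\hat H_0}$ and $s_{\hat H_1}$ are joined by a path $s_t$ of $c$-twisted $G$-invariant symmetries in $B/I\hotimes\Kop(\mathscr{V}')$. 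Pick a continuous $c$-twisted $G$-invariant self-adjoint lift $\hat s_t\in B\hotimes\Kop(\mathscr{V}')$ of $s_t$ (e.g.\ the self-adjoint part of the image of $s_t$ under a continuous linear section of $q$), partition $[0,1]$ finely, approximate each $\hat s_{t_k}$ by a controlled $c$-twisted $G$-invariant self-adjoint $\hat H^{(k)}$, and interpolate linearly on each subinterval: the result is a path of controlled self-adjoint operators of uniformly bounded propagation whose quotient stays close to $s_t$, hence gapped throughout — a relation~(2) homotopy from $\hat H^{(0)}$ to $\hat H^{(N)}$ with $s_{\hat H^{(0)}}=s_{\hat H_0}$ and $s_{\hat H^{(N)}}=s_{\hat H_1}$. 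Finally, whenever two controlled self-adjoint operators $\hat H_a,\hat H_b$ on the same standard module have $q(\hat H_a),q(\hat H_b)$ gapped and $s_{\hat H_a}=s_{\hat H_b}=:S$, then $q(\hat H_a)=S\,|q(\hat H_a)|$ and $q(\hat H_b)=S\,|q(\hat H_b)|$ with $|q(\hat H_a)|,|q(\hat H_b)|$ positive, invertible and commuting with $S$, so $(1-t)\hat H_a+t\hat H_b$ is controlled of bounded propagation with quotient $S\big((1-t)|q(\hat H_a)|+t|q(\hat H_b)|\big)$, which is invertible with modulus bounded below, hence gapped; applying this to connect $\hat H^{(0)}$ to $\hat H_0$ and $\hat H^{(N)}$ to $\hat H_1$ gives $[\hat H_0]=[\hat H_1]$ in $\hat{\mathscr{TP}}$.

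The step I expect to be the real obstacle is the one just described — lifting a homotopy of symmetries in the quotient $B/I$ to a genuine relation~(2) homotopy of \emph{controlled} Hamiltonians in $B$ — which is what forces the bounded-propagation piecewise-linear interpolation and the ``convex combination with matching polar part'' trick, together with the bookkeeping needed to reconcile the three equivalence relations of Definition~\ref{defn:edgeTP} with the additive structure on twisted equivariant $\K$-classes.
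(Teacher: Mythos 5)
Your argument is correct and is essentially the paper's proof, just run in the opposite direction: the paper defines the inverse map $\hat\Theta$ from $\lu\phi\K^{Q}_{0,\tau,c}(C^*_{u,G}(Y,Z)^\Pi)$ to $\hat{\mathscr{TP}}$ by lifting symmetries to controlled Hamiltonians and then cites the argument of Proposition~\ref{prp:bulk} for $\Ker\hat\Theta=\Im q_*$ and surjectivity, which is exactly the lifting/piecewise-controlled-approximation content you spell out at length, before identifying $\lu\phi\K^{Q}_{0,\tau,c}(B/I)/\Im q_*\cong\Im\partial$. The only small slip is your claim at the very end that $s_{\hat H^{(0)}}=s_{\hat H_0}$ exactly (a controlled approximation of a lift has sign only \emph{close} to the target symmetry, not equal to it), but this is harmless: the linear interpolation $(1-r)\hat H_0+r\hat H^{(0)}$ has quotient $s_0\bigl((1-r)|q(\hat H_0)|+r\bigr)$ plus an error of norm $r\ssbk{q(\hat H^{(0)})-s_0}$, so it stays gapped as soon as the approximation is taken within norm distance $1$ of $s_0$, and the conclusion stands.
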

In particular, it is independent of the choice of $G$-invariant Delone pairs $Z \subset Y$ of $N \subset M$ by Lemma \ref{lem:Morita}.
\begin{proof}
We have a well-defined map
\[ \hat{\Theta } : \lu \phi \K _{-1, \tau,c}^{Q} (C^*_{u,G}(Y,Z)^\Pi) \to \hat{\TP}(Z \subset Y;G,\phi,\tau , c), [s] \mapsto (\Hilb_{Y,G}\otimes \mathscr{V}, U,\pi, \hat{s}) \]
where $\hat{s}$ is a self-adjoint $c$-twisted $Q$-invariant lift of $s$ in $C^*_{u,G}(Y)^\Pi$.
In the same way as Proposition \ref{prp:bulk}, we obtain that $\Ker \hat \Theta =\Im q_*$ and $\hat \Theta$ is surjective.
Finally, we obtain the conclusion since $\lu \phi \K _{0,\tau,c}^Q(C^*_{u,G}(Y,Z)^\Pi ) / \Im q_* \cong \Im \partial$. 
\end{proof}

\section{Bulk-edge correspondence}\label{section:4}
In this section, we give the definition of the bulk and edge indices of topological phases controlled at Delone subsets of the Euclidean space $\Euc ^d$ and formulate the bulk-edge correspondence for them. 

Let $(G,\phi,c,\tau)$ be as in Assumption \ref{assump:ext} such that $\Pi$ is free abelian (particularly we take notice of the case as in Example \ref{exmp:extRoe}) and let $X$ be a $G$-invariant Delone subset of $\Euc ^d$. 
\begin{defn}\label{defn:bulk}
We write $\bind$ for the group homomorphism
\[
\bind : \mathscr{TP}(X;G,\phi,c, \tau) \to  \lu \phi \K_Q^{n-d,c',\tau'+\nu }(B_\Pi)/\mathrm{Triv}_{d-n} 
\]
induced from the inclusion $C^* _{u,G}(X)^\Pi \to C^*_G(X)^{\Pi}$ and the isomorphisms in Proposition \ref{prp:bulk}, Corollary \ref{cor:cBC} and Corollary \ref{cor:iota}. Here, we write $\mathrm{Triv}_0:=\mathrm{Triv}$ and $\mathrm{Triv}_k:=0$ for $k \neq 0$.
\end{defn}
When $n=d$, it is merely the quotient map 
\[\lu \phi \K_Q^{n-d,c',\tau'+\nu }(B_\Pi)/\Im (\iota_u)_* \to \lu \phi \K_Q^{n-d,c',\tau'+\nu }(B_\Pi)/\mathrm{Triv} \]
since $C^*_G(X)^\Pi \cong \Kop(\ell^2_\Pi(G)) \otimes \Kop$ by Example \ref{exmp:extRoe}. 
On the other hand, when $n=0$, $\bind=\Xi _X^*$ by Proposition \ref{prp:index}. 
In particular, for CT-symmetries, $\bind$ is the same thing as the index pairing introduced in \cite{GrossmannSchulz-Baldes} by Remark \ref{remk:CTvsR}.

Next, we give a definition of the edge index. 
Let $\Euc ^d=\Euc ^d_+ \cup \Euc ^d_-$ be as in Corollary \ref{cor:MVind} and 
let $Z \subset Y$ be a $G$-invariant Delone pair (Definition \ref{defn:Delone}) in $\Euc ^{d-1} \subset  \Euc ^{d}_+$.

\begin{defn}\label{defn:edge}
We write $\eind$ for the group homomorphism
\[
\eind : \hat{\mathscr{TP}}(Z \subset Y;G,\phi,c', \tau') \to \lu \phi \K_Q^{n-d,c',\tau'+\nu }(B_\Pi)
\]
induced from the inclusion $C^* _{u,G}(Z \subset Y)^{\Pi} \to C^*_G(Z \subset Y)^{\Pi}$, Proposition \ref{prp:edge} and Corollary \ref{cor:cBC}. 
\end{defn}

Finally, we show the bulk-edge correspondence for these indices. Let $X=Y_+ \cup Y_-$ be a Delone partition of $\Euc ^d=\Euc ^d_+ \cup \Euc ^d_-$ and $Z:=Y_1 \cap Y_2$.
Since $\rank \Pi <d$, the bulk and edge indices take values in the same group $\lu \phi \K^{n-d,c',\tau'}_Q(B_\Pi)$.
For a bulk quantum system $(\Hilb, U, \pi , H)$ controlled at $X$ with the symmetry class $(G,\phi,c,\tau)$, we obtain the edge system of the same material as
\[
\hat{\Hilb}:=\pi (\chi _{Y_+}) \Hilb, \ \hat{U}_g:= U_g |_{\hat{\Hilb}} , \ \hat{\pi} (f):=\pi (f)|_{\hat{\Hilb}}, \hat{H}:=\pi(\chi _{Y_+}) H \pi(\chi _{Y_+})
\]
where $\chi _{Y_+}$ is the characteristic function on $Y_+$. 

\begin{thm}
Let $X, Y _\pm , Z$ and $(G,\phi,c,\tau)$ be as above. Then, we have
\[
\bind [\Hilb, U, \pi , H]= \eind[\hat{\Hilb}, \hat{U} , \hat{\pi} , \hat{H}].
\]
\end{thm}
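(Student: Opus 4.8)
The plan is to exhibit both indices as the output of a single class in the $\K$-theory of coarse $\Cst$-algebras, to identify the passage from the bulk Hamiltonian to its edge cut with the boundary map of the coarse Mayer--Vietoris sequence, and then to quote the compatibility of that boundary map with the coarse Baum--Connes isomorphism, which is already recorded in Corollary~\ref{cor:MVind}.

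First I would pass from the Hamiltonian to a symmetry on the bulk side. Choose a finite-dimensional $(\phi,c,\tau)$-twisted representation $\mathscr{V}$ with $\Hilb\subset\Hilb_{X,G}\hotimes\mathscr{V}$; since $H$ is controlled, locally compact and has a spectral gap at $[-\varepsilon,\varepsilon]$, the sign $s:=\sgn(H)$ lies in the norm closure $C^*_{u,G}(X)^\Pi\hotimes\Kop(\mathscr{V})$ of the controlled operators and is a $c$-twisted $G$-invariant symmetry, and the homotopy through gapped elements from $H\oplus\gamma_{\Hilb^\perp}$ to $s\oplus\gamma_{\Hilb^\perp}$ shows that the bulk phase $[\Hilb,U,\pi,H]$ corresponds to $[s]\in\lu\phi\K^Q_{0,c,\tau}(C^*_{u,G}(X)^\Pi)$ under the isomorphism of Proposition~\ref{prp:bulk}. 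By Definition~\ref{defn:bulk}, together with Proposition~\ref{prp:index} (and Remark~\ref{remk:genind} when $G$ is not finite), this yields $\bind[\Hilb,U,\pi,H]=\Xi_X^*[s]$; here $n<d$ makes $\mathrm{Triv}_{d-n}=0$, so no quotient intervenes.

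Next I would identify the cutting procedure with $\partial_{\mathrm{MV}}$. Recall from the proof of Theorem~\ref{thm:MV} that, for the Delone partition $X=Y_+\cup Y_-$, one has $\partial_{\mathrm{MV}}=\partial\circ(p_1\circ\theta)_*$, where $p_1\circ\theta$ is the composition of $C^*_{u,G}(X)^\Pi\to C^*_{u,G}(X,Z)^\Pi\cong C^*_{u,G}(Y_+,Z)^\Pi\oplus C^*_{u,G}(Y_-,Z)^\Pi$ with the first projection, and $\partial$ is the boundary of $0\to C^*_{u,G}(Z\subset Y_+)^\Pi\to C^*_{u,G}(Y_+)^\Pi\xrightarrow{q}C^*_{u,G}(Y_+,Z)^\Pi\to 0$. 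On operators, $p_1\circ\theta$ sends a controlled $T$ to $q(\pi(\chi_{Y_+})T\pi(\chi_{Y_+}))$, because the compression by $\pi(\chi_{Y_+})$ is a $\ast$-homomorphism modulo the ideal of operators supported near $Z=Y_+\cap Y_-$. Since $p_1\circ\theta$ is a $\ast$-homomorphism and $q(\hat H)=(p_1\circ\theta)(H)$ is invertible, continuity of the sign function gives $(p_1\circ\theta)(s)=\sgn(q(\hat H))$, whence $\partial_{\mathrm{MV}}[s]=\partial[\sgn(q(\hat H))]$; by the description of the edge phase in Proposition~\ref{prp:edge} the right-hand side is exactly the class of $[\hat\Hilb,\hat U,\hat\pi,\hat H]$ in $\lu\phi\K^Q_{-1,c,\tau}(C^*_{u,G}(Z\subset Y_+)^\Pi)\cong\lu\phi\K^Q_{-1,c,\tau}(C^*_{u,G}(Z)^\Pi)$. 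Hence, by Definition~\ref{defn:edge} (and Proposition~\ref{prp:index} applied to the $(d-1)$-dimensional datum $Z$), $\eind[\hat\Hilb,\hat U,\hat\pi,\hat H]=\Xi_Z^*\,\partial_{\mathrm{MV}}[s]$.

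Finally, Corollary~\ref{cor:MVind} asserts precisely $\Xi_X^*[s]=\Xi_Z^*\,\partial_{\mathrm{MV}}[s]$ in $\lu\phi\K^{d,c',\tau'}_G(\pt)$, so combining the two displayed formulas gives $\bind[\Hilb,U,\pi,H]=\eind[\hat\Hilb,\hat U,\hat\pi,\hat H]$, which is the claim. I expect the main obstacle to be the middle step: checking carefully that ``compress by $\pi(\chi_{Y_+})$, then take the sign'' agrees with ``take the sign, then compress'' in the quotient $C^*_{u,G}(Y_+,Z)^\Pi$ — equivalently that $[\pi(\chi_{Y_+}),T]$ is supported near $Z$ for every controlled $T$ and that the induced homomorphism is the $p_1\circ\theta$ of Theorem~\ref{thm:MV} — while keeping track of the canonical identifications $C^*_\Hilb(A\subset X)^\Pi\cong C^*_\Hilb(A)^\Pi$, the naturality of $\partial_{\mathrm{MV}}$ under the inclusions $C^*_{u,G}(\cdot)^\Pi\hookrightarrow C^*_G(\cdot)^\Pi$ used implicitly in Definitions~\ref{defn:bulk} and~\ref{defn:edge}, and the degree/twist shifts produced by Lemma~\ref{lem:cliff}. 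These are routine once the skeleton above is fixed; the one genuinely non-formal ingredient, compatibility of coarse Baum--Connes with coarse Mayer--Vietoris, is already available in Corollary~\ref{cor:MVind} (and, via Remark~\ref{remk:genind} and Lemma~\ref{lem:Morita}, for general $G$).
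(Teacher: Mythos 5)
Your proof is correct and follows essentially the same route as the paper: identify the bulk phase with $[s]=[\sgn H]$ via Proposition~\ref{prp:bulk}, identify the cut Hamiltonian's class with $\partial_{\mathrm{MV}}[s]$ via Theorem~\ref{thm:MV} and Proposition~\ref{prp:edge}, and close with the compatibility of the coarse Baum--Connes isomorphism with the Mayer--Vietoris boundary, which is exactly the content of Corollary~\ref{cor:MVind} (itself proved from the same compatibility fact the paper cites directly via $\Psi_{X,G}$). The paper's one-sentence proof leaves the identification of the cutting procedure with $\partial_{\mathrm{MV}}$ implicit; your argument simply spells that step out, without deviating from the intended argument.
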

\begin{proof}
It follows from the definition of $\Psi_{X,G}$ in Corollary \ref{cor:cBC} and the fact that $\mu _{X,G}$ is compatible with the Mayer-Vietoris exact sequences.
\end{proof}

\begin{exmp}\label{exmp:2dIQHE}
Consider the $2$-dimensional IQHE without translation symmetry. As indicated in Table \ref{table:Kitaev}, the bulk and edge indices take values in $\zahl$. Now we relate them with the Hall conductance.

Let $\mathcal{T}_d^\omega$ be as in (\ref{form:cyc}). In the same way as \cite{MR1295473} and \cite{MR1877916}, we obtain the cohomological presentation for the bulk and edge Hall conductance as
\ma{\sigma_{\perp}^b &:= \frac{e^2}{h}2\pi i \mathcal{T}^\omega (P_- [\nabla_1(P_-), \nabla_2(P_-)])=\frac{e^2}{h} \langle [P_-], \mathcal{T} _2^\omega \rangle \\
\sigma _{\perp}^e &:= -\frac{e^2}{h}\lim _{\Delta \to \{ 0 \}}\frac{1}{|\Delta|} \hat{\mathcal{T}}  (\hat{P}_{\Delta} \nabla _1(\hat{H})) =\frac{e^2}{h}\langle [-\exp (\pi i \hat{H})], \hat{\mathcal{T}} _1^\omega \rangle .}
Here note that $P_- \in \Mop _n(\mathcal{B}_u(X))$ since $P_-$ is obtained by a holomorphic functional calculus of the controlled operator $H$. 
By Lemma \ref{lem:pair}, they coincides with $\bind [\Hilb, U,\pi,H]$ and $\eind [\hat{\Hilb}, \hat{U}, \hat{\pi}, \hat{H}]$ respectively. 
By (\ref{form:MVcyc}), we obtain $\sigma _{\perp}^b=\sigma _{\perp}^e$. 
\end{exmp}

\begin{exmp}\label{exmp:oddchiral}
In the same way as above, the odd cyclic cocycle $\mathcal{T}_d^\omega $ gives a homomorphism $\TP (X; {\typeAIII} ) \to \real$, which is explicitly written by Prodan--Schulz-Baldes \cite{mathph14025002} as
\[ \langle [H], \mathcal{T}_d^\omega \rangle := \frac{i(i\pi)^{(d-1)/2}}{d!!}\sum_{\sigma \in \mathfrak{S}_d} (-1)^{|\sigma|} \mathcal{T}^\omega (U^*(\nabla_{\sigma(1)}U) \cdots U^*(\nabla _{\sigma (d)}U))  \]
where $\Phi:=(1+P)/2$ and $U:=\Phi H\gamma \Phi$ is a unitary on $\Phi \Hilb$. By Lemma \ref{lem:pair}, it coincides with the index pairing $\Xi _X^* [H] \in \zahl$. 
It is also shown in \cite{mathph14025002} and related to the index of limit operators. When $d=1$, the bulk-edge correspondence implies that $\langle [H], \mathcal{T}_1^\omega \rangle$ coincides with the Fredholm index of $\hat{H}_0$ where $\hat{H}=\pmx{0 & \hat{H}_0^* \\ \hat{H}_0 & 0}$ with respect to the $\Zt$-grading given by $P$.
\end{exmp}

\begin{exmp}\label{exmp:2dSQHE}
For a $2$-dimensional Quantum Spin-Hall system (quantum systems with type {\typeAII} symmetry) without any other symmetry, the bulk and edge indices take value in $\KAII _2(\real ) =\KQ_2(\real) \cong \Zt$. 
We show that it coincides with the Kane--Mele $\Zt$-invariant or parity of the Spin-Hall conductance.

Assume that a bulk quantum system $(\Hilb, \pi, \varphi , H)$ has another symmetry, that is, $\Hilb =\Hilb _+ \oplus \Hilb _-$ such that $T\Hilb _{\pm}=\Hilb _{\mp}$ and $H\Hilb _\pm=\Hilb _\pm$ (given by spectral subspaces of the spin operator $s_z$).  
In other words, $[H_+] \in \K _0 (C^*_u(X)) \cong \zahl$, where $H_+:=H|_{\Hilb _+}$, satisfies 
\[\lInd _e^{\mathscr{T}} [H_+]=[H] \in \KAII _0 (C^*_u(X)) \cong \KQ_2(\real) \cong \Zt.\]
By Example \ref{exmp:ind}, we obtain
\[
\bind (\Hilb , \pi ,\varphi , H) = 2\pi i \mathcal{T}^\omega (P_- [\nabla_x P_1, \nabla_2 P_-])  \text{ mod $2$},
\]
which coincides with the definition of Kane-Mele invariant in \cite{MR3063955}.  In the same way, we obtain the edge index
\[
\eind (\hat{\Hilb} , \hat{\pi} ,\hat{\varphi }, \hat{H}) = \lim _{\Delta \to \{ 0 \}}\frac{1}{|\Delta|} \hat{\mathcal{T}}  (\hat{P}_{\Delta} \nabla _y\hat{H}) \text{ mod $2$},
\] 
which is the Spin-Hall conductance introduced by Prodan~\cite{MR2525473} modulo $2$. 
We remark that the $\zahl$-valued index is a topological invariant if the additional symmetry is preserved.
In the same way, we obtain $\Zt$-invariants for $1$-dimensional type {\typeDIII} and $3$-dimensional type {\typeCII} topological insulators with additional symmetry by using with odd cocycles.

By Remark \ref{remk:CTvsR} and Proposition \ref{prp:index}, these invariants are the same thing as the index pairing in \cite{GrossmannSchulz-Baldes} and hence the $\Zt$-invariant in \cite{mathph150805485} as is noted in Remark (ii) of \cite{mathph150805485}.
\end{exmp}

\begin{exmp}
Consider $3$-dimensional type {\typeAII} topological insulators. Set $\Pi;=\zahl ^3$ and $X=|\zahl ^3|$. The bulk indices for $\Pi$-invariant and non-invariant systems take values in
\ma{
\KAII^0(B_\Pi)/\mathrm{Triv} &\cong \KQ^0(B_\Pi, \pt) \cong \KQ^2(\pt)^3 \oplus \KQ ^3(\pt) \cong (\zahl _2 )^3 \oplus \zahl _2, \\
\KAII ^{3}(\pt)&\cong \KQ ^3(\pt ) \cong \zahl _2
}
respectively. Moreover, the restriction map 
\[\Res :\KAII_0(C^*(X)^\Pi) \to \KAII_0(C^*(X))\]
is given by the projection onto $\KQ^3(\pt) \cong \Zt$ since the pairing of $\Xi _X$ with elements $\KQ^2(\pt) ^3$ is zero. 
Therefore, we obtain one ``strong'' (i.e.\ robust when the translation symmetry is broken) and three ``weak'' (i.e.\ not preserved when the translation symmetry is broken) $\Zt$-valued topological invariants as is studied by Moore--Balents \cite{MB2007} and Fu--Kane--Mele \cite{FKM2007}. 

Recall that the bulk and edge index maps are isomorphisms by Lemma \ref{lem:uRoe}.
This mean that the edge (and hence the bulk) index is nontrivial if and only if there is a topologically robust ``topological metal'' surface state in the sense of \cite{FKM2007}. 
Moreover, two $3$-dimensional {\typeAII} systems with the same strong invariant are in the same topological phase when we the translation symmetry is broken.
We remark that a model with nontrivial index is found in Section 3.6.3 of \cite{GrossmannSchulz-Baldes}.
\end{exmp}

\begin{exmp}\label{exmp:cryst}
Consider a topological crystalline insulator with disorder which is symmetric under the action of a point group $G$ coming from a spin representation such as rotations. Then, the bulk and edge indices take values in the group as in (\ref{form:group}).  In particular, when we consider the group $C_k:=\zahl /k\zahl$ acting on $\Euc ^2$ by rotations, the indices take values in the group as in (\ref{form:rot}).
\end{exmp}

\begin{exmp}\label{exmp:refl}
Consider a topological crystalline insulator with reflection symmetry, that is, $\mathscr{R}$ acts on $\Euc ^1 \times \Euc ^{d-1}$ as $(x_1,x_2,\dots, x_n) \mapsto (-x_1, x_2, \dots x_n)$. 
Let $\tau$ be an arbitrary $2$-cocycle of $\mathscr{A} \times \mathscr{R}$ such that $(\phi,c,\tau)$ is a twist of $\mathscr{A} \times \mathscr{R}$.
Then, by Corollary \ref{cor:cBC}, the indices take values in $\lu \phi \K ^{\mathscr{A} \times \mathscr{R}}_{d-1,c,\tau}(\Cl _1)$, which is calculated in Example \ref{exmp:reflect}. 
This conclusion is the same thing as the one indicated in table VI of \cite{PhysRevB.88.125129}. We remark that it is exposed in Section IV.B of \cite{PhysRevB.88.125129} that the case of $PR=-RP$ and $TRP=-RPT$ has two components.
\end{exmp}

\begin{remk}\label{remk:conc}
As a concluding remark, we discuss on the meaning of these indices. 
In Example \ref{exmp:2dIQHE} and Example \ref{exmp:2dSQHE}, they are related to the conductance by the TKNN formula. 
Moreover, by Lemma \ref{lem:uRoe}, the edge index is an isomorphism for 
\begin{itemize}
\item arbitrary types when $d=1$ (i.e.\ the dimension of the edge is $0$),
\item arbitrary types other than type {\typeA} when $d=2$,
\item type {\typeAI}, {\typeAII}, {\typeD}, {\typeC}, {\typeBDI} and {\typeCII} when $d=3$.
\end{itemize} 
In other words, the edge index is trivial if and only if the edge Hamiltonian is homotopic to an invertible operator, that is, does not have topologically robust edge channels.
On the other hand, Prodan~\cite{mathph150103479} shows that weak invariants, which is not captured by our index, are protected in $\TP (G,\phi,c,\tau)$ in the case of type A and {\typeAIII} insulators by using lower cyclic cocycles.

Let us consider a variation of the quantum systems using finite $(X,G)$-modules instead of uniformly finite $(X,G)$-modules. 
Then, the equivalence relation of topological phases is also weakened.
Actually, in the same way as Proposition \ref{prp:bulk} and Proposition \ref{prp:edge}, they are isomorphic to a modification of twisted equivariant $\K$-groups of Roe algebras, that is, the range of the bulk and edge indices. 
In other words, two topological phases have the same index if and only if they are equivalent in this rough classification. 
In particular, the lower cocycle of Prodan~\cite{mathph150103479} are not invariants of this rough equivalence relation. 
\end{remk}

\subsection*{Acknowledgement}
The author would like to thank his supervisor Yasuyuki Kawahigashi for his support and encouragement. This work is motivated by his collaborative research with Mikio Furuta, Shin Hayashi, Motoko Kotani, Shinichiro Matsuo and Koji Sato. He would like to thank them for suggesting the problem and many stimulating conversations. He also would like to thank Jean Bellissard, Peter Bouwknegt, Johannes Kellendonk, Yoshiko Ogata, Mathai Varghese, Emil V. Prodan and Guo Chuan Thiang for their helpful conversation. 
He is greatly indebted to referees for their many helpful advices.  
This work was supported by Research Fellow of the JSPS (No.\ 26-7081) and the Program for Leading Graduate Schools, MEXT, Japan.

\bibliographystyle{alpha}
\bibliography{bibABC,bibDEFG,bibHIJK,bibLMN,bibOPQR,bibSTUV,bibWXYZ,arXiv,bulkedge}
\end{document}